\documentclass{lmcs}

\usepackage[T1]{fontenc}
\usepackage{amssymb}
\usepackage{mathrsfs}
\usepackage{stmaryrd}
\usepackage{tikz-cd}
\usetikzlibrary{positioning,arrows.meta,decorations.pathreplacing,calc,patterns}
\usepackage{booktabs}

\definecolor{obsColor}{HTML}{2D5FAA}
\definecolor{conjColor}{HTML}{3D348B}
\definecolor{mainPosColor}{HTML}{6A4C93}
\definecolor{otherPosColor}{HTML}{9163CB}
\definecolor{negObsColor}{HTML}{C77DBA}
\definecolor{negsColor}{HTML}{E8638B}
\definecolor{namedFill}{HTML}{ffd0ed}
\definecolor{unnFill}{HTML}{e6e2f9}

\newcommand{\eqName}[1]{\mathrm{#1}}
\newcommand{\Msys}{\mathscr{M}}
\newcommand{\Nsys}{\mathscr{N}}
\newcommand{\HS}{\mathscr{H}}
\newcommand{\TC}{\mathscr{C}}
\newcommand{\DG}{\mathscr{G}}
\newcommand{\CT}{\mathscr{T}}
\newcommand{\SL}{\mathscr{L}}

\newcommand{\BG}{\mathscr{R}}
\newcommand{\SE}{\mathscr{E}}
\newcommand{\FK}{\mathscr{F}}
\newcommand{\PT}{\mathscr{P}}
\newcommand{\reach}{\to^{\!*}}
\newcommand{\step}{\Delta}
\newcommand{\reachpred}{\Gamma}
\newcommand{\init}{\mathsf{init}}

\newcommand{\patheq}{\sim}
\newcommand{\bismark}{\approx}
\newcommand{\bimark}{\rightleftarrows}
\newcommand{\Th}[1]{\mathbb{T}_{#1}}
\newcommand{\ET}[1]{\mathcal{E}[\Th{#1}]}

\newcommand{\pe}{\Theta}
\newcommand{\Hom}{\mathrm{Hom}}
\newcommand{\id}{\mathrm{id}}
\newcommand{\Sh}{\mathbf{Sh}}
\newcommand{\HML}{\mathit{HML}}
\newcommand{\ST}{\mathrm{ST}}

\makeatletter
\def\endfront@text{}
\def\enddoc@text{}
\def\@setamsclass{}
\def\@setkeywords{}
\def\@settitlecomment{}

\makeatother

\begin{document}

\title[A Classifying Topos for the Spectrum of Equivalences]{A Classifying Topos for the Spectrum of Equivalences}
\author{Kenan Oggad}
\address{Universit\'{e} Paris-Saclay}
\email{kenan.oggad@universite-paris-saclay.fr}
\keywords{classifying topos, localization, Grothendieck topology,
  linear time--branching time spectrum, bisimulation,
  behavioral equivalence, geometric logic, coframe of subtoposes,
  bi-Heyting algebra}
\subjclass[2020]{18B25, 68Q85, 03G30, 06D22}

\begin{abstract}
What makes two computational systems equivalent?  Topos theory answers
with classifying toposes: a system's semantic content is encoded in the
geometric theory it classifies, and two presentations are equivalent
when their classifying toposes coincide.  Process algebra answers with
the linear time--branching time spectrum of van~Glabbeek: a hierarchy of
behavioral equivalences from trace equivalence to bisimilarity, each
determined by which observations can distinguish processes.  We show that these answers are aspects of a
single structure, and develop the foundations of a topos-geometric
theory of semantic equivalence in which behavioral abstraction is
localization.

Each labeled transition system~$\Msys$ receives a geometric
theory~$\Th{\Msys}$ whose classifying
topos~$\ET{\Msys}$ determines its provable geometric
sequents.  We establish a strict hierarchy where mutual simulation is strictly
coarser than bisimulation, which is strictly coarser than topos
equivalence, and identify the boundary:
diamond-only Hennessy--Milner logic characterizes the
bisimulation-invariant fragment of geometric logic.  This geometric analogue of van~Benthem's theorem is proved
in full generality via bounded tree unraveling.  Explicit Grothendieck topologies yield a strict chain
$J_{\mathrm{bisim}} \subsetneq J_{\mathrm{sim}} \subsetneq
J_{\mathrm{trace}}$, constructive for trace and bisimulation;
a constructive counterexample proves the
naive existence-based observation-class approach provably inadequate for
simulation, motivating Caramello's quotient-theory duality as the natural
alternative; an energy--topology framework extends this embedding to all
13~named equivalences of the spectrum.

Closing the named equivalences under lattice operations yields
exactly 30~elements, including 17~unnamed hybrids absent from the
classical spectrum because the energy-game framework computes but does
not close.  Lattice~$L_{30}$ is directly indecomposable;
its Heyting implication $S \to F = \mathrm{IF}$ identifies impossible
futures as the algebraic mediator of the simulation--failures divide.
A Geometric Closure Theorem computes presheaf Heyting implications by
testing at a single free extension.  All three levels of the hierarchy, the
$L_{30}$ bi-Heyting structure, and the Geometric Closure Theorem are
proved constructively, a result internal to the topos with
no known process-algebraic proof.  Viewed through the classifying topos,
the spectrum is a finite sub-poset of an infinite coframe
whose algebraic operations (meets, implications,
subtractions) produce structure inaccessible from within process
algebra.  All constructions are formalized in Lean~4 with Mathlib.
\end{abstract}

\maketitle

\section{Introduction}
\label{sec:intro}

Topos theory and process algebra offer independent frameworks for when
two computational systems should be considered equivalent: the former
through classifying toposes and their lattices of subtoposes, the latter
through behavioral observation hierarchies culminating in van~Glabbeek's
spectrum.  We show that these perspectives fit within a single
structure---the spectrum embeds as a finite sub-poset of the coframe of
subtoposes of the classifying topos---by constructing explicit
Grothendieck topologies for trace and bisimulation equivalence on a
classifying topos of labeled transition systems and completing the strict
chain $J_{\mathrm{bisim}} \subsetneq J_{\mathrm{sim}} \subsetneq
J_{\mathrm{trace}}$ via Caramello's quotient-theory duality.
Simulation requires this indirect route: trace and bisimulation
coverages operate at the sieve level ($0$-truncated), but simulation
requires only homomorphism existence ($(-1)$-truncated), a
truncation-level mismatch that \emph{provably prevents} naive
existence-based observation-class extensions
(Theorem~\ref{thm:naive-sim-instability}).

An energy--topology framework extends this three-point embedding to all
13~named process equivalences of the van~Glabbeek spectrum.
Assignment $E \mapsto J_E$ is antitone; the induced coframe map is
injective for $|L| \geq 2$ and reveals algebraic structure invisible to
the original spectrum: the coframe meet of two named-equivalence
subtoposes can produce elements with no process-algebraic name
(Theorem~\ref{thm:energy-lt-bridge}).  It connects
Bisping's energy-game hierarchy to Lawvere--Tierney topologies through a
chain: energy budget $\to$ observation class $\to$ covering predicate
$\to$ Grothendieck topology $\to$ subtopos, an instance of Caramello's
bridge technique~\cite{caramello2017theories}:

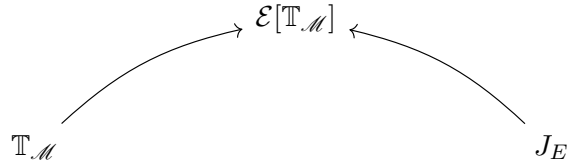
\begin{figure}[ht]
\centering
\begin{tikzcd}[column sep=6em, row sep=2.8em]
  & \ET{\Msys} & \\
  \Th{\Msys}
    \arrow[ur, bend left=15]
  & &
  J_E
    \arrow[ul, bend right=15]
\end{tikzcd}
\caption{The geometric theory~$\Th{\Msys}$ classifies the LTS structure;
each topology~$J_E$ localizes the classifying topos to the subtopos
enforcing the corresponding equivalence.  The middle
topology~$J_{\mathrm{sim}}$ is the first requiring Caramello's
quotient-theory machinery; the 10~intermediate equivalences of the
Energy--LT Bridge depend on it as well
(\S\ref{sec:explicit-topologies}).}
\label{fig:bridge}
\end{figure}

Our topos-theoretic framework does not merely rederive known
process-algebraic results; it provides the algebraic context in which
new questions become natural.  Process algebra is Turing-complete; that
these results \emph{could} in principle be computed without topos theory
is a triviality.  What topos theory provides is the conceptual lens:
the coframe distributive law, the Heyting
implication $S \to F = \mathrm{IF}$, and the 17~unnamed spectrum
elements are consequences of asking ``what is the lattice of all
behavioral abstractions?'', a question that requires the subtopos
coframe to formulate.
Each subtopos of the classifying topos corresponds to a choice of
which geometric properties to retain; the spectrum structure emerges
from the coframe of all such choices.
This contrasts with the closest sheaf-theoretic precedent:
Eberhart, Hirschowitz, and Seiller~\cite{eberhart2017sheaf} construct a
single Grothendieck topology enforcing bisimulation on interaction trees
for the $\pi$-calculus; our approach fixes the site
($\mathrm{f.p.LTS}_L$) and varies the topology across the
\emph{entire} spectrum, producing a lattice of subtoposes rather than a
single quotient.

Two features of the bridge are irreducibly topos-theoretic.  Its
subtopos lattice carries coframe structure: meets distribute over
joins, providing the spectrum with semantic lattice operations it does
not intrinsically possess; while one can close the spectrum under meets
and joins in a product frame, the coframe law is an invariant of the
classifying topos, not of the energy-game representation (indeed, the
lattice closure itself depends on the coordinatization:
Remark~\ref{rem:coord-dependence}).  Furthermore,
the coframe meet of two named equivalences can produce elements not
generated by any single energy-game: the 17~unnamed elements in the
lattice closure have concrete energy 6-tuples yet correspond to no named
process equivalence.

Our construction proceeds through three stages.  First, each system~$\Msys$
receives a geometric theory~$\Th{\Msys}$ whose classifying
topos~$\ET{\Msys}$ determines its provable geometric sequents; explicit
separating pairs---the fork~$\FK$ versus the path~$\PT$, the
hub-spokes~$\HS$ versus the two-cycle~$\TC$---establish a strict
three-level hierarchy from mutual simulation through bisimulation to
topos equivalence
(Theorems~\ref{thm:first-separation}--\ref{thm:second-separation};
\S\ref{sec:hierarchy}).  Hub-spokes' 5-element Lindenbaum
algebra and 8~Grothendieck topologies give a concrete instance of the
full spectrum stratification (\S\ref{sec:hierarchy}).  Second, a geometric van~Benthem theorem
identifies the bisimulation-invariant fragment of geometric logic with
diamond-only Hennessy--Milner logic: the forward direction is proved
constructively (Theorem~\ref{thm:hml-invariant}), and the converse is
established via bounded tree unraveling
(Theorem~\ref{thm:gvb-formalized}), with independent confirmation at
depths~$\leq 2$
(Theorems~\ref{thm:depth0}--\ref{thm:depth2};
\S\ref{sec:hml}).  Third, energy budgets determine Grothendieck
topologies: $J_{\mathrm{bisim}}$ and $J_{\mathrm{trace}}$ receive
constructive covering predicates, a constructive counterexample shows
the observation-class approach provably fails for simulation
(Theorem~\ref{thm:naive-sim-instability}), and $J_{\mathrm{sim}}$
completes the chain via Caramello's duality
(Corollary~\ref{cor:three-topology-chain}).  Energy--LT
extends this to all 13~named equivalences
(Theorem~\ref{thm:energy-lt-bridge}), and lattice closure yields the
30-element spectrum lattice~$L_{30}$, whose bi-Heyting structure is
detailed in
\S\ref{sec:spectrum-lattice}--\S\ref{sec:explicit-topologies}.  At the
presheaf level, the Geometric Closure Theorem reduces Heyting
implication to testing at a single canonical free extension
(Theorem~\ref{thm:geometric-closure};
\S\ref{sec:geometric-closure}), making the coframe operations concretely
computable.

\section{Geometric Theories and Classifying Toposes}
\label{sec:geometric}

Geometric logic is the fragment of first-order logic closed under
$\land$, $\lor$ (including infinitary), $\exists$, and~$\top$; all
theories in this paper are finitary.
This fragment is natural for toposes because geometric morphisms preserve
exactly these connectives~\cite{johnstone2002sketches}.

A \emph{geometric theory}~$\mathbb{T}$ over a language~$\mathcal{L}$
consists of axioms in the form of \emph{geometric sequents}
$\phi \vdash_{\vec{x}} \psi$, where $\phi$ and $\psi$ are geometric
formulas.  A \emph{model} of~$\mathbb{T}$ in a topos~$\mathcal{E}$
interprets the sorts, function symbols, and relation symbols
of~$\mathcal{L}$ as objects and morphisms in~$\mathcal{E}$, satisfying all
axioms of~$\mathbb{T}$.

The \emph{classifying topos} $\mathcal{E}[\mathbb{T}]$ is the universal
topos representing~$\mathbb{T}$: for any Grothendieck topos~$\mathcal{E}$,
there is an equivalence between geometric morphisms
$\mathcal{E} \to \mathcal{E}[\mathbb{T}]$ and models of~$\mathbb{T}$
in~$\mathcal{E}$, so that the study of models in any topos reduces to
the study of geometric morphisms.  It is constructed
as the sheaf topos on the \emph{syntactic
category}~$\mathbf{C}_\mathbb{T}$---whose objects are formulas-in-context
and whose morphisms are provably functional relations---with the
syntactic topology generated by the axioms
of~$\mathbb{T}$~\cite{makkai1977first}.  Two geometric theories over the same language have
equivalent classifying toposes if and only if they prove the same
geometric sequents: the generic model
in~$\mathcal{E}[\mathbb{T}]$ validates exactly the geometric consequences
of~$\mathbb{T}$~\cite[D1.4.3]{johnstone2002sketches}.  This is our
invariant: provable sequents distinguish classifying toposes.

A \emph{bi-interpretation} between geometric theories $\mathbb{T}$ and
$\mathbb{T}'$ consists of an interpretation $I \colon \mathbb{T} \to
\mathbb{T}'$ and an interpretation $J \colon \mathbb{T}' \to \mathbb{T}$
such that the round-trip composites $JI$ and $IJ$ are definably
isomorphic to the respective identities.  If
$\mathbb{T} \bimark \mathbb{T}'$, the Comparison
Lemma~\cite[Theorem~C2.2.3]{johnstone2002sketches} implies
$\mathcal{E}[\mathbb{T}] \simeq \mathcal{E}[\mathbb{T}']$.  Its converse
does not hold: bi-interpretation is strictly stronger than Morita
equivalence~\cite{caramello2017theories}.  In Section~\ref{sec:multiway}, we
will specialize this theory-level notion to rooted transition systems, where it
becomes a condition on simulations.

\begin{defi}[Bisimulation-invariant sequent]
\label{def:bisim-invariant}
A geometric sequent $\phi \vdash_{\vec{x}} \psi$ over a relational
signature is \emph{bisimulation-invariant} if for every relational
bisimulation~$R$ between systems~$M$ and~$N$ (defined precisely in
Definition~\ref{def:rel-bisim}): if
$M \models \phi \vdash_{\vec{x}} \psi$ then
$N \models \phi \vdash_{\vec{x}} \psi$.
\end{defi}

\section{Rooted Transition Systems and the Three-Level Hierarchy}
\label{sec:multiway}
\label{sec:hierarchy}

A \emph{rooted transition system}~$\Msys$ consists of a set~$S_\Msys$ of states, a
transition relation on $S_\Msys$, and an initial
state $\star_\Msys \in S_\Msys$.  A \emph{simulation} $f \colon \Msys \to \Nsys$ is a
function $f \colon S_\Msys \to S_\Nsys$ that preserves transitions ($s \to s'$ in $\Msys$
implies $f(s) \to f(s')$ in $\Nsys$) and initial states ($f(\star_\Msys) = \star_\Nsys$).
Simulations compose and include identities, so rooted transition systems form a
category.

To each rooted transition system~$\Msys$ we associate a geometric theory~$\Th{\Msys}$ over
the \emph{rooted transition language}~$\mathcal{L}_\Msys$ with sort~$S$,
constants~$c_s$ for each state $s \in S_\Msys$ (including a distinguished
constant~$\init$ for the initial state), and binary predicates
$\step$~(transition), $\reachpred$~(reachability), $\pe$~(path
equivalence).  The theory~$\Th{\Msys}$ consists of two components:
\begin{enumerate}
\item \emph{Structural axioms} (6~sequents): $\step(x,y) \vdash \reachpred(x,y)$,
  $\top \vdash \reachpred(x,x)$, transitivity, $\pe$ definition, $\pe$ symmetry,
  and $\pe \vdash \reachpred$.
\item \emph{System-specific axioms}: for each transition $s \to t$ in $\Msys$,
  the existence axiom $\top \vdash \step(c_s,c_t)$; for each state~$s$,
  the completeness axiom $\step(c_s,x) \vdash x = c_{t_1} \lor \cdots \lor x = c_{t_n}$
  where $\{t_1, \ldots, t_n\}$ is the set of successors of~$s$
  (when $s$ has no successors, the disjunction is empty, yielding
  $\step(c_s,x) \vdash \bot$); negative axioms
  $\reachpred(c_s,c_t) \vdash \bot$ for each non-reachable pair and
  $\pe(c_s,c_t) \vdash \bot$ for each non-path-equivalent pair
  (pinning down all three predicates from above, not only from below);
  and the domain closure axiom
  $\top \vdash x = c_{s_1} \lor \cdots \lor x = c_{s_k}$
  where $\{s_1,\ldots,s_k\} = S_\Msys$.
\end{enumerate}
These are all geometric sequents.  The theory~$\Th{\Msys}$ records the
structural properties of~$\Msys$ expressible in geometric logic, including
branching degree, local determinism, and confluence patterns.  The
classifying topos~$\ET{\Msys}$ is the classifying topos of~$\Th{\Msys}$.
For finite-state systems, $\Th{\Msys}$ is finite; the construction is effective.

\begin{lem}[Soundness and Completeness]
\label{lem:soundness-completeness}
Geometric provability from~$\Th{\Msys}$ coincides with semantic validity.
\begin{enumerate}
\item \emph{Soundness.}  If a geometric sequent~$s$ is provable from a
  geometric theory~$\mathbb{T}$, then every model of~$\mathbb{T}$ satisfies~$s$.
\item \emph{Completeness.}  A geometric sequent~$s$ is provable
  from~$\Th{\Msys}$ if and only if the canonical model satisfies~$s$.
\end{enumerate}
\end{lem}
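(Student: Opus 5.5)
Soundness is the standard inductive argument over derivations in the geometric sequent calculus of \cite[D1.3]{johnstone2002sketches}; completeness will come from showing that $\Th{\Msys}$ is essentially a complete theory with a unique model. For soundness, fix a model $M$ of $\mathbb{T}$ in a topos $\mathcal{E}$ and check that every rule (identity, cut, substitution, equality, the $\land$, $\lor$ and $\exists$ rules, distributivity, Frobenius) preserves validity of sequents in $M$. Each check reduces to the fact that interpretations of geometric formulas are subobjects, and that the relevant lattice operations (finite meets, joins as images of coproducts, $\exists$ as left adjoint to pullback) behave as the rules require in any topos. Axioms hold by assumption, so induction on the length of a derivation finishes this part.

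For completeness, the direction ``provable $\Rightarrow$ canonical model satisfies'' is soundness applied to the canonical model, namely $S_\Msys$ in $\mathbf{Set}$ with constants interpreted as states and $\step,\reachpred,\pe$ interpreted as in $\Msys$. One has to check this is a model: the existence, completeness, negative and domain-closure axioms hold by construction, and the six structural axioms hold because $\reachpred$ is the reflexive--transitive closure of $\step$ and $\pe$ is, by definition, a symmetric subrelation of reachability.

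The converse is the main step. Suppose the canonical model satisfies $\phi \vdash_{\vec{x}} \psi$. I would first prove \emph{syntactically} from $\Th{\Msys}$ that every atomic formula is equivalent to its truth value on constants. Domain closure gives $\top \vdash_{\vec x} \bigvee_{\vec s} (\vec x = \vec c_{\vec s})$, a finite disjunction over tuples of states. This reduces the task to proving, for each tuple $\vec s$, the closed sequent $\phi[\vec c_{\vec s}] \vdash \psi[\vec c_{\vec s}]$, and then combining the cases with $\lor$-elimination. For closed atomic formulas, $\step(c_s,c_t)$ is either an axiom or, when $t$ is not a successor of $s$, refutable: the completeness axiom gives $\step(c_s,c_t)\vdash \bigvee_i c_t=c_{t_i}$, and each disjunct must be refuted. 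This needs \emph{distinctness of constants}, $c_s=c_t\vdash\bot$ for $s\neq t$, which is not listed explicitly. I expect to derive it from the negative axioms. If $s\neq t$ and, say, $t$ is not reachable from $s$, then $c_s=c_t$ together with $\top\vdash\reachpred(c_s,c_s)$ yields $\reachpred(c_s,c_t)$, and hence $\bot$. If $s$ and $t$ are mutually reachable, I would try instead the pair-of-$\pe$ axioms, or else regard distinctness as implicit in the canonical-model convention, which may require noting that constants are intended to name distinct states. This is the step I expect to be delicate.

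Once closed atoms are decided (reachability and $\pe$ atoms likewise, via the negative axioms and closure derivations from $\step$), induction on geometric formulas shows that each closed $\phi[\vec c]$ is provably equivalent to $\top$ or $\bot$, matching its truth in the canonical model. For $\exists y$, domain closure again splits the quantifier into a finite disjunction over constants. Hence if $\phi[\vec c]$ is true in the canonical model then so is $\psi[\vec c]$, and both are provably $\top$. If $\phi[\vec c]$ is false, it is provably $\bot$. In either case the sequent is derivable, which completes the proof.
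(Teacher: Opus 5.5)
The step you flag as delicate is where the argument breaks, and it cannot be repaired from the listed axioms. Your derivation of $c_s=c_t\vdash\bot$ is correct when one of $s\reach t$ or $t\reach s$ fails. In the remaining case, however, the fallback via the $\pe$ axioms has nothing to work with. If $s$ and $t$ are mutually reachable, then $s\patheq t$ by the Lifting Lemma, so no negative $\pe$ axiom, and no negative $\reachpred$ axiom, mentions the pair.

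In fact distinctness is not provable at all. Take $\TC$, which has no negative axioms whatsoever. Consider the one-element structure $\{e\}$ with $c_x=c_y=e$ and $\step=\reachpred=\pe=\{(e,e)\}$; this is just the canonical model of $\SL$ with both constants sent to $a$. It satisfies the structural axioms, both existence axioms, both completeness axioms, and domain closure. By soundness, the geometric sequents $c_x=c_y\vdash\bot$ and $\step(z,z)\vdash_z\bot$ are therefore not provable from $\Th{\TC}$, even though the canonical model satisfies both. The completeness clause is false as stated whenever a strongly connected component admits such a collapse; $\Th{\HS}$ likewise has a model identifying $c_b$ with $c_c$. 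So ``regarding distinctness as implicit'' is not a proof step but a change of theory.

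The fix is to add the coherent axioms $c_s=c_t\vdash\bot$ for $s\neq t$ to $\Th{\Msys}$. With them, your reduction to closed atoms goes through. You should also make explicit that the unspecified ``$\pe$ definition'' axiom yields $\reachpred(x,y)\land\reachpred(y,x)\vdash\pe(x,y)$, since this is needed to prove the true $\pe$-atoms.

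The paper's own proof has the same hole. It argues semantically: every Set-model is isomorphic to the canonical one, and Deligne's theorem supplies enough points. But domain closure only makes the interpretation of the constants surjective, not injective, and the structure above is a non-isomorphic Set-model of $\Th{\TC}$.

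Once distinctness is added, the two routes are genuinely different. The paper's is short but rests on Deligne's theorem, and hence on coherence (finitely many states). Yours is an elementary, constructive elimination of variables in favour of constants. Run in infinitary geometric logic, it also covers infinite-state systems.
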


\begin{proof}
Soundness is by structural induction on the derivation: axioms hold by
hypothesis, cut composes, and the geometric connectives ($\land$, $\lor$,
$\exists$) preserve satisfaction.
For completeness, the generic model in~$\ET{\Msys}$
validates exactly the provable geometric
sequents~\cite[Cor.~2.1.12]{caramello2017theories}.
The system-specific axioms determine any Set-model of~$\Th{\Msys}$ up
to isomorphism: domain closure forces every element to equal
some~$c_s$; the existence and completeness axioms pin down~$\step$;
and the negative axioms pin down~$\reachpred$ and~$\pe$.
Since~$\Th{\Msys}$ is coherent (all disjunctions are finite for
finite-state systems), its classifying topos has enough
Set-points by Deligne's theorem~\cite[D3.3.13]{johnstone2002sketches}.
Thus: the canonical model satisfies~$s$ $\Rightarrow$
every Set-model satisfies~$s$ (uniqueness) $\Rightarrow$
the generic model satisfies~$s$ (enough points) $\Rightarrow$
$s$~is provable.  The converse is soundness applied to the canonical model.
\end{proof}

\noindent
The separation proofs in
\S\S\ref{sec:first-sep}--\ref{sec:second-sep} and
\S\ref{sec:mechanisms} rely on four \emph{semantic bridge lemmas}
that characterize when the separating sequents hold:
$\sigma_{\mathrm{tot}}$ holds iff every state has a successor,
$\sigma_{\mathrm{det}}$ holds iff the transition relation is
functional,
$\sigma_{\mathrm{conf}}$ holds iff every pair of successors has a
common successor, and
$\sigma_{\mathrm{loop}}$ holds iff every state has a self-loop.

\begin{rem}[Size of the theory]
\label{rem:full-theory}
For a finite system with $|S_\Msys|$~states, $\Th{\Msys}$ is finite:
6~structural axioms, at most $|S_\Msys|^2$ existence axioms,
$|S_\Msys|$~completeness axioms, $O(|S_\Msys|^2)$ negative axioms
for~$\reachpred$ and~$\pe$, and one domain closure
axiom, for $O(|S_\Msys|^2)$ sequents in total.  For infinite systems the
completeness and domain closure axioms may involve infinite
disjunctions (geometric but no longer coherent); the classifying topos
exists regardless~\cite[D1.4]{johnstone2002sketches}.  All witness
systems in this paper are finite.
\end{rem}

\begin{defi}[Path equivalence]
\label{sec:path-equiv}
Two states $s, t \in S_\Msys$ are \emph{path-equivalent}, written $s \patheq t$,
if they reach exactly the same states and are reached from exactly the same
states: $s \patheq t$ iff for all $u \in S_\Msys$,
$s \reach u \Leftrightarrow t \reach u$ and
$u \reach s \Leftrightarrow u \reach t$.
\end{defi}

\noindent
Path equivalence is an equivalence relation.  The key property is that
it is implied by mutual reachability.

\begin{lem}[Lifting Lemma]
\label{lem:lifting}
In any rooted transition system~$\Msys$:
\[
\boxed{\quad s \reach t \;\text{and}\; t \reach s
  \quad\Longrightarrow\quad s \patheq t \quad}
\]
\end{lem}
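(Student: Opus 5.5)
The plan is to unfold Definition~\ref{sec:path-equiv} and reduce both conditions to transitivity of the reachability relation $\reach$. This relation is the reflexive--transitive closure of the one-step transition relation of~$\Msys$. Transitivity is its only property we need, and it is immediate: concatenating a finite path from $a$ to $b$ with a finite path from $b$ to $c$ gives a finite path from $a$ to $c$. If one prefers to stay inside the theory, this is the structural transitivity axiom for~$\reachpred$ in~$\Th{\Msys}$.

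Fix $s, t$ with $s \reach t$ and $t \reach s$, and let $u \in S_\Msys$ be arbitrary. We must check two biconditionals.

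\emph{Forward reachability}, $s \reach u \Leftrightarrow t \reach u$. If $s \reach u$, then from $t \reach s$ and transitivity we get $t \reach u$. Conversely, if $t \reach u$, then from $s \reach t$ we get $s \reach u$.

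\emph{Backward reachability}, $u \reach s \Leftrightarrow u \reach t$. If $u \reach s$, then composing with $s \reach t$ gives $u \reach t$. If $u \reach t$, then composing with $t \reach s$ gives $u \reach s$.

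Since $u$ was arbitrary, both clauses of path equivalence hold, so $s \patheq t$. The hypothesis is symmetric in $s$ and $t$, so each of the four implications is really the same argument with the roles of $s$ and $t$ swapped; this halves the bookkeeping.

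There is no genuine obstacle here. The only point needing care is to confirm that $\reach$ denotes the reflexive--transitive closure rather than the one-step relation, since transitivity fails for single steps. Reflexivity is not even needed, and neither the initial state nor any finiteness assumption plays a role. The argument is therefore constructive and holds for arbitrary, possibly infinite, rooted transition systems.
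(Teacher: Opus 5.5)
Your proposal is correct and follows the paper's own proof: unfold path equivalence and obtain each of the four implications from transitivity of reachability, composing with $t \reach s$ or $s \reach t$ on the appropriate side. Your remarks that reflexivity, the initial state, and finiteness play no role are also accurate.
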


\begin{proof}
Suppose $s \reach t$ and $t \reach s$.  We must show that $s$ and $t$
reach exactly the same states and are reached from exactly the same states.
If $s \reach u$, then $t \reach s \reach u$, so $t \reach u$.  If
$t \reach u$, then $s \reach t \reach u$, so $s \reach u$.  If
$u \reach s$, then $u \reach s \reach t$, so $u \reach t$.  If
$u \reach t$, then $u \reach t \reach s$, so $u \reach s$.  Thus
$s \patheq t$.
\end{proof}

\begin{defi}[Relational bisimulation]
\label{def:rel-bisim}
A \emph{relational bisimulation} between rooted transition systems~$\Msys$ and~$\Nsys$ is a
relation $R \subseteq S_\Msys \times S_\Nsys$ satisfying the back-and-forth lifting
property: if $(s, t) \in R$ and $s \to s'$ in $\Msys$, then there exists~$t'$ with
$t \to t'$ in $\Nsys$ and $(s', t') \in R$ (\emph{forth}); and if $(s, t) \in R$ and
$t \to t'$ in $\Nsys$, then there exists~$s'$ with $s \to s'$ in $\Msys$ and
$(s', t') \in R$ (\emph{back}).  The systems $\Msys$ and $\Nsys$ are \emph{bisimilar},
written $\Msys \bismark \Nsys$, if there exists a relational bisimulation~$R$ with
$(\star_\Msys, \star_\Nsys) \in R$.
\end{defi}

We also introduce a stronger, functorial notion that connects to
bi-interpretation.

\begin{defi}[Functional bisimulation]
\label{def:func-bisim}
A \emph{functional bisimulation} between rooted transition systems~$\Msys$ and~$\Nsys$ is a
pair of simulations $f \colon \Msys \to \Nsys$, $g \colon \Nsys \to \Msys$ with
$gf(s) \patheq_\Msys s$ for all $s \in S_\Msys$ and $fg(t) \patheq_\Nsys t$ for all
$t \in S_\Nsys$.
\end{defi}

\noindent
In the coalgebraic literature, ``functional bisimulation'' sometimes
denotes a single function whose graph is a relational bisimulation,
i.e.\ a coalgebra homomorphism~\cite{rutten2000universal}.  Our usage
is distinct: we require a \emph{pair} of simulations with round-trip
path-equivalence coherence, capturing bi-interpretation rather than
one-sided functional witnessing.

Every functional bisimulation implies relational bisimilarity.  The converse
does not hold: relational bisimilarity does not in general imply the
existence of a functional bisimulation (see
Remark~\ref{rem:func-vs-rel}).  Throughout this paper, ``bisimulation''
refers to the relational notion of Park and
Milner~\cite{park1981,milner1989} (Definition~\ref{def:rel-bisim})
unless explicitly qualified as ``functional.''

\begin{rem}[Functional vs.\ relational bisimulation]
\label{rem:func-vs-rel}
Functional bisimulation is strictly stronger than relational bisimulation,
even for image-finite systems.  A relational bisimulation can relate one
state in~$\Msys$ to many states in~$\Nsys$ without requiring any functional
coherence.  For example, a system with a single self-looping state is
relationally bisimilar to any total system (every state has a successor) via
the total relation, since the back-and-forth conditions are trivially met
when one side can always respond with its self-loop.  However, no functional
bisimulation need exist between such systems if their path-equivalence
quotients differ.
\end{rem}

The \emph{self-loop system}~$\SL$ has a single state~$a$ with
transition $a \to a$ and initial state~$a$.  The \emph{two-cycle}~$\TC$
has states $\{x, y\}$ with transitions $x \to y$, $y \to x$ and initial
state~$x$ (defined in full in \S\ref{sec:second-sep}).

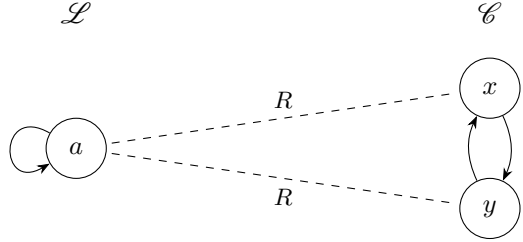
\begin{figure}[ht]
\centering
\begin{tikzpicture}[
    >=Stealth,
    state/.style={circle, draw, minimum size=8mm, inner sep=1pt, font=\small},
  ]
  \node at (-2.5, 1.8) {$\SL$};
  \node[state] (a) at (-2.5, 0) {$a$};
  \draw[->] (a) to [out=150, in=210, looseness=5] (a);

  \node at (3.0, 1.8) {$\TC$};
  \node[state] (x) at (3.0, 0.8) {$x$};
  \node[state] (y) at (3.0, -0.8) {$y$};
  \draw[->, bend left=25] (x) to (y);
  \draw[->, bend left=25] (y) to (x);

  \draw[dashed, shorten >=2pt, shorten <=2pt]
    (a) to node[above, font=\footnotesize] {$R$} (x);
  \draw[dashed, shorten >=2pt, shorten <=2pt]
    (a) to node[below, font=\footnotesize] {$R$} (y);
\end{tikzpicture}
\caption{The self-loop~$\SL$ and two-cycle~$\TC$.  The dashed lines
show the relational bisimulation $R = \{(a,x),(a,y)\}$.  No simulation
$\SL \to \TC$ exists: $\TC$~has no self-loops.}
\label{fig:sl-tc}
\end{figure}

\begin{prop}[Functional vs.\ relational separation]
\label{prop:four-level}
Functional bisimulation is strictly stronger than relational bisimulation.
The systems $\SL$ and~$\TC$ are relationally bisimilar
(via $R = \{(a,x),(a,y)\}$) but not functionally bisimilar.
\end{prop}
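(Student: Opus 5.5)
The proof has two independent parts, following the two claims of Proposition~\ref{prop:four-level}. The general implication ``functional bisimulation $\Rightarrow$ relational bisimulation'' is asserted just before Remark~\ref{rem:func-vs-rel}. So strictness reduces to exhibiting one relationally bisimilar pair that is not functionally bisimilar, and the pair will be $\SL$ and~$\TC$.

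\emph{Relational bisimilarity.} We verify directly that $R=\{(a,x),(a,y)\}$ satisfies Definition~\ref{def:rel-bisim} and contains $(\star_\SL,\star_\TC)=(a,x)$.
\begin{itemize}
\item \emph{Forth.} The only transition in $\SL$ is $a\to a$. From $(a,x)$ we answer with $x\to y$, and $(a,y)\in R$. From $(a,y)$ we answer with $y\to x$, and $(a,x)\in R$.
\item \emph{Back.} From $(a,x)$, the transition $x\to y$ is matched by $a\to a$, since $(a,y)\in R$. From $(a,y)$, the transition $y\to x$ is matched by $a\to a$, since $(a,x)\in R$.
\end{itemize}
This is a finite case check.

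\emph{No functional bisimulation.} A functional bisimulation (Definition~\ref{def:func-bisim}) requires, in particular, a simulation $f\colon\SL\to\TC$. Any such $f$ must preserve the initial state, so $f(a)=x$. It must also preserve the transition $a\to a$, which forces $f(a)\to f(a)$, that is, $x\to x$ in $\TC$. But the only transitions of $\TC$ are $x\to y$ and $y\to x$, so $x\to x$ fails. Hence no simulation $\SL\to\TC$ exists, and therefore no functional bisimulation exists. In fact the argument never uses the round-trip path-equivalence conditions: the absence of self-loops in $\TC$ already blocks the first component.

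There is no real obstacle here. The only point needing care is to read ``simulation'' as in \S\ref{sec:multiway}: a \emph{function} that preserves transitions and the initial state. Under that reading the missing self-loop immediately rules out any $f$. One could add that a simulation in the other direction, $g\colon\TC\to\SL$ (sending every state to~$a$), does exist, which shows the asymmetry, but this is not needed for the proposition.
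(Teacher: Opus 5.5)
Your treatment of the pair $\SL,\TC$ is correct and follows the paper's argument. You check forth and back for $R=\{(a,x),(a,y)\}$ explicitly, and you observe that any simulation $\SL\to\TC$ would need a self-loop in $\TC$. You pin $f(a)=x$ via the root, while the paper just notes that $\TC$ has no self-loops at all; either works.

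The gap is the sentence ``strictness reduces to exhibiting one relationally bisimilar pair that is not functionally bisimilar.'' That reduction needs the implication functional bisimulation $\Rightarrow$ relational bisimilarity. You import it from the unproved assertion before Remark~\ref{rem:func-vs-rel}; the paper's proof does not supply it either, and it is false. The coherence condition $fg(t)\patheq t$ only places $fg(t)$ in the same strongly connected component as $t$, and it is blind to the local branching inside that component.

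Here is a counterexample. Let $\Msys$ have states $p,q,d$, transitions $p\to q$, $q\to p$, $q\to d$, and root $p$. Let $\Nsys$ be $\Msys$ plus a state $r$ with $p\to r$ and $r\to p$.
\begin{itemize}
\item The inclusion $f\colon\Msys\to\Nsys$ and the map $g\colon\Nsys\to\Msys$ fixing $p,q,d$ with $g(r)=q$ are both simulations.
\item $gf=\id$, and $fg(r)=q\patheq_\Nsys r$ by Lemma~\ref{lem:lifting}, since $q\to p\to r$ and $r\to p\to q$. So $(f,g)$ is a functional bisimulation.
\item Yet $\Msys\not\bismark\Nsys$. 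A bisimulation containing $(p,p)$ must contain $(q,r)$ by back on $p\to r$, since $q$ is the only successor of $p$ in $\Msys$. It must then contain $(d,p)$ by forth on $q\to d$, since $p$ is the only successor of $r$. Back on $p\to q$ then fails at $(d,p)$, because $d$ halts.
\end{itemize}
So the two notions are incomparable, and the first sentence of the proposition cannot be proved as stated. What your argument (like the paper's) actually establishes is only that relational bisimilarity does not imply functional bisimulation.
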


\begin{proof}
\emph{Relational bisimulation.}\enspace
The relation $R = \{(a,x),(a,y)\}$ satisfies back-and-forth.
At $(a,x)$: the transition $a \to a$ in~$\SL$ is matched by $x \to y$
in~$\TC$ with $(a,y) \in R$.  At $(a,y)$: the transition $a \to a$ is
matched by $y \to x$ with $(a,x) \in R$.  Back conditions are
symmetric.

\emph{No functional bisimulation.}\enspace
A functional bisimulation requires a simulation $f \colon \SL \to \TC$.
The self-loop $a \to a$ must map to $f(a) \to f(a)$ in~$\TC$.  But
$\TC$~has no self-loops: $x$ has unique successor~$y \neq x$, and $y$
has unique successor~$x \neq y$.  No simulation
$\SL \to \TC$ exists, hence no functional bisimulation.
\end{proof}

\noindent
This is a refinement of the three-level hierarchy: together with the First
and Second Separation theorems below, it yields four strict levels:
\[
\text{mutual simulation}
\;\supsetneq\; \text{relational bisimulation}
\;\supsetneq\; \text{functional bisimulation}
\;\supsetneq\; \text{topos equivalence.}
\]
The main narrative treats ``bisimulation'' as the relational notion
(per the convention above), giving three principal levels; the
functional--relational distinction is secondary.

\subsection{Bi-interpretation of Systems}
\label{sec:bi-interp}

A \emph{bi-interpretation of rooted transition systems}~$\Msys$ and~$\Nsys$, written
$\Msys \bimark \Nsys$, is a pair of simulations $f \colon \Msys \to \Nsys$,
$g \colon \Nsys \to \Msys$ satisfying the weaker coherence condition that $gf(s)$
and $s$ are \emph{mutually reachable} ($gf(s) \reach s$ and
$s \reach gf(s)$) for all $s \in S_\Msys$, and $fg(t)$ and $t$ are mutually
reachable for all $t \in S_\Nsys$.

Despite the shared name, a bi-interpretation of systems does \emph{not}
induce a bi-interpretation of the associated geometric theories.  A
simulation $f \colon \Msys \to \Nsys$ preserves all geometric sequents in the
bisimulation-invariant fragment (the tree-shaped, equality-free formulas
corresponding to~$\HML$), but need not preserve arbitrary geometric sequents
of~$\Th{\Msys}$; for instance, $f$ can map a deterministic system into a
non-deterministic one, losing the sequent~$\sigma_{\mathrm{det}}$.  The gap
between system-level bi-interpretation and theory-level bi-interpretation is
exactly the content of the Second Separation
(Theorem~\ref{thm:second-separation}): bisimilar (hence mutually
bi-interpretable, by Theorem~\ref{thm:correspondence}) systems can have
non-equivalent classifying toposes.

\begin{thm}[Functional Bisimulation--Bi-Interpretation Correspondence]
\label{thm:correspondence}
For all rooted transition systems~$\Msys$ and~$\Nsys$, there exists a functional
bisimulation between~$\Msys$ and~$\Nsys$ if and only if there exists a
bi-interpretation between them:
\[
\boxed{\quad \Msys \text{ and } \Nsys \text{ are functionally bisimilar}
  \quad\Longleftrightarrow\quad
  \Msys \bimark \Nsys \quad}
\]
\end{thm}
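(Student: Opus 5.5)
The plan is to prove the two implications separately. Both definitions ask for the same data: a pair of simulations $f \colon \Msys \to \Nsys$ and $g \colon \Nsys \to \Msys$. They differ only in the coherence condition on the round trips $gf$ and $fg$. A functional bisimulation requires $gf(s) \patheq_\Msys s$ and $fg(t) \patheq_\Nsys t$. A bi-interpretation $\Msys \bimark \Nsys$ requires mutual reachability instead. So the theorem reduces to comparing these two conditions on the same pair $(f,g)$. I will keep the witnessing pair fixed and show that, for the relevant round-trip states, each condition implies the other.

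\emph{Bi-interpretation $\Rightarrow$ functional bisimulation.} Take the given pair $(f,g)$. For each $s \in S_\Msys$ we have $gf(s) \reach s$ and $s \reach gf(s)$. The Lifting Lemma (Lemma~\ref{lem:lifting}) then gives $gf(s) \patheq_\Msys s$ directly. The same argument applied to $fg(t)$ and $t$ in $\Nsys$ gives $fg(t) \patheq_\Nsys t$. Hence the same $(f,g)$ is a functional bisimulation.

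\emph{Functional bisimulation $\Rightarrow$ bi-interpretation.} This is the direction I expect to be the main obstacle, because path equivalence does not imply mutual reachability in general. Two distinct dead-end states reached from the same predecessors are path-equivalent but not mutually reachable. My first step is to use reflexivity of $\reach$. Since $s \reach s$, the definition of $s \patheq gf(s)$ yields $gf(s) \reach s$. In the other direction, $gf(s) \reach gf(s)$ gives $s \reach gf(s)$. Both instances come from the ``reach the same states'' clause with $u = s$ and $u = gf(s)$. Applying the same argument to $fg(t)$ and $t$ in $\Nsys$ shows that the original pair is a bi-interpretation.

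This argument in fact shows that path equivalence and mutual reachability coincide once reachability is reflexive. That contradicts my dead-end example, so the example must be checked against the convention $\top \vdash \reachpred(x,x)$. The check confirms that $\reach$ is the reflexive–transitive closure, so each dead end reaches itself but not the other, and the example fails to be path-equivalent. So the obstacle dissolves: the two coherence conditions are equivalent pointwise, and the witnessing pair transfers unchanged in both directions. In writing up, I will state this pointwise equivalence as a short auxiliary observation and then derive both implications of Theorem~\ref{thm:correspondence} from it.
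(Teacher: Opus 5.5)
Your proof is correct and follows essentially the paper's argument. The Lifting Lemma gives mutual reachability $\Rightarrow$ path equivalence, and reflexivity of $\reach$ (instantiating the forward-reachability clause of path equivalence at $u = s$ and $u = gf(s)$) gives the converse, with the same pair $(f,g)$ witnessing both notions. The dead-end ``obstacle'' was never real, since, as you noticed, two distinct dead ends are not path-equivalent because each reaches itself. Simply omit that detour in the write-up.
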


\begin{proof}
Forward, path equivalence implies mutual
reachability, since if $s \patheq t$ then in particular $s \reach t$ and
$t \reach s$ (take $u = t$ and $u = s$ respectively in the definition of
path equivalence; more precisely, $s \reach s$ holds trivially, and
$s \patheq t$ gives $t \reach s$, and symmetrically $s \reach t$).
Therefore any functional bisimulation is a bi-interpretation.

Conversely, the Lifting Lemma applies.  Suppose
$(f, g) \colon \Msys \bimark \Nsys$ is a bi-interpretation, so that $gf(s) \reach s$
and $s \reach gf(s)$ for all $s \in S_\Msys$, and $fg(t) \reach t$ and
$t \reach fg(t)$ for all $t \in S_\Nsys$.  By Lemma~\ref{lem:lifting}, mutual
reachability implies path equivalence: $gf(s) \patheq_\Msys s$ for all $s$ and
$fg(t) \patheq_\Nsys t$ for all $t$.  Thus $(f, g)$ is a functional bisimulation.
\end{proof}

\medskip
With the framework in place, we establish the strict three-level hierarchy
by exhibiting witnesses for each separation and proving a bridge theorem
relating the two finer levels.

\subsection{First Separation: Mutual Simulation $\supsetneq$ Bisimulation}
\label{sec:first-sep}

The \emph{fork}~$\FK$ has states $\{a, b, c\}$, transitions $a \to b$,
$a \to c$, and $b \to b$, and initial state~$a$.  The initial state~$a$
branches: one branch leads to a \emph{halting} state~$c$ (no outgoing
transitions), and the other to a self-looping state~$b$.
The \emph{path}~$\PT$ has states $\{x, y\}$, transitions $x \to y$ and
$y \to y$, and initial state~$x$.  It is a single step followed by a
self-loop.

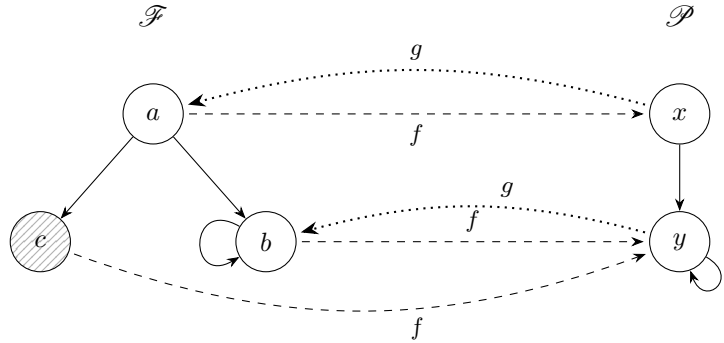
\begin{figure}[ht]
\centering
\begin{tikzpicture}[
    >=Stealth,
    state/.style={circle, draw, minimum size=8mm, inner sep=1pt, font=\small},
    dead/.style={circle, draw, minimum size=8mm, inner sep=1pt,
                 font=\small, pattern=north east lines, pattern color=black!30},
  ]
  \node at (-3.5, 1.8) {$\FK$};
  \node[state] (a) at (-3.5, 0.5) {$a$};
  \node[dead]  (c) at (-5.0, -1.2) {$c$};
  \node[state] (b) at (-2.0, -1.2) {$b$};

  \draw[->] (a) -- (c);
  \draw[->] (a) -- (b);
  \draw[->] (b) to [out=150, in=210, looseness=5] (b);

  \node at (3.5, 1.8) {$\PT$};
  \node[state] (x) at (3.5, 0.5) {$x$};
  \node[state] (y) at (3.5, -1.2) {$y$};

  \draw[->] (x) -- (y);
  \draw[->] (y) to [out=330, in=290, looseness=5] (y);

  \draw[->, dashed, shorten >=2pt, shorten <=2pt]
    (a) to node[below, font=\footnotesize] {$f$} (x);
  \draw[->, dashed, shorten >=2pt, shorten <=2pt]
    (b) to node[above, font=\footnotesize, sloped] {$f$} (y);
  \draw[->, dashed, shorten >=2pt, shorten <=2pt, bend right=20]
    (c) to node[below, font=\footnotesize, sloped, pos=0.6] {$f$} (y);

  \draw[->, dotted, thick, shorten >=2pt, shorten <=2pt, bend right=15]
    (x) to node[above, font=\footnotesize, pos=0.5] {$g$} (a);
  \draw[->, dotted, thick, shorten >=2pt, shorten <=2pt, bend right=15]
    (y) to node[above, font=\footnotesize, sloped, pos=0.4] {$g$} (b);
\end{tikzpicture}
\caption{The fork~$\FK$ and path~$\PT$ with mutual simulations.
Dashed arrows: $f(a)=x$, $f(b)=f(c)=y$.
Dotted arrows: $g(x)=a$, $g(y)=b$.
The halting state~$c$ (hatched) has no preimage under~$g$.}
\label{fig:fk-pt}
\end{figure}

\begin{thm}[First Separation]
\label{thm:first-separation}
The systems $\FK$ and $\PT$ mutually simulate.  They are not bisimilar, and
their classifying toposes are not equivalent:
\[
\boxed{\quad \Hom(\FK, \PT) \neq \varnothing,\;
  \Hom(\PT, \FK) \neq \varnothing,\quad
  \FK \not\bismark \PT,\quad
  \ET{\FK} \not\simeq \ET{\PT} \quad}
\]
\end{thm}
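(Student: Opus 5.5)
The statement splits into four claims, and I would verify them in the order stated, since each uses only the concrete three- and two-state witnesses.

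\emph{Mutual simulation.} I take the maps in Figure~\ref{fig:fk-pt}. The map $f(a)=x$, $f(b)=f(c)=y$ sends the transitions $a\to b$, $a\to c$, $b\to b$ to $x\to y$, $x\to y$, $y\to y$, and it sends $\star_\FK=a$ to $\star_\PT=x$, so $f\in\Hom(\FK,\PT)$. The map $g(x)=a$, $g(y)=b$ sends $x\to y$ to $a\to b$ and $y\to y$ to $b\to b$, and preserves roots, so $g\in\Hom(\PT,\FK)$. Both checks are finite and routine.

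\emph{Non-bisimilarity.} I argue by contradiction. Suppose $R$ is a relational bisimulation with $(a,x)\in R$ (Definition~\ref{def:rel-bisim}). The forth condition applied to $a\to c$ gives some $t'$ with $x\to t'$ and $(c,t')\in R$. The only successor of $x$ is $y$, so $(c,y)\in R$. Now the back condition applied to $y\to y$ demands a successor of $c$, but $c$ is halting. This contradiction shows $\FK\not\bismark\PT$. I could equally phrase it as the HML formula $\Diamond\neg\Diamond\top$ distinguishing the roots, but the direct argument avoids needing negation.

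\emph{Topos inequivalence.} Following the invariant fixed in Section~\ref{sec:geometric}, it suffices to find a geometric sequent provable in one theory and not the other. I would use $\sigma_{\mathrm{tot}}\colon \top\vdash_x \exists y\,\step(x,y)$. In the canonical model of $\PT$ every state has a successor ($x\to y$, $y\to y$), so by the semantic bridge lemma for $\sigma_{\mathrm{tot}}$ and the completeness direction of Lemma~\ref{lem:soundness-completeness}, $\Th{\PT}\vdash\sigma_{\mathrm{tot}}$. In $\FK$ the state $c$ has no successor, so the canonical model of $\FK$ fails $\sigma_{\mathrm{tot}}$; soundness then gives $\Th{\FK}\nvdash\sigma_{\mathrm{tot}}$.

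The main obstacle is justifying the comparison of the two theories at all, because $\Th{\FK}$ and $\Th{\PT}$ are over different languages: their state constants differ. I plan to phrase $\sigma_{\mathrm{tot}}$ purely in the shared constant-free fragment over $\step$, $\reachpred$, $\pe$ (with $\init$ available if needed). The claim of inequivalence then means that no equivalence of classifying toposes can respect this common reduct, since such an equivalence would match the generic models and hence the sequents they validate over the shared signature. I would state this explicitly as the meaning of $\not\simeq$, using the paper's stated invariant \cite[D1.4.3]{johnstone2002sketches}, rather than attempt a stronger abstract non-equivalence of the toposes. If an abstract version is required, I would fall back on counting points: each theory has a unique Set-model up to isomorphism, so I would compare invariants of the toposes such as the lattice of subterminals. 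I expect that route to be messier.
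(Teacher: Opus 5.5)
\paragraph{Where the proposal stands.} Your mutual-simulation and non-bisimilarity arguments are correct and coincide with the paper's proof: the same maps $f,g$, and the same forth-on-$a\to c$ / back-on-$y\to y$ contradiction. For the topos part you follow the paper's route: the separating sequent $\sigma_{\mathrm{tot}}$ plus the principle that provable sequents are a topos invariant (the paper's D1.4.3 citation). You also correctly identify the weak point, which the paper's proof passes over: $\Th{\FK}$ and $\Th{\PT}$ are different theories over different signatures. An abstract equivalence $\ET{\FK}\simeq\ET{\PT}$ need not send one generic model to the other, so a sequent provable in one theory but not the other only shows that the presentations differ, not that they are Morita-inequivalent.

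\paragraph{The gap.} Your repair does not close this gap. Redefining $\not\simeq$ as ``no equivalence respects the common reduct'' proves a different and essentially trivial statement: that the two canonical structures are non-isomorphic, i.e.\ $\FK\not\cong\PT$. Your fallback via points or subterminals cannot succeed either, because for $\Th{\Msys}$ as defined the literal claim fails.
\begin{itemize}
\item Neither $\FK$ nor $\PT$ has two distinct mutually reachable states. So for every pair $s\neq t$, one of $\reachpred(c_s,c_t)\vdash\bot$ or $\reachpred(c_t,c_s)\vdash\bot$ is an axiom. Together with $\top\vdash\reachpred(x,x)$ and substitution, this makes $c_s=c_t\vdash\bot$ provable.
\item Hence in any Grothendieck topos, domain closure forces the carrier of a model to be $\coprod_s 1$. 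Then $S\times S\cong\coprod_{s,t}1$, and each of $\step,\reachpred,\pe$ is a family of subterminals.
\item Each of these subterminals is forced to be $0$ or $1$ by the existence and completeness axioms, the structural axioms, and the negative axioms. For $\pe$ on the diagonal this uses that the structural $\pe$-axiom bounds $\pe$ from below, which the paper asserts when it says all three predicates are pinned down.
\item So each theory has, in every topos, exactly one model up to unique isomorphism. By the universal property, both classifying toposes are equivalent to $\mathrm{Set}$, and so $\ET{\FK}\simeq\ET{\PT}$.
\end{itemize}

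Your own remark that each theory has a unique $\mathrm{Set}$-model was the warning sign. That model has no nontrivial automorphisms, since every element is named by a constant. Consistently, both subterminal lattices are $\{0,1\}$ and both toposes have exactly one point. What $\sigma_{\mathrm{tot}}$ genuinely establishes is that $\Th{\FK}$ and $\Th{\PT}$ are distinct theories. No proof, yours or the paper's, can turn that into inequivalence of these two classifying toposes without changing the definition of $\Th{\Msys}$.
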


\begin{proof}
\emph{Mutual simulation.}\enspace
The simulation $f \colon \FK \to \PT$ is defined by $f(a) = x$, $f(b) = y$,
$f(c) = y$.  This preserves transitions: $a \to b$ maps to $x \to y$,
$a \to c$ maps to $x \to y$, and $b \to b$ maps to $y \to y$.  The halting
state~$c$ has no outgoing transitions, so there is nothing to verify.  It
preserves the initial state: $f(a) = x$.
The simulation $g \colon \PT \to \FK$ is defined by $g(x) = a$, $g(y) = b$.
This preserves transitions: $x \to y$ maps to $a \to b$, and $y \to y$ maps
to $b \to b$.  It preserves the initial state: $g(x) = a$.

\emph{Non-bisimilarity.}\enspace
Suppose for contradiction
that~$R$ is a relational bisimulation with $(a, x) \in R$.  The forth
condition applied to the transition $a \to c$ in~$\FK$ yields a state~$t'$
with $x \to t'$ in~$\PT$ and $(c, t') \in R$.  Since $x$ has unique
successor~$y$, we have $t' = y$ and thus $(c, y) \in R$.  Now apply the
back condition at $(c, y)$: the transition $y \to y$ in~$\PT$ requires a
state $s'$ with $c \to s'$ in~$\FK$ and $(s', y) \in R$.  But $c$ has no
outgoing transitions; it is a halting state.  This is a contradiction.

\emph{Topos separation.}\enspace
The
\emph{totality sequent}
$\sigma_{\mathrm{tot}} := \top \vdash_x \exists y.\,\step(x,y)$
asserts that every state has a successor.  This sequent is provable
in~$\Th{\PT}$: state~$x$ has successor~$y$, and state~$y$ has
successor~$y$.  It is not provable in~$\Th{\FK}$: state~$c$ has no
successor.  Since $\sigma_{\mathrm{tot}}$ is a geometric sequent provable
in~$\Th{\PT}$ but not in~$\Th{\FK}$, the two theories have different
deductive closures.  Provability of geometric sequents is invariant under
equivalence of classifying toposes---it is detected by the generic
model's internal logic~\cite[D1.4.3]{johnstone2002sketches}---so
$\ET{\FK} \not\simeq \ET{\PT}$.
\end{proof}

\begin{rem}[Halting intuition]
\label{rem:halting-intuition}
The First Separation has a natural computational interpretation.  The
fork~$\FK$ can halt (by reaching state~$c$), while the path~$\PT$ cannot.
Yet they mutually simulate: every simulation from~$\PT$ into~$\FK$ maps to
the non-halting branch, and the simulation from~$\FK$ to~$\PT$ collapses
the halting branch.  Bisimulation, however, requires the back condition to
hold at every related pair, and no state in~$\PT$ can match the halting
behavior of~$c$.
\end{rem}

\subsection{Second Separation: Bisimulation $\supsetneq$ Topos Equivalence}
\label{sec:second-sep}

The \emph{hub-spokes system}~$\HS$ has states $\{a, b, c\}$, transitions
$a \to b$, $a \to c$, $b \to a$, $c \to a$, and initial state~$a$.  The
hub~$a$ connects to two spokes $b$ and $c$, each of which connects back to
the hub.  The \emph{two-cycle}~$\TC$ has states $\{x, y\}$, transitions
$x \to y$, $y \to x$, and initial state~$x$.  It is a simple alternating
cycle.

\begin{figure}[ht]
\centering
\begin{tikzpicture}[
    >=Stealth,
    state/.style={circle, draw, minimum size=8mm, inner sep=1pt, font=\small},
  ]
  \node at (-3.0, 1.8) {$\HS$};
  \node[state] (a) at (-3.0, 0.5) {$a$};
  \node[state] (c) at (-4.5, -1.2) {$c$};
  \node[state] (b) at (-1.5, -1.2) {$b$};

  \draw[->, bend left=15] (a) to (c);
  \draw[->, bend left=15] (c) to (a);
  \draw[->, bend left=15] (a) to (b);
  \draw[->, bend left=15] (b) to (a);

  \node at (3.0, 1.8) {$\TC$};
  \node[state] (x) at (3.0, 0.5) {$x$};
  \node[state] (y) at (3.0, -1.2) {$y$};

  \draw[->, bend left=20] (x) to (y);
  \draw[->, bend left=20] (y) to (x);

  \draw[->, dashed, shorten >=2pt, shorten <=2pt]
    (a) to node[above, font=\footnotesize] {$I$} (x);
  \draw[->, dashed, shorten >=2pt, shorten <=2pt]
    (b) to node[above, font=\footnotesize, sloped] {$I$} (y);
  \draw[->, dashed, shorten >=2pt, shorten <=2pt, bend right=20]
    (c) to node[below, font=\footnotesize, sloped, pos=0.6] {$I$} (y);
\end{tikzpicture}
\caption{The hub-spokes~$\HS$ and two-cycle~$\TC$.
Dashed arrows show the simulation $I \colon \HS \to \TC$ with
$I(a)=x$, $I(b)=I(c)=y$; the reverse simulation is
$K(x)=a$, $K(y)=b$.  Non-injectivity $I(b)=I(c)$ collapses
the branching that $\sigma_{\mathrm{det}}$ detects.}
\label{fig:hs-tc}
\end{figure}

\begin{thm}[Second Separation]
\label{thm:second-separation}
The systems $\HS$ and $\TC$ are bisimilar, yet their classifying toposes are
not equivalent:
\[
\boxed{\quad \HS \bismark \TC \qquad\text{yet}\qquad
  \ET{\HS} \not\simeq \ET{\TC} \quad}
\]
\end{thm}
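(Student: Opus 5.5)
The proof has two independent parts, mirroring the structure of the First Separation (Theorem~\ref{thm:first-separation}).

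\emph{Bisimilarity.} I will exhibit the relation $R = \{(a,x),(b,y),(c,y)\}$, which is the graph of the simulation $I$ from Figure~\ref{fig:hs-tc}. It contains $(\star_\HS,\star_\TC) = (a,x)$, and I check back-and-forth pair by pair. At $(a,x)$: the forth moves $a\to b$ and $a\to c$ are both matched by $x\to y$, landing in $(b,y)$ and $(c,y)$. The back move $x\to y$ is matched by $a\to b$. At $(b,y)$ and at $(c,y)$: each spoke has the unique move back to $a$, and $y$ has the unique move $y\to x$; these match each other and land in $(a,x)\in R$. This finite case check establishes $\HS\bismark\TC$. Alternatively, the pair $I,K$ together with Theorem~\ref{thm:correspondence} gives a functional bisimulation, hence a relational one. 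That route requires checking mutual reachability of round trips, which holds because both systems are strongly connected, but the direct check is simpler.

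\emph{Topos separation.} I will use the determinism sequent
\[
\sigma_{\mathrm{det}} := \step(x,y)\land\step(x,z) \vdash_{x,y,z} y=z .
\]
In $\TC$ every state has exactly one successor. By the semantic bridge lemma for $\sigma_{\mathrm{det}}$, the canonical model of $\Th{\TC}$ satisfies it, so Lemma~\ref{lem:soundness-completeness}(2) makes it provable in $\Th{\TC}$. Syntactically, this is domain closure plus the completeness axioms: case-split $x=c_x$ or $x=c_y$, and completeness forces $y$ and $z$ to equal the same constant. In $\HS$, the canonical model refutes it via $a\to b$, $a\to c$ with $b\neq c$. By soundness (Lemma~\ref{lem:soundness-completeness}(1)) applied to the canonical model, $\sigma_{\mathrm{det}}$ is not provable from $\Th{\HS}$. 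Distinct constants are interpreted as distinct elements in the canonical model, so no hidden axiom identifies $b$ and $c$.

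The two theories therefore have different sets of provable geometric sequents. By the invariant cited in the First Separation proof (D1.4.3), this gives $\ET{\HS}\not\simeq\ET{\TC}$.

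The main obstacle is this last step: the two theories are over different languages, since the constants differ. So "proving different sequents" has to be read for sequents in the shared constant-free signature $\{\step,\reachpred,\pe,=\}$. I will stress that $\sigma_{\mathrm{det}}$ uses no constants, so it lies in the common fragment, and that equivalence of classifying toposes in the paper's sense is meant relative to this shared structure, with the generic models compared on these predicates. Once the invariant is read this way, the argument is identical in form to the totality argument in Theorem~\ref{thm:first-separation}.
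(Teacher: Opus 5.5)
Your bisimulation check (the relation $R=\{(a,x),(b,y),(c,y)\}$, or $(I,K)$ via Theorem~\ref{thm:correspondence}) and your analysis of $\sigma_{\mathrm{det}}$ coincide with the paper's proof and are correct. Here $\sigma_{\mathrm{det}}$ is derivable in $\Th{\TC}$ from domain closure and the completeness axioms, and it is refuted by the canonical model of $\Th{\HS}$.

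The gap is the final step, which the paper takes in exactly the same way by citing D1.4.3. You rightly notice that the signatures differ, but your remedy redefines the conclusion rather than proving it. The principle ``different provable sequents, hence inequivalent classifying toposes'' is false for equivalence of toposes, even over a shared constant-free signature. (The paper itself reads topos equivalence as Morita equivalence in Remark~\ref{sec:functoriality}.) For example, the propositional theories $\{\top\vdash p,\ q\vdash\bot\}$ and $\{\top\vdash q,\ p\vdash\bot\}$ are both classified by $\mathbf{Set}$, yet only the first proves $\top\vdash p$. Over a common signature, a difference in deductive closure shows only that the theories are distinct quotients, i.e.\ distinct subtoposes of a common classifying topos. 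Distinct subtoposes can still be equivalent as toposes. Requiring the equivalence to match the generic $\{\step,\reachpred,\pe\}$-structures therefore proves only a weaker statement than the boxed $\ET{\HS}\not\simeq\ET{\TC}$.

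To get the boxed statement you need a genuine Morita invariant, and the category of points works. An equivalence $\ET{\HS}\simeq\ET{\TC}$ would induce an equivalence between the categories of $\mathbf{Set}$-models of $\Th{\HS}$ and $\Th{\TC}$. The key fact is that domain closure does not force distinct constants apart. Since $\HS$ and $\TC$ are strongly connected, there are no negative axioms to do so either.

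Using the existence and completeness axioms, a model of $\Th{\HS}$ falls into exactly one of three cases:
\begin{itemize}
\item $c_a,c_b,c_c$ pairwise distinct (the canonical model);
\item $c_b=c_c\neq c_a$ (a copy of $\TC$);
\item all three equal (a copy of $\SL$).
\end{itemize}
Identifying $c_a$ with a single spoke is impossible: $c_a=c_b$ yields $\step(c_b,c_c)$, and completeness at $c_b$ then gives $c_c=c_a$. Likewise $\Th{\TC}$ has exactly two models, according as $c_x\neq c_y$ or $c_x=c_y$. In each case $\reachpred$ is forced to be total by reflexivity and transitivity. So is $\pe$ under its defining axiom, since path equivalence is mutual reachability.

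Every element is named by a constant, so models have no nontrivial automorphisms. The two categories of points thus have three and two isomorphism classes respectively and cannot be equivalent. This argument does not use $\sigma_{\mathrm{det}}$ at all. The same quotient models also refute the uniqueness claim in the proof of Lemma~\ref{lem:soundness-completeness}(2). So rely on your direct syntactic derivation of $\sigma_{\mathrm{det}}$ in $\Th{\TC}$, not on that lemma.
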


\begin{proof}
\emph{Functional bisimulation.}\enspace
Define $I \colon \HS \to \TC$ by $I(a) = x$, $I(b) = y$,
$I(c) = y$.  This preserves transitions: $a \to b$ maps to $x \to y$,
$a \to c$ maps to $x \to y$, $b \to a$ maps to $y \to x$, $c \to a$ maps
to $y \to x$.  Define $K \colon \TC \to \HS$ by $K(x) = a$, $K(y) = b$.
This preserves transitions: $x \to y$ maps to $a \to b$, $y \to x$ maps to
$b \to a$.

\emph{Coherence.}\enspace
The round-trips on~$\HS$ are:
$KI(a) = a$, $KI(b) = b$, and $KI(c) = b$.  The
first two are identities.  For the third,
$b \patheq_{\HS} c$ since $b \to a \to c$ and
$c \to a \to b$ witness mutual reachability (Lemma~\ref{lem:lifting}).
The round-trips on~$\TC$ are:
$IK(x) = x$ and $IK(y) = y$, both identities.  Thus
$(I, K)$ is a functional bisimulation.

\emph{Relational bisimilarity.}\enspace
The relation $R = \{(a,x),(b,y),(c,y)\}$ is a relational bisimulation
directly: at~$(a,x)$, the transitions $a \to b$ and $a \to c$ are matched by
$x \to y$ with $(b,y),(c,y) \in R$; at~$(b,y)$, $b \to a$ is matched by
$y \to x$ with $(a,x) \in R$; at~$(c,y)$, $c \to a$ is matched by $y \to x$
with $(a,x) \in R$; back conditions are symmetric.  Hence $\HS \bismark \TC$.

\emph{Topos separation.}\enspace
The determinism sequent
$\sigma_{\mathrm{det}} := \step(x,y) \land \step(x,z) \vdash y = z$ is
provable in~$\Th{\TC}$: state~$x$ has exactly one successor~$y$, and
state~$y$ has exactly one successor~$x$.  The sequent is not provable
in~$\Th{\HS}$: state~$a$ has two distinct successors $b$ and $c$ with
$b \neq c$.  Since $\sigma_{\mathrm{det}}$ is a geometric sequent provable
in~$\Th{\TC}$ but not in~$\Th{\HS}$, the two theories have different
deductive closures.  Provability of geometric sequents is invariant under
equivalence of classifying toposes~\cite[D1.4.3]{johnstone2002sketches},
so $\ET{\HS} \not\simeq \ET{\TC}$.
\end{proof}

\begin{rem}[Multiple witnesses]
\label{rem:multiple-witnesses}
The Second Separation is not limited to the determinism sequent.  In
Section~\ref{sec:mechanisms}, two additional bisimilar pairs independently
demonstrate non-equivalent classifying toposes: the diamond graph and
confluence tree (separated by a weak confluence sequent), and the self-loop
system and two-cycle (separated by a universal self-loop sequent).
\end{rem}

\begin{cor}[Bi-interpretation does not imply topos equivalence]
\label{cor:biinterp-not-topos}
System-level bi-interpretation does not imply topos equivalence:
$\HS$ and~$\TC$ satisfy $\HS \bimark \TC$ (bi-interpretation, by
Theorem~\ref{thm:correspondence}) and
$\HS \bismark \TC$ (relational bisimilarity), yet
$\ET{\HS} \not\simeq \ET{\TC}$.
\end{cor}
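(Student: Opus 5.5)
The corollary assembles results already proved, so the plan is to collect three facts about the specific pair $\HS$, $\TC$ and combine them. No new construction is needed. The witnesses are the simulations $I\colon\HS\to\TC$ with $I(a)=x$, $I(b)=I(c)=y$, and $K\colon\TC\to\HS$ with $K(x)=a$, $K(y)=b$, both exhibited in the proof of Theorem~\ref{thm:second-separation}.

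\emph{Step 1 (bi-interpretation).} I would cite the coherence part of the proof of Theorem~\ref{thm:second-separation}, which shows that $(I,K)$ is a functional bisimulation. The round trips satisfy $KI(a)=a$ and $KI(b)=b$, and $KI(c)=b\patheq_{\HS}c$; on the other side $IK$ is the identity on $\TC$. The forward direction of Theorem~\ref{thm:correspondence} then gives $\HS\bimark\TC$, since path equivalence implies mutual reachability. One could also check mutual reachability directly: $b\to a\to c$ and $c\to a\to b$, which bypasses Lemma~\ref{lem:lifting} entirely.

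\emph{Step 2 (relational bisimilarity).} I would quote the relation $R=\{(a,x),(b,y),(c,y)\}$ verified in Theorem~\ref{thm:second-separation}. It contains the pair of initial states $(\star_{\HS},\star_{\TC})=(a,x)$, so $\HS\bismark\TC$.

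\emph{Step 3 (topos separation).} I would invoke the topos-separation part of Theorem~\ref{thm:second-separation}. The sequent $\sigma_{\mathrm{det}}$ holds in the canonical model of $\TC$, because every state there has exactly one successor. It fails in the canonical model of $\HS$, because $a\to b$, $a\to c$ and $b\neq c$. By Lemma~\ref{lem:soundness-completeness}, $\sigma_{\mathrm{det}}$ is therefore provable in $\Th{\TC}$ and not in $\Th{\HS}$. Since the classifying topos determines the provable geometric sequents, $\ET{\HS}\not\simeq\ET{\TC}$. Combining Steps 1 and 3 shows that $\bimark$ does not imply topos equivalence.

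The only point needing care is in Step 3. The two theories are over different languages, with constants indexed by different state sets, so the phrase ``prove the same sequents'' has to be read as comparing the language-independent sequent $\sigma_{\mathrm{det}}$, which uses only $\step$ and equality. I expect to handle this by appealing to the same invariance argument already used in Theorem~\ref{thm:second-separation}, rather than reproving it here.
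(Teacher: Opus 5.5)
Your Steps 1 and 2 are correct. They match how the paper obtains the corollary: it has no separate proof and is read off from Theorems~\ref{thm:correspondence} and~\ref{thm:second-separation}.

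The gap is the final inference of Step 3, which is the point you flagged and then deferred. ``The classifying topos determines the provable geometric sequents'' holds only for the topos \emph{together with its generic model}. An abstract equivalence $\ET{\HS}\simeq\ET{\TC}$ has no reason to carry the generic $\Th{\HS}$-model to the generic $\Th{\TC}$-model, since they are models of theories over different signatures. So nothing transports $\sigma_{\mathrm{det}}$ from one side to the other. The principle fails even over a single language. Take one propositional symbol $p$: the theories $\{\top\vdash p\}$ and $\{p\vdash\bot\}$ each have exactly one model in every topos, so both are classified by $\mathrm{Set}$, yet only the first proves $\top\vdash p$. Appealing to ``the same invariance argument'' as in Theorem~\ref{thm:second-separation} therefore does not close the step. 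That argument is itself unsound, as is the ``only if'' claim in \S\ref{sec:geometric}.

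There is a second, smaller problem. The completeness half of Lemma~\ref{lem:soundness-completeness} rests on uniqueness of $\mathrm{Set}$-models, and that fails here. Every pair of states in $\HS$ and in $\TC$ is mutually reachable and path-equivalent, so there are no negative axioms and nothing forces distinct constants to be distinct. For example, setting $c_x=c_y$ gives a one-point self-loop model of $\Th{\TC}$. You should instead derive $\Th{\TC}\vdash\sigma_{\mathrm{det}}$ directly from domain closure and the two completeness axioms.

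This non-uniqueness also supplies the repair: use a genuine Morita invariant, such as the number of isomorphism classes of points. This works because the category of points of $\mathcal{E}[\mathbb{T}]$ is equivalent to that of $\mathrm{Set}$-models of $\mathbb{T}$, and model isomorphisms fix constants. Read the unspecified $\pe$-definition axiom so that $\pe$, like $\reachpred$, is total in these models. Then a $\mathrm{Set}$-model is determined by which constants coincide.
\begin{itemize}
\item $\Th{\TC}$ has two: the two-cycle and the self-loop.
\item $\Th{\HS}$ has three: the canonical model, the quotient with $c_b=c_c$ (which is $\TC$ itself), and the self-loop.
\end{itemize}
The pattern $c_a=c_b\neq c_c$ is impossible: $c_a=c_b$ gives $\step(c_b,c_c)$, and completeness at $c_b$ then forces $c_c=c_a$. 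Since $3\neq 2$, the toposes are inequivalent. In fact $\ET{\TC}$ is the Sierpi\'nski topos and $\ET{\HS}$ is presheaves on a three-element chain.
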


\subsection{The Quotient Bridge}
\label{sec:quotient}

For a rooted transition system~$\Msys$, the \emph{path-equivalence quotient}
$\Msys/{\patheq}$ has states $S_\Msys / {\patheq}$, transitions $[s_1] \to [s_2]$
whenever some representatives have $s_1' \to s_2'$ with $s_1' \patheq s_1$
and $s_2' \patheq s_2$, and initial state $[\star_\Msys]$.  The quotient
collapses path-equivalent states into single equivalence classes.

\begin{thm}[Quotient Bridge]
\label{thm:quotient-bridge}
For all rooted transition systems~$\Msys$ and~$\Nsys$, if there exists a functional
bisimulation between~$\Msys$ and~$\Nsys$, then their path-equivalence
quotients have equivalent classifying toposes:
\[
\boxed{\quad \Msys \text{ and } \Nsys \text{ are functionally bisimilar}
  \;\Longrightarrow\;
\ET{\Msys/{\patheq}} \simeq \ET{\Nsys/{\patheq}} \quad}
\]
\end{thm}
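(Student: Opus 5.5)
The plan is to show that a functional bisimulation descends to an \emph{isomorphism} of rooted transition systems $\Msys/{\patheq} \cong \Nsys/{\patheq}$. An isomorphism of systems then induces a renaming of the languages $\mathcal{L}_{\Msys/\patheq}$ and $\mathcal{L}_{\Nsys/\patheq}$ under which the two theories $\Th{\Msys/\patheq}$ and $\Th{\Nsys/\patheq}$ correspond axiom-for-axiom. Hence their syntactic categories and syntactic topologies are isomorphic, and so are their classifying toposes.

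\emph{Step 1: simulations respect path equivalence.} Let $(f,g)$ be a functional bisimulation.
\begin{itemize}
\item A simulation $f$ preserves single transitions, so by induction on path length it preserves $\reach$.
\item If $s \patheq s'$, then $s \reach s'$ and $s' \reach s$, as observed in the proof of Theorem~\ref{thm:correspondence}.
\item Hence $f(s)$ and $f(s')$ are mutually reachable, so $f(s) \patheq f(s')$ by the Lifting Lemma (Lemma~\ref{lem:lifting}).
\end{itemize}
Thus $\bar f([s]) := [f(s)]$ is well defined, and likewise $\bar g$.

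\emph{Step 2: the induced maps form an isomorphism.}
\begin{itemize}
\item $\bar f$ preserves transitions. Suppose $[s_1] \to [s_2]$ is witnessed by representatives $s_1' \to s_2'$ with $s_i' \patheq s_i$. Then $f(s_1') \to f(s_2')$ and $f(s_i') \patheq f(s_i)$ by Step~1, so $[f(s_1)] \to [f(s_2)]$ in $\Nsys/{\patheq}$.
\item $\bar f$ preserves the initial state: $\bar f([\star_\Msys]) = [f(\star_\Msys)] = [\star_\Nsys]$.
\item The coherence conditions give $\bar g \bar f([s]) = [gf(s)] = [s]$ and $\bar f \bar g([t]) = [t]$. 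So $\bar f$ and $\bar g$ are mutually inverse bijections on states.
\item Since both preserve transitions and are mutually inverse, transitions are reflected as well. Therefore $\bar f$ is an isomorphism of rooted transition systems.
\end{itemize}

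\emph{Step 3: isomorphic systems have equivalent classifying toposes.} Let $\Msys' \cong \Nsys'$ via $h$.
\begin{itemize}
\item Translate $\mathcal{L}_{\Msys'}$ to $\mathcal{L}_{\Nsys'}$ by sending each constant $c_s$ to $c_{h(s)}$ (so $\init \mapsto \init$), and each of $\step,\reachpred,\pe$ to itself.
\item The structural axioms are literally the same under this translation.
\item Each system-specific axiom is sent to the corresponding system-specific axiom of $\Th{\Nsys'}$. This uses three facts: $h$ is a bijection on states; it preserves and reflects transitions; and it therefore preserves and reflects reachability and path equivalence. The last point holds because both relations are defined purely from the transition graph. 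Together these facts match the existence, completeness, negative and domain-closure axioms one-to-one.
\item So the translation is an isomorphism of theories. It induces an isomorphism of syntactic sites, hence $\ET{\Msys'} \simeq \ET{\Nsys'}$.
\end{itemize}
Applying this with $\Msys' = \Msys/{\patheq}$ and $\Nsys' = \Nsys/{\patheq}$ gives the theorem.

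The main obstacle is Step~3. The two theories live over \emph{different} languages, whose constants are indexed by different state sets, whereas the invariant used earlier (``same provable sequents'') is phrased for a common language. I would handle this by making the renaming explicit and checking that it carries the negative axioms for $\reachpred$ and $\pe$ correctly, which is exactly where reflection of transitions (not merely preservation) is needed. Steps~1--2 are routine once the Lifting Lemma is in hand.
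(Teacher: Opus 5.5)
Your proposal is correct and follows the paper's proof: you descend $f,g$ to the quotients, use coherence to get $\bar g\bar f=\id$ and $\bar f\bar g=\id$, conclude $\Msys/{\patheq}\cong\Nsys/{\patheq}$, and pass to classifying toposes. There are two minor differences. Your well-definedness step is shorter than the paper's, which threads reachability through $g(v)$ using coherence, because you note that $\patheq$ coincides with mutual reachability and so any simulation preserves it. Your Step~3 also spells out the constant-renaming that the paper compresses into ``isomorphic systems have identical geometric theories.''
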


\begin{proof}
Let $(f, g)$ be a functional bisimulation between~$\Msys$ and~$\Nsys$.  The
argument has two steps: descent to quotients, then mutual inverse.

\smallskip\noindent\emph{Well-definedness.}\enspace
We show $f$ preserves path equivalence: if $s_1 \patheq_\Msys s_2$, then
$f(s_1) \patheq_\Nsys f(s_2)$.  For forward reachability, suppose
$f(s_1) \reach v$ in~$\Nsys$; we must show $f(s_2) \reach v$.  Since $g$ is a
simulation, $gf(s_1) \reach g(v)$ in~$\Msys$.  The coherence condition
$gf(s_1) \patheq_\Msys s_1$ gives $s_1 \reach g(v)$, and then
$s_1 \patheq_\Msys s_2$ gives $s_2 \reach g(v)$.  Since $f$ is a simulation,
$f(s_2) \reach fg(v)$.  The coherence condition $fg(v) \patheq_\Nsys v$ implies
$fg(v) \reach v$ (path equivalence implies mutual reachability), so
$f(s_2) \reach fg(v) \reach v$.  Backward reachability is symmetric.
Thus $f$ induces $\bar{f} \colon \Msys/{\patheq} \to \Nsys/{\patheq}$ by
$\bar{f}([s]) = [f(s)]$; similarly $g$ induces~$\bar{g}$.

\smallskip\noindent\emph{Mutual inverse.}\enspace
The composite $\bar{g}\bar{f}([s]) = [gf(s)]$.  The coherence condition
$gf(s) \patheq_\Msys s$ gives $[gf(s)] = [s]$, so $\bar{g}\bar{f} = \id$.
Symmetrically $\bar{f}\bar{g} = \id$.  Since $f$ preserves both transitions
and path equivalence, $\bar{f}$ is a simulation of quotient systems; likewise
$\bar{g}$.  Thus $\bar{f}$ and $\bar{g}$ are mutually inverse simulations,
giving $\Msys/{\patheq} \cong \Nsys/{\patheq}$ as rooted transition systems.  Isomorphic
systems have identical geometric theories, hence equivalent classifying toposes.
\end{proof}

By the Quotient Bridge, functional bisimulation preserves the
classifying topos of the quotient, even though relational bisimulation need
not preserve the classifying topos of the original system.  The quotient
collapses exactly the branching structure that the determinism sequent
detects: in $\HS/{\patheq}$, the states $b$ and $c$ merge into
$[b] = [c]$, making $\HS/{\patheq} \cong \TC/{\patheq}$.

\begin{rem}[Functorial structure]
\label{sec:functoriality}
The assignment $\Msys \mapsto \ET{\Msys}$ extends to a contravariant $2$-functor
$\mathcal{E}[-] \colon \mathbf{Mwy}^{\mathrm{op}} \to \mathbf{BTop}$,
where $\mathbf{Mwy}$ is the $2$-category of rooted transition systems, simulations,
and path-equivalence homotopies, and $\mathbf{BTop}$ is the $2$-category of
Grothendieck toposes, geometric morphisms, and geometric transformations.
\label{thm:functoriality}
A simulation $f \colon \Msys \to \Nsys$ induces a geometric morphism
$f^* \colon \ET{\Nsys} \to \ET{\Msys}$ by pulling back models along~$f$;
a $2$-cell $\alpha \colon f \Rightarrow g$ induces a geometric
transformation $g^{*} \Rightarrow f^{*}$.  In this language, the
three-level hierarchy encodes three fiber conditions: mutual simulation
means geometric morphisms exist both ways; functional bisimulation means
they compose to equivalences on path-equivalence quotients
(Theorem~\ref{thm:quotient-bridge}); topos equivalence means Morita
equivalence of the geometric theories.  Every such geometric morphism
factors as hyperconnected (adding axioms) followed by localic (expanding
the language); the Second Separation is purely hyperconnected, since the
determinism sequent is an axiom over the same language.
$2$-functoriality follows from the standard fact that pullback of
models along simulations preserves geometric morphism composition, and
that path-equivalence homotopies lift to geometric transformations via
the universal property of the syntactic category.  In summary:
\[
\begin{tikzcd}
  \Msys \arrow[r, "f"] & \Nsys
\end{tikzcd}
\quad\longmapsto\quad
\begin{tikzcd}
  \ET{\Nsys} \arrow[r, "f^*"] & \ET{\Msys}
\end{tikzcd}
\]
\end{rem}

\section{The Bisimulation-Invariant Fragment}
\label{sec:hml}

The hierarchy of Section~\ref{sec:hierarchy} shows that geometric logic sees
structure invisible to bisimulation, but does not explain \emph{why}.  This
section identifies the boundary: the bisimulation-invariant fragment of
geometric logic is exactly diamond-only Hennessy--Milner logic, and three
syntactic mechanisms account for all the ways geometric logic crosses it.

\subsection{Diamond-Only Hennessy--Milner Logic}
\label{sec:hml-def}

\begin{defi}[HML]
\label{def:hml}
The \emph{diamond-only Hennessy--Milner logic}~$\HML$ is generated by:
\[
  \varphi \;::=\; \top \;\mid\; \bot \;\mid\; \varphi_1 \land \varphi_2
  \;\mid\; \varphi_1 \lor \varphi_2 \;\mid\; \Diamond \varphi
\]
The semantics at a state~$s$ in a rooted transition system~$\Msys$ are standard:
$\Msys, s \models \top$ always; $\Msys, s \models \bot$ never;
$\Msys, s \models \varphi_1 \land \varphi_2$ iff both conjuncts hold;
$\Msys, s \models \varphi_1 \lor \varphi_2$ iff at least one disjunct holds;
$\Msys, s \models \Diamond\varphi$ iff there exists~$t$ with $s \to t$ and
$\Msys, t \models \varphi$.
\end{defi}

The \emph{standard translation} embeds~$\HML$ into the positive existential
fragment of first-order logic over the rooted transition language:
\begin{align*}
  \ST_x(\top) &= \top &
  \ST_x(\bot) &= \bot \\
  \ST_x(\varphi_1 \land \varphi_2) &= \ST_x(\varphi_1) \land \ST_x(\varphi_2) &
  \ST_x(\varphi_1 \lor \varphi_2) &= \ST_x(\varphi_1) \lor \ST_x(\varphi_2) \\
  \ST_x(\Diamond\varphi) &= \exists y.\,(\step(x,y) \land \ST_y(\varphi))
\end{align*}
The image of the standard translation has three distinctive properties.
First, it is \emph{equality-free}: no equality symbol appears.  Second, it
has \emph{linear variables}: each existentially bound variable~$y$ appears
in exactly one $\step$-atom as target and in no other atom position.  Third,
it has \emph{tree-shaped dependencies}: the variable dependency graph (where
$x \to y$ means $y$ is introduced in a $\Diamond$ under~$x$) is a forest.
In the terminology of database theory~\cite{yannakakis1981algorithms}, the
standard translation produces exactly the \emph{acyclic conjunctive queries}
over the binary relation~$\step$.

\subsection{Forward Direction: HML Is Bisimulation-Invariant}
\label{sec:hml-invariant}

\begin{thm}[HML Bisimulation Invariance]
\label{thm:hml-invariant}
Let $R$ be a relational bisimulation between rooted transition systems~$\Msys$ and~$\Nsys$,
and let $(s, t) \in R$.  For any~$\HML$ formula~$\varphi$:
\[
\boxed{\quad \Msys, s \models \varphi \;\;\Longleftrightarrow\;\; \Nsys, t \models \varphi \quad}
\]
\end{thm}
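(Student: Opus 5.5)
The plan is a structural induction on~$\varphi$, proving the biconditional simultaneously for \emph{all} pairs $(s,t) \in R$ rather than for a fixed pair. This strengthening is what makes the induction hypothesis usable: the $\Diamond$ case moves to successor pairs $(s',t')$ that are different from $(s,t)$ but still lie in~$R$. The statement is symmetric in $\Msys$ and~$\Nsys$, since the converse relation $R^{-1}$ is again a relational bisimulation with forth and back exchanged. So for each connective it suffices to prove the implication $\Msys, s \models \varphi \Rightarrow \Nsys, t \models \varphi$ and obtain the reverse implication by applying the same argument to $R^{-1}$.

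\emph{Base cases.} For $\top$ and $\bot$ the claim is immediate from Definition~\ref{def:hml}: both sides are always true, respectively always false, independently of the state.

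\emph{Boolean cases.} For $\varphi_1 \land \varphi_2$ and $\varphi_1 \lor \varphi_2$, the semantics is computed pointwise at the same state. Applying the induction hypothesis to $\varphi_1$ and $\varphi_2$ at the same pair $(s,t)$ gives the result directly.

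\emph{Modal case.} This is the only step that uses the bisimulation conditions. Suppose $\Msys, s \models \Diamond\psi$, so there is $s'$ with $s \to s'$ and $\Msys, s' \models \psi$. The forth condition of Definition~\ref{def:rel-bisim} gives $t'$ with $t \to t'$ and $(s',t') \in R$. The induction hypothesis for~$\psi$ at $(s',t')$ then gives $\Nsys, t' \models \psi$, hence $\Nsys, t \models \Diamond\psi$. The converse direction uses the back condition in exactly the same way, or equivalently follows from the symmetry remark above.

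No real obstacle is expected. The one point to get right is quantifying the induction hypothesis over all pairs in~$R$, so that it applies at $(s',t')$ in the modal case. Note that the proof uses neither image-finiteness nor the rootedness condition $(\star_\Msys,\star_\Nsys) \in R$, so the result holds for an arbitrary relational bisimulation. The whole argument is constructive: the $\Diamond$ case transports an explicit witness $s'$ to an explicit witness $t'$, and no excluded middle is used. This fits the paper's claim that the forward direction is proved constructively.
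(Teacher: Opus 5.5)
Your proposal is correct and follows the paper's proof: structural induction on $\varphi$ with the hypothesis applied at successor pairs $(s',t') \in R$, immediate constant and Boolean cases, and the forth condition (respectively the back condition) handling the two directions of the $\Diamond$ case. Your symmetry shortcut via $R^{-1}$ only repackages the paper's explicit use of the back condition.
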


\begin{proof}
We prove both directions simultaneously by structural induction on
$\varphi$.  The case $\varphi = \top$ is immediate since $\top$ holds
everywhere.  The case $\varphi = \bot$ is immediate since $\bot$ holds
nowhere.  For $\varphi = \varphi_1 \land \varphi_2$: since $(s,t) \in R$,
the inductive hypothesis gives $\Msys, s \models \varphi_i$ iff
$\Nsys, t \models \varphi_i$ for $i = 1, 2$, hence the conjunction is preserved.
The case $\varphi = \varphi_1 \lor \varphi_2$ is analogous.

For $\varphi = \Diamond\psi$: suppose $\Msys, s \models \Diamond\psi$, so there
exists $s'$ with $s \to s'$ in $\Msys$ and $\Msys, s' \models \psi$.  Since $(s, t) \in R$
and $R$ is a bisimulation, the \emph{forth} condition gives $t'$ with
$t \to t'$ in $\Nsys$ and $(s', t') \in R$.  By the inductive hypothesis applied to
$\psi$ and the pair $(s', t')$, we have $\Nsys, t' \models \psi$, hence
$\Nsys, t \models \Diamond\psi$.  The backward direction uses the \emph{back}
condition symmetrically: given $t'$ with $t \to t'$ in $\Nsys$ and
$\Nsys, t' \models \psi$, the back condition provides $s'$ with $s \to s'$ in $\Msys$
and $(s', t') \in R$, and the inductive hypothesis gives
$\Msys, s' \models \psi$, hence $\Msys, s \models \Diamond\psi$.
\end{proof}

\begin{rem}[Hennessy--Milner converse]
\label{rem:hennessy-milner}
The converse of Theorem~\ref{thm:hml-invariant} (for image-finite systems,
$\HML$-equivalence implies bisimilarity) is the Hennessy--Milner
theorem~\cite{hennessy1985algebraic}.  Our Theorem~\ref{thm:gvb}
addresses a different question: not whether~$\HML$ characterizes
bisimulation, but whether bisimulation-invariance characterizes~$\HML$
within geometric logic.
\end{rem}

\subsection{Three Separating Mechanisms}
\label{sec:mechanisms}

Each mechanism exhibits a geometric property~$P$ expressible as a geometric
sequent, a bisimilar pair $(M, N)$ with $M \models P$ and
$N \not\models P$, and a syntactic explanation of why $P$ lies
outside~$\HML$.

\subsubsection{Mechanism 1: Consequent Equality ($\sigma_{\mathrm{det}}$)}

The \emph{determinism sequent}
$\sigma_{\mathrm{det}} := \step(x,y) \land \step(x,z) \vdash y = z$
asserts that every state has at most one successor.

The witness pair is the hub-spokes system~$\HS$ and the two-cycle~$\TC$,
whose bisimulation was established in Theorem~\ref{thm:second-separation}.
The two-cycle is deterministic: $x$ has unique successor~$y$ and $y$ has
unique successor~$x$, so $\TC \models \sigma_{\mathrm{det}}$.  The
hub-spokes system is not deterministic: $a$ has two distinct successors $b$
and $c$, so $\HS \not\models \sigma_{\mathrm{det}}$.  Therefore
$\sigma_{\mathrm{det}}$ is not bisimulation-invariant.

The syntactic explanation is that the consequent contains $y = z$, an
equality between universally bound variables.  The standard translation
of~$\HML$ never produces equality: each $\Diamond$ introduces a fresh
variable, and no equality symbol appears anywhere in the translation.

\subsubsection{Mechanism 2: Cyclic Variable Sharing ($\sigma_{\mathrm{conf}}$)}

The \emph{weak confluence sequent}
$\sigma_{\mathrm{conf}} := \step(x,y) \land \step(x,z) \vdash
\exists w.\,(\step(y,w) \land \step(z,w))$
asserts that any two successors of a state have a common successor.

The \emph{diamond graph}~$\DG$ has states $\{a, b, c, d\}$, transitions
$a \to b$, $a \to c$, $b \to d$, $c \to d$, $d \to d$, and initial
state~$a$.  Two paths from~$a$ converge at~$d$, forming a diamond shape.
The \emph{confluence tree}~$\CT$ has states $\{a, b, c, d_1, d_2\}$,
transitions $a \to b$, $a \to c$, $b \to d_1$, $c \to d_2$,
$d_1 \to d_1$, $d_2 \to d_2$, and initial state~$a$.  Two paths from~$a$
diverge into separate self-looping sinks.  (Terminal states have self-loops
so that the step relation is total at those states, making weak confluence
well-defined.)

\begin{figure}[ht]
\centering
\begin{tikzpicture}[
    >=Stealth,
    state/.style={circle, draw, minimum size=7mm, inner sep=1pt, font=\small},
    bisim/.style={dashed, shorten >=2pt, shorten <=2pt},
  ]
  \node at (-3.5, 1.5) {$\DG$};
  \node[state] (ga) at (-3.5, 0.7) {$a$};
  \node[state] (gb) at (-4.7, -0.5) {$b$};
  \node[state] (gc) at (-2.3, -0.5) {$c$};
  \node[state] (gd) at (-3.5, -1.7) {$d$};

  \draw[->] (ga) -- (gb);
  \draw[->] (ga) -- (gc);
  \draw[->] (gb) -- (gd);
  \draw[->] (gc) -- (gd);
  \draw[->] (gd) to [out=210, in=250, looseness=5] (gd);

  \node at (3.5, 1.5) {$\CT$};
  \node[state] (ta) at (3.5, 0.7) {$a$};
  \node[state] (tb) at (2.3, -0.5) {$b$};
  \node[state] (tc) at (4.7, -0.5) {$c$};
  \node[state] (td1) at (2.3, -1.7) {$d_1$};
  \node[state] (td2) at (4.7, -1.7) {$d_2$};

  \draw[->] (ta) -- (tb);
  \draw[->] (ta) -- (tc);
  \draw[->] (tb) -- (td1);
  \draw[->] (tc) -- (td2);
  \draw[->] (td1) to [out=210, in=250, looseness=5] (td1);
  \draw[->] (td2) to [out=330, in=290, looseness=5] (td2);

  \draw[bisim] (ga) -- node[above, yshift=4pt, font=\footnotesize] {$R$} (ta);
  \draw[bisim] (gb) to[out=-20, in=200] (tb);
  \draw[bisim] (gc) to[out=15, in=165] (tc);
  \draw[bisim] (gd) to[out=0, in=180] (td1);
  \draw[bisim] (gd) to[out=-20, in=200] (td2);
\end{tikzpicture}
\caption{The diamond graph~$\DG$ and confluence tree~$\CT$ with
bisimulation $R = \{(a,a),(b,b),(c,c),(d,d_1),(d,d_2)\}$.
The merge point~$d$ in~$\DG$ splits into distinct sinks $d_1$, $d_2$
in~$\CT$; both have self-loops but share no common successor.}
\label{fig:dg-ct}
\end{figure}
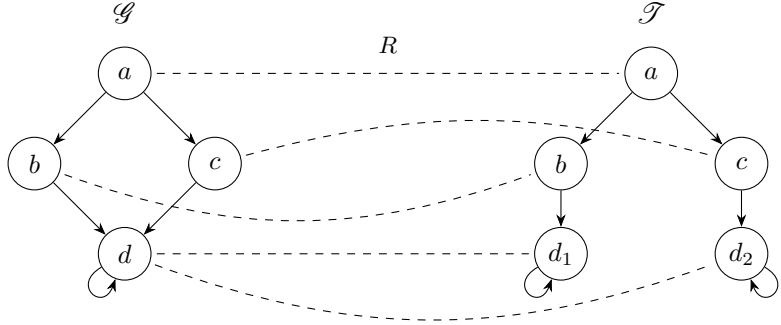

The relation $R = \{(a,a), (b,b), (c,c), (d,d_1), (d,d_2)\}$ is a
relational bisimulation between~$\DG$ and~$\CT$.  We check all five
forth/back pairs; the key pair is $(d,d_1)$, where the diamond's merge
becomes the tree's split.

\smallskip\noindent\emph{Forth.}\enspace
At $(a,a)$: $a \to b$ is matched by $a \to b$ with $(b,b) \in R$;
$a \to c$ by $a \to c$ with $(c,c) \in R$.
At $(b,b)$: $b \to d$ by $b \to d_1$ with $(d,d_1) \in R$.
At $(c,c)$: $c \to d$ by $c \to d_2$ with $(d,d_2) \in R$.
At $(d,d_1)$: $d \to d$ by $d_1 \to d_1$ with $(d,d_1) \in R$.
At $(d,d_2)$: $d \to d$ by $d_2 \to d_2$ with $(d,d_2) \in R$.

\smallskip\noindent\emph{Back.}\enspace
Verified symmetrically: each transition in~$\CT$ from a related state has a
matching transition in~$\DG$.

The diamond graph satisfies weak confluence: at state~$a$, the successors
$b$ and $c$ both reach~$d$ (via $b \to d$ and $c \to d$), so $w = d$ is
the required common successor.  The confluence tree does not satisfy weak
confluence: at state~$a$, the successors $b$ and $c$ reach $d_1$ and $d_2$
respectively, and $d_1$ and $d_2$ each have only a self-loop, so they share
no common successor.  Thus $\sigma_{\mathrm{conf}}$ is not
bisimulation-invariant.

The syntactic explanation is that the variable~$w$ appears as the target
of~$\step$ from two different sources $y$ and $z$, creating a \emph{join
constraint}, a cycle in the variable dependency graph.  In the standard
translation of~$\HML$, each existentially bound variable has exactly one
parent in the dependency graph.

\subsubsection{Mechanism 3: Repeated Variables ($\sigma_{\mathrm{loop}}$)}

The \emph{universal self-loop sequent}
$\sigma_{\mathrm{loop}} := \top \vdash_x \step(x, x)$
asserts that every state has a self-loop.

The \emph{self-loop system}~$\SL$ has a single state~$a$ with transition
$a \to a$ and initial state~$a$.  It is the simplest possible system: one
state with a self-loop.  The two-cycle~$\TC$ was defined in
Section~\ref{sec:second-sep}.

The relation $R = \{(a, x), (a, y)\}$ is a relational bisimulation
between~$\SL$ and~$\TC$.
\emph{Forth:}
at $(a, x)$, $a \to a$ is matched by $x \to y$ with $(a, y) \in R$;
at $(a, y)$, $a \to a$ by $y \to x$ with $(a, x) \in R$.
\emph{Back:}
at $(a, x)$, $x \to y$ is matched by $a \to a$ with $(a, y) \in R$;
at $(a, y)$, $y \to x$ by $a \to a$ with $(a, x) \in R$.

The self-loop system satisfies $\sigma_{\mathrm{loop}}$: the unique
state~$a$ has the self-loop $a \to a$.  The two-cycle does not: neither~$x$
nor~$y$ has a self-loop (the transitions are $x \to y$ and $y \to x$, with
$x \neq y$).  Thus $\sigma_{\mathrm{loop}}$ is not bisimulation-invariant.

The syntactic explanation is that the variable~$x$ appears in \emph{both}
argument positions of~$\step$, an implicit equality between source and
target.  In the standard translation, $\Diamond\varphi$ produces
$\exists y.\,(\step(x, y) \land \cdots)$ where $x$ and $y$ are always
distinct variables.

\subsubsection{Unifying Observation}

\begin{thm}[Three-Mechanism Separation]
\label{thm:three-mechanisms}
There exist three relationally bisimilar pairs of finite systems, each
separated by a geometric property witnessing a distinct syntactic mechanism
outside~$\HML$:
\begin{enumerate}
\item \emph{Consequent equality:} $\HS \bismark \TC$ and
  $\TC \models \sigma_{\mathrm{det}}$,
  $\HS \not\models \sigma_{\mathrm{det}}$;
\item \emph{Cyclic variable sharing:} $\DG \bismark \CT$ and
  $\DG \models \sigma_{\mathrm{conf}}$,
  $\CT \not\models \sigma_{\mathrm{conf}}$;
\item \emph{Repeated variables:} $\SL \bismark \TC$ and
  $\SL \models \sigma_{\mathrm{loop}}$,
  $\TC \not\models \sigma_{\mathrm{loop}}$.
\end{enumerate}
\end{thm}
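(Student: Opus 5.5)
The theorem collects the three subsections of \S\ref{sec:mechanisms}, so the plan is to assemble it item by item. Each item needs two facts: a relational bisimulation containing the pair of initial states, and a satisfaction/failure check of the relevant sequent. For the satisfaction checks we use the semantic bridge lemmas stated after Lemma~\ref{lem:soundness-completeness}. By those lemmas, $\sigma_{\mathrm{det}}$ holds iff the transition relation is functional, $\sigma_{\mathrm{conf}}$ holds iff any two successors have a common successor, and $\sigma_{\mathrm{loop}}$ holds iff every state has a self-loop. Each such check is a finite inspection of the canonical model.

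\emph{Item 1.} Bisimilarity $\HS \bismark \TC$ is already proved in Theorem~\ref{thm:second-separation}, via $R=\{(a,x),(b,y),(c,y)\}$, which contains $(\star_\HS,\star_\TC)=(a,x)$. Functionality of $\TC$ holds because $x$ and $y$ each have exactly one successor, so $\TC \models \sigma_{\mathrm{det}}$. Failure in $\HS$ is witnessed by the assignment $x\mapsto a$, $y\mapsto b$, $z\mapsto c$: the antecedent holds but $b\neq c$.

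\emph{Item 2.} Take $R=\{(a,a),(b,b),(c,c),(d,d_1),(d,d_2)\}$. Verify forth and back at each of the five pairs. Forth was listed explicitly in the text. For back, the only nontrivial cases are at $(b,b)$ and $(c,c)$: the moves $b\to d_1$ and $c\to d_2$ are answered by $b\to d$ and $c\to d$, with $(d,d_1),(d,d_2)\in R$. For confluence in $\DG$, quantify over all pairs of successors, not just those of $a$. At $a$, the pairs from $\{b,c\}$ all have $d$ as a common successor. At $b$, $c$ and $d$ the unique successor is $d$, and $d\to d$ closes the square. Hence $\DG\models\sigma_{\mathrm{conf}}$. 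Failure in $\CT$ is witnessed by $x=a$, $y=b$, $z=c$: their successor sets $\{d_1\}$ and $\{d_2\}$ are disjoint.

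\emph{Item 3.} Take $R=\{(a,x),(a,y)\}$, which was verified in Proposition~\ref{prop:four-level} and again in Mechanism~3. Then $\SL\models\sigma_{\mathrm{loop}}$ since $a\to a$, while $\TC$ fails it at $x$ because $x\to x$ does not hold.

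The "distinct syntactic mechanism" clause is an informal gloss rather than a formal claim. I would discharge it by pointing to the syntactic features recorded for the standard translation in \S\ref{sec:hml-def}: it is equality-free, has linear variables, and has tree-shaped dependencies. Each sequent violates a different one of these properties: an equality in the consequent, a variable $w$ with two $\step$-parents, and $x$ occurring twice in a single atom, respectively. The main obstacle is not any single step. It is making sure the confluence verification in $\DG$ covers every state, including the degenerate case $y=z$, rather than only the root.
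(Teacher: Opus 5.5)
Your proposal is correct and follows the paper's own route: the theorem is assembled from the three witness pairs of \S\ref{sec:mechanisms}, each with an explicit relational bisimulation containing the initial pair, a direct check that the sequent holds on one side and fails on the other, and the syntactic explanations as the informal gloss for the ``distinct mechanism'' clause. The paper checks $\sigma_{\mathrm{conf}}$ only at the root of $\DG$; your check at every state (including the case $y=z$) tightens this without changing the approach.
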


\noindent
All three mechanisms assert that two structural positions are filled by the
same state.  Consequent equality asserts $y = z$ (two targets coincide).
Cyclic sharing asserts that two paths meet (two sources share a target).
Repeated variables assert $x = y$ (source equals target).  Bisimulation can
always duplicate a state into distinct copies, breaking any such
identification constraint.  Diamond-only~$\HML$ avoids all three: each
$\Diamond$ introduces a fresh variable connected to exactly one parent atom,
with no equality, no sharing, and no repetition.

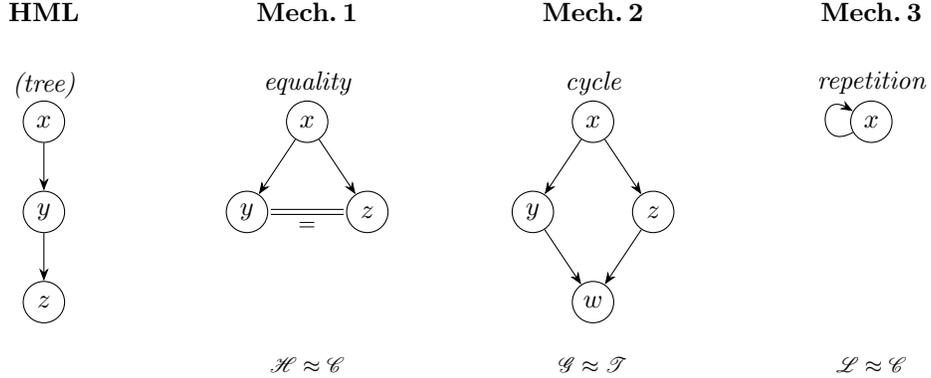
\begin{figure}[ht]
\centering
\begin{tikzpicture}[
    >=Stealth,
    var/.style={circle, draw, minimum size=5.5mm, inner sep=0pt,
                font=\small},
    eqlabel/.style={font=\scriptsize, midway},
    steplabel/.style={font=\scriptsize, midway, right},
    mechlabel/.style={font=\small\bfseries, anchor=south},
    sublabel/.style={font=\small\itshape, anchor=north},
  ]
  \node[mechlabel] at (-5.5, 2.4) {HML};
  \node[sublabel]  at (-5.5, 2.0) {(tree)};
  \node[var] (hx) at (-5.5, 1.2) {$x$};
  \node[var] (hy) at (-5.5, 0)   {$y$};
  \node[var] (hz) at (-5.5,-1.2) {$z$};
  \draw[->] (hx) -- (hy);
  \draw[->] (hy) -- (hz);

  \node[mechlabel] at (-2, 2.4) {Mech.\,1};
  \node[sublabel]  at (-2, 2.0) {equality};
  \node[var] (m1x) at (-2, 1.2)   {$x$};
  \node[var] (m1y) at (-2.8, 0)   {$y$};
  \node[var] (m1z) at (-1.2, 0)   {$z$};
  \draw[->] (m1x) -- (m1y);
  \draw[->] (m1x) -- (m1z);
  \draw[double, double distance=1.5pt, shorten >=1pt, shorten <=1pt]
    (m1y) -- node[below, font=\scriptsize] {$=$} (m1z);

  \node[mechlabel] at (1.8, 2.4) {Mech.\,2};
  \node[sublabel]  at (1.8, 2.0) {cycle};
  \node[var] (m2x) at (1.8, 1.2)  {$x$};
  \node[var] (m2y) at (1.0, 0)    {$y$};
  \node[var] (m2z) at (2.6, 0)    {$z$};
  \node[var] (m2w) at (1.8,-1.2)  {$w$};
  \draw[->] (m2x) -- (m2y);
  \draw[->] (m2x) -- (m2z);
  \draw[->] (m2y) -- (m2w);
  \draw[->] (m2z) -- (m2w);

  \node[mechlabel] at (5.5, 2.4) {Mech.\,3};
  \node[sublabel]  at (5.5, 2.0) {repetition};
  \node[var] (m3x) at (5.5, 1.2)  {$x$};
  \draw[->] (m3x) to [out=210, in=150, looseness=5] (m3x);

  \node[font=\scriptsize, anchor=north] at (-2, -1.8)
    {$\HS \bismark \TC$};
  \node[font=\scriptsize, anchor=north] at (1.8, -1.8)
    {$\DG \bismark \CT$};
  \node[font=\scriptsize, anchor=north] at (5.5, -1.8)
    {$\SL \bismark \TC$};
\end{tikzpicture}
\caption{Variable dependency graphs for the three separating mechanisms.
HML produces tree-shaped dependencies (left); each mechanism breaks the
tree in a distinct way.  Edges represent $\step$ atoms; the double line
in Mechanism~1 represents $y = z$.  Witness pairs are shown below each
mechanism (see Figures~\ref{fig:sl-tc}--\ref{fig:hs-tc}
and the DG/CT diagram above).}
\label{fig:three-mechanisms}
\end{figure}

\subsection{The Geometric van~Benthem Theorem}
\label{sec:gvb-theorem}

\begin{thm}[Geometric van~Benthem]
\label{thm:gvb}
A geometric sequent $\phi \vdash_{\vec{x}} \psi$ over a relational
signature is bisimulation-invariant if and only if $\psi$ is equivalent
(over models of~$\phi$) to a disjunction of formulas in the image of the
standard translation of diamond-only~$\HML$:
\[
\boxed{\quad \phi \vdash_{\vec{x}} \psi \;\text{is bisim-invariant}
  \quad\Longleftrightarrow\quad
  \psi \equiv \textstyle\bigvee_i \ST_x(\varphi_i) \quad}
\]
\end{thm}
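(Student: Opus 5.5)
The plan is to prove the two directions separately. The backward direction reduces to Theorem~\ref{thm:hml-invariant}; the forward direction needs a normal-form argument driven by bounded tree unraveling. Throughout, I read the statement in its intended single-variable form: the context is one state variable~$x$, and $\phi$ is itself either bisimulation-invariant or an $\ST_x$-image. The three mechanisms of Theorem~\ref{thm:three-mechanisms} show that an arbitrary $\phi$ with equality or cyclic sharing in the antecedent would break the ``if'' direction, so I will state this restriction explicitly at the start.

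\emph{Backward direction.} Suppose $\psi \equiv \bigvee_i \ST_x(\varphi_i)$ over models of~$\phi$. Let $R$ be a bisimulation with $(s,t)\in R$, and assume $M \models \phi \vdash_x \psi$. If $N,t \models \phi$, then $M,s\models\phi$ by invariance of~$\phi$, so $M,s \models \ST_x(\varphi_i)$ for some~$i$. Theorem~\ref{thm:hml-invariant} then transfers $\varphi_i$ to~$t$, giving $N,t\models\psi$. There is one subtlety: Definition~\ref{def:bisim-invariant} quantifies over states related by~$R$. I will take $R$ to be total on the reachable parts, or equivalently restrict validity to reachable states, and make that convention explicit.

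\emph{Forward direction.} Every geometric formula $\psi(x)$ is provably a disjunction of formulas $\exists \vec y.\,\bigwedge(\text{atoms})$, that is, of conjunctive queries $q_j$ with equalities and a free root~$x$. The plan has three steps.
\begin{enumerate}
\item For each $q_j$, form its \emph{tree unraveling} $q_j^{\mathrm{tree}}$. This is the acyclic, equality-free, linear query obtained by unfolding the $\step$-atoms of $q_j$ from~$x$ up to depth $d_j$ = number of atoms. It is exactly $\ST_x(\varphi_j)$ for a diamond formula~$\varphi_j$.
\item Show that $q_j \models q_j^{\mathrm{tree}}$, because every homomorphism from the tree into a model factors through the query graph. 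This gives $\psi \models \bigvee_j \ST_x(\varphi_j)$ outright.
\item Prove the converse inclusion over models of $\phi$. Suppose $M,s \models \phi \land \ST_x(\varphi_j)$. Take the depth-$d$ tree unraveling $U$ of $M$ at~$s$; it is bisimilar to $M$ at the root via the projection. Invariance of~$\phi$ gives $U \models \phi$ at the root, and invariance of the sequent gives $U\models\psi$ at the root. Hence some $q_k$ holds in~$U$. Since $U$ is a tree, the witnessing map sends $q_k$ onto a tree, so $q_k^{\mathrm{tree}}$ also holds in~$U$. Bisimilarity then transfers it back to $M,s$.
\end{enumerate}
This argument does not yet give $\psi$ itself at $s$. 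The fix is to redefine the disjuncts: keep only those $\varphi$ whose truth in tree unravelings of $\phi$-models forces~$\psi$. Concretely, let $\Phi$ be the set of diamond formulas $\varphi$ such that every tree model of $\phi\land\ST_x(\varphi)$ satisfies $\psi$ at the root. I will then show $\psi \equiv \bigvee_{\varphi\in\Phi}\ST_x(\varphi)$ over $\phi$-models. The inclusion ``$\Leftarrow$'' goes by unravel, apply $\psi$, and transfer back. The inclusion ``$\Rightarrow$'' uses the trees $q_k^{\mathrm{tree}}$ found in the unraveling of a $\psi$-satisfying point. The resulting disjunction may be infinite, which geometric logic allows.

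\emph{Main obstacle.} The hard step is the claim that a conjunctive query true at the root of a depth-bounded tree unraveling is implied there by a diamond formula. Equalities and joins in $q_k$ collapse in a tree only in specific ways: two successor variables identified, or a join forced to a single branch. I will handle this by showing that every homomorphism image of $q_k$ into a tree is itself a tree-shaped query $q'$ satisfying $q'\models q_k$ on trees. Each such $q'$ is an $\ST_x$-image, and since only finitely many quotients exist, the resulting disjunction stays well-defined. The depth bound $d$ must dominate the query size so that truncation does not create spurious failures; choosing $d$ to exceed the number of atoms of every $q_k$ is what I would try first. For infinitely many disjuncts, I would unravel to infinite depth instead.
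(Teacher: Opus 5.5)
The step that fails is the claim, used in step~3 and in both inclusions of your $\Phi$-argument, that the depth-$d$ unraveling $U$ of $M$ at $s$ is bisimilar to $(M,s)$ via the projection. It is not. Vertices at depth~$d$ have no outgoing edges, so the back condition fails there as soon as $s$ has paths longer than~$d$. This is why the paper claims only $d$-bisimilarity for $\mathcal{T}_d(G,v)$.

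Consequently you cannot apply invariance of the antecedent or of the sequent to~$U$, and making $d$ exceed the atom counts of the $q_k$ does not help. An invariant antecedent such as $\ST_x(\Diamond^{d+1}\top)$ holds at~$s$ but fails at the root of~$U$. So $U$ is not even a tree model of the antecedent, and the definition of~$\Phi$ cannot be invoked. To keep bounded depth you would need the ingredient the paper adds and your plan lacks: an invariant formula of existential depth~$\le d$ is invariant under $d$-bisimilarity (Theorem~\ref{thm:gvb-formalized}, via characteristic formulas). The simpler repair is the one you mention only in passing, and for the wrong reason. Use the full unraveling, whose projection is a bounded morphism and hence a genuine bisimulation. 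This requires invariance over infinite systems, and $\Phi$ must quantify over possibly infinite trees.

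With that change your route closes, and it differs from the paper's.
\begin{itemize}
\item For $\Leftarrow$: the full unraveling at~$s$ is a tree model of the antecedent and of $\ST_x(\chi)$, so it satisfies~$\psi$ at the root by the definition of~$\Phi$. Then $\psi$ returns to~$s$ for free, because the projection is a homomorphism and geometric formulas are preserved by homomorphisms.
\item For $\Rightarrow$: do not use the syntactic unfolding $q_k^{\mathrm{tree}}$, whose membership in~$\Phi$ you never justify. Use instead the finite image~$Q$ of the match of~$q_k$ in the unraveling, closed under ancestors. Then $\chi_Q\in\Phi$ because the canonical structure of~$q_k$ maps into~$Q$, and $M,s\models\ST_x(\chi_Q)$ via the projection.
\item Components of~$q_k$ not connected to~$x$ force root paths of unbounded length, so your ``finitely many quotients'' remark is wrong. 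Infinitary disjunction absorbs this.
\item In step~2, $q_j\models q_j^{\mathrm{tree}}$ holds because the unfolding maps onto the query graph. It is not because maps out of the tree factor through the query graph.
\end{itemize}

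Your $\Rightarrow$ direction transfers~$\psi$ from the single point $(M,s)$ to the unraveling, so invariance must be read pointwise, and you should state this as an explicit hypothesis. Definition~\ref{def:bisim-invariant} and your ``$R$ total'' convention concern global validity, and under that reading the pointwise conclusion is false. For example, $\top\vdash_x\exists y\,\exists z.\,\step(y,z)$ is preserved by total bisimulations. Yet~$\psi$ holds at the halting state of $a\to b$ and fails at an isolated halting state, which is bisimilar to it.

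Compared with the paper, the repaired argument replaces the characteristic-formula and $d$-bisimilarity machinery with a short homomorphic-image argument. The cost is that it leaves finite systems and may produce an infinite disjunction. The paper's bounded-depth route stays with finite trees, at the price of the $d$-bisim-invariance lemma.
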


The forward direction (Theorem~\ref{thm:hml-invariant}) is proved
constructively.  The backward direction---that every
bisimulation-invariant geometric formula is equivalent to one in
$\HML$'s image---is established via bounded tree unraveling
(Section~\ref{sec:tree-unraveling}).  A characteristic formula approach
decomposes the transfer into three cases and resolves all constructively:
the characteristic $\HML$ formula~$\chi_{T,v}$ for tree unravelings
reduces geometric satisfaction to $\HML$ satisfaction via geo-tree
decomposition, and simulation invariance of positive $\HML$ closes the
TRUE-on-tree cases.  The remaining case, where $\phi$ is false on both
graph and tree, follows from the contrapositive of the transfer and
requires no additional geometric content.

This theorem targets an open cell in the preservation-theorem landscape.
Van Benthem's original theorem~\cite{vanbenthem1976modal} identifies modal
logic as the bisimulation-invariant fragment of full first-order logic;
Rosen~\cite{rosen1997modal} extends this to finite structures.
Celani--Jansana~\cite{celani1999priestley} characterize positive modal logic
($\Diamond$, $\Box$, no negation) as the bisimulation-invariant fragment of
positive first-order logic, and Rossman~\cite{rossman2008homomorphism} shows
that existential positive logic is the homomorphism-preserved fragment over
finite structures.  The geometric analogue, that diamond-only~$\HML$ is
the bisimulation-invariant fragment of geometric logic, is the content of
Theorem~\ref{thm:gvb}.

The characterization connects to conjunctive query
theory~\cite{yannakakis1981algorithms}: the tree-shaped fragment corresponds
to the \emph{acyclic conjunctive queries}, where acyclicity ensures
polynomial-time evaluation.  These three mechanisms are exactly the
three ways a conjunctive query can fail to be acyclic with respect to a
binary relation.

There is a structural reason to expect exactly three mechanisms and no
others.  Every geometric formula over a binary relation~$\step$ and one free
variable~$x$ is built from atoms $\step(t_1, t_2)$ and equalities
$t_1 = t_2$, where $t_1, t_2$ range over variables.  For such a formula to
lie outside the image of the standard translation, it must violate at least
one of the three properties characterizing that image: equality-freeness,
linearity of variables, or tree-shaped dependencies.  Violating
equality-freeness means the formula contains $t_1 = t_2$; when both
variables are universally bound this is Mechanism~1 (consequent equality),
and when $t_1 = t_2$ identifies source and target of a step atom this is
Mechanism~3 (repeated variables).  Violating the tree-shaped dependency
condition means the variable dependency graph contains a cycle, which is
Mechanism~2 (cyclic variable sharing).  Violating linearity (a variable
appearing as target in two step atoms) forces either a cycle (reducing to
Mechanism~2) or an equality (reducing to Mechanism~1 or~3).  The gap between
this syntactic observation and the full proof is the
formula-level decomposition: one must show that bisimulation-invariance of a
complex formula forces its equivalence to an~$\HML$ formula, not merely that
individual non-$\HML$ atoms are separated.

\begin{cor}[Mechanism completeness]
\label{cor:mechanism-completeness}
The three mechanisms (consequent equality, cyclic variable sharing, and
repeated variables) exhaust all non-$\HML$ geometric atoms at every
quantifier depth: for each depth~$d$, every non-bisimulation-invariant
depth-$d$ geometric atom over one free variable and a binary relation is
separated by a witness pair exercising one of the three mechanisms.
\end{cor}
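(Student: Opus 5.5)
The plan is to classify geometric atoms and then produce explicit witness pairs. Take a depth-$d$ geometric atom: a primitive-positive formula $\alpha(x) = \exists \vec{y}.\,\bigwedge_k A_k$ over one free variable~$x$, with each $A_k$ either a $\step$-atom or an equality. Since disjunctions of bisimulation-invariant formulas are bisimulation-invariant, atoms are the right unit to work with. Attach to $\alpha$ its \emph{canonical query graph} $Q_\alpha$: vertices are the variables modulo the equalities in~$\alpha$, edges are the $\step$-atoms, and $x$ is the root. Depth~$d$ bounds the longest directed path from the root.

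First I will establish a dichotomy. Either $Q_\alpha$ is a rooted directed tree with all edges pointing away from~$x$, in which case $\alpha$ is literally $\ST_x(\varphi)$ for the obvious $\Diamond$-formula~$\varphi$ and is invariant by Theorem~\ref{thm:hml-invariant}. Or one of the following holds:
\begin{itemize}
\item some equality identifies two distinct syntactic variables (Mechanism~1);
\item $Q_\alpha$ has a loop edge (Mechanism~3);
\item $Q_\alpha$ has a vertex of in-degree $\geq 2$ or an undirected cycle (Mechanism~2).
\end{itemize}
This is the syntactic case analysis sketched before Theorem~\ref{thm:gvb}, made precise using the observation that violating linearity forces either a shared target or an equality. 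A further case is an edge pointing back toward the root, or a component not connected to~$x$. I will fold the back-edge case into Mechanism~2, since it produces a vertex with two parents in the undirected tree sense. A component not connected to~$x$ is a closed existential conjunct; it is bisimulation-invariant only in degenerate cases, and I will separate it using Mechanism~2 or~3 applied to that component.

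The second step handles each non-tree $\alpha$ that is not bisimulation-invariant. I will take $M = Q_\alpha$ itself, made total by adding self-loops at sinks as done for~$\DG$, and rooted at~$x$. Then $M \models \alpha$ at the root via the identity assignment. I will take $N$ to be its bounded tree unraveling to depth~$d$ from Section~\ref{sec:tree-unraveling}, again with self-loops added at the leaves. The canonical projection relation is a relational bisimulation, checked exactly as for~$R$ between~$\DG$ and~$\CT$. The key claim is $N \not\models \alpha$ at the root. Any assignment of $\alpha$ into $N$ composed with the projection to~$M$ gives a homomorphism $Q_\alpha \to M$. Because $N$ is a tree (up to leaf loops), the image of a connected non-tree pattern must collapse an identification, which is impossible in the tree. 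This is exactly where the mechanism is exercised: the pair $(M,N)$ generalises the pairs $(\TC,\HS)$, $(\DG,\CT)$ and $(\SL,\TC)$ of Theorem~\ref{thm:three-mechanisms}.

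The main obstacle is the leaf self-loops. They are themselves repeated-variable structure, so an $\alpha$ containing a loop, or a cycle realisable at depth $\geq d$, might map into the leaf loops of~$N$. To avoid this I will instead close the leaves with a two-cycle gadget, as in~$\TC$, and use the depth bound~$d$ so that every $\step$-path of $\alpha$ of length at most~$d$ from~$x$ stays in the tree portion. Loops and identifications are then only realisable in the gadget, which lies beyond depth~$d$. If that fails, I will fall back on the characteristic formula $\chi_{T,v}$ from Section~\ref{sec:tree-unraveling}. That fallback reduces satisfaction of $\alpha$ on~$N$ to satisfaction of a diamond formula, so that either $\alpha$ is equivalent to it (and hence is HML) or the pair above separates it, which yields the corollary.
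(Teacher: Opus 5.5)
Your route differs from the paper's. The paper deduces the corollary in one line from Theorem~\ref{thm:gvb}: invariant formulas are $\HML$-equivalent, so non-$\HML$ atoms are non-invariant. It then cites Theorems~\ref{thm:depth0}--\ref{thm:depth2} as confirmation. A direct construction from the query graph is a reasonable alternative, but your key step $N \not\models \alpha$ has a genuine gap.

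\emph{The justification proves too much.} You argue that a connected non-tree pattern cannot land in a tree without an impossible identification. This ignores that homomorphisms need not be injective. For example, $\exists y,z,w.\,(\step(x,y)\land\step(x,z)\land\step(y,w)\land\step(z,w))$ maps root-preservingly into the path~$P_2$ by sending $z$ to the image of~$y$. Your argument never uses the hypothesis that $\alpha$ is not bisimulation-invariant, so it would wrongly exclude such foldings too.

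\emph{The construction fails as specified.} Totalizing $M$, truncating the unraveling, and closing leaves with self-loops or a two-cycle gadget adds exactly the structure that closed conjuncts of $\alpha$ can map into.
\begin{enumerate}
\item The atom $\exists y\,\exists z.\,(\step(y,z)\land\step(z,y))$ is not invariant: compare a lone sink~$x$ with $x$ plus a disjoint two-cycle, related by $\{(x,x)\}$. Yet it holds in your~$N$ under either leaf closure.
\item The atom $\exists y\,\exists z.\,\step(y,z)$ holds in \emph{every} total system, so no pair of total systems can separate it. The same sink example shows it is non-invariant. It has no equality, loop, or shared target, so ``Mechanism~2 or~3 applied to that component'' cannot handle it.
\item The projection relation onto a truncated unraveling with closed leaves is in general not a bisimulation. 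Your $M$ and $N$ are bisimilar only because any two total unlabeled systems are bisimilar via the full relation. That is exactly why totalization destroys separating power.
\end{enumerate}

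The repair is to drop totalization, truncation, and gadgets. Take $M = Q_\alpha$ rooted at~$x$, and let $N$ be its full unraveling from~$x$ (infinite if a cycle is reachable from~$x$). The projection is then a genuine bisimulation, and $M \models \alpha$ via the identity assignment. Suppose some root-preserving $h \colon Q_\alpha \to N$ existed, and let $T$ be the finite ancestor-closed subtree of~$N$ containing $h(Q_\alpha)$. Then $h \colon Q_\alpha \to T$ and $\pi|_T \colon T \to Q_\alpha$ are root-preserving homomorphisms. Hence $\alpha$ is equivalent to $\ST_x$ of the diamond formula of~$T$, and is invariant by Theorem~\ref{thm:hml-invariant}, a contradiction. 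This gives the separation without appeal to Theorem~\ref{thm:gvb}, at the price of possibly infinite witnesses. Your case analysis must also treat disconnection from~$x$ as a case in its own right, rather than reducing it to Mechanisms~2 or~3.
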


This follows from Theorem~\ref{thm:gvb}: if every bisimulation-invariant
geometric formula decomposes into~$\HML$, then every non-$\HML$ atom must
be non-bisimulation-invariant.  The converse does not hold; the corollary
asserts separation of individual atoms, not the decomposition of arbitrary
formulas.  Independent verification through
depth~2 (Theorems~\ref{thm:depth0}--\ref{thm:depth2}) provides
constructive witnesses.

\subsection{Confirmatory Depth-Bounded Instances}
\label{sec:bounded}

The following depth-bounded instances confirm Theorem~\ref{thm:gvb}
by exhaustive verification at low depths, providing independent
evidence for the general result.  Define the \emph{diamond
depth} of an~$\HML$ formula by
$\mathrm{depth}(\top) = \mathrm{depth}(\bot) = 0$,
$\mathrm{depth}(\varphi_1 \circ \varphi_2) =
\max(\mathrm{depth}(\varphi_1), \mathrm{depth}(\varphi_2))$ for
$\circ \in \{\land, \lor\}$, and
$\mathrm{depth}(\Diamond\varphi) = \mathrm{depth}(\varphi) + 1$.

At each depth, the geometric atoms over one free variable~$x$ and the
binary relation~$\step$ are finitely enumerable.  We show that at depths~0
and~1, every non-$\HML$ atom is \emph{not} bisimulation-invariant, while
every~$\HML$ atom \emph{is}, confirming the theorem at these depths.

\subsubsection{Depth 0}

\begin{thm}[Depth-0 van~Benthem]
\label{thm:depth0}
Every depth-$0$~$\HML$ formula is constant ($\top$ or~$\bot$ at every
state of every system), and $\step(x, x)$ is not bisimulation-invariant.
\end{thm}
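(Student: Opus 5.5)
The statement has two parts, and I will prove them separately.

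\emph{Depth-0 formulas are constant.} I will argue by structural induction on $\HML$ formulas of diamond depth~$0$. By the definition of depth, such a formula cannot contain $\Diamond$. Indeed, $\mathrm{depth}(\Diamond\varphi)=\mathrm{depth}(\varphi)+1\geq 1$, and the depth of a conjunction or disjunction is the maximum of the depths of its parts. So every subformula also has depth~$0$, and the formula is built from $\top$ and $\bot$ using only $\land$ and $\lor$. I will show by induction that each such formula is semantically equivalent to $\top$ or to $\bot$, uniformly over all systems and states:
\begin{itemize}
\item the base cases $\top$ and $\bot$ are immediate from Definition~\ref{def:hml};
\item for $\varphi_1\land\varphi_2$ and $\varphi_1\lor\varphi_2$, the inductive hypothesis makes each $\varphi_i$ constant, and the Boolean truth tables then give a constant value.
\end{itemize}
The only point needing care is stating the induction over the syntax of depth-$0$ formulas. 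This is handled by noting that the grammar restricted to depth~$0$ has no $\Diamond$-production.

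\emph{$\step(x,x)$ is not bisimulation-invariant.} I will reuse Mechanism~3, i.e.\ item~(3) of Theorem~\ref{thm:three-mechanisms}. The relation $R=\{(a,x),(a,y)\}$ is a relational bisimulation between $\SL$ and $\TC$ with $(\star_\SL,\star_\TC)=(a,x)\in R$; this is verified in Proposition~\ref{prop:four-level} and in Mechanism~3. On the $\SL$ side, $\step(a,a)$ holds, so $\SL$ satisfies the sequent $\top\vdash_x\step(x,x)$. On the $\TC$ side, the only transitions are $x\to y$ and $y\to x$ with $x\neq y$, so neither $x$ nor $y$ has a self-loop, and $\TC$ fails the sequent. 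By Definition~\ref{def:bisim-invariant}, $\sigma_{\mathrm{loop}}$ is therefore not bisimulation-invariant.

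I will also give the pointwise reading, since it ties the two halves together. The related pair $(a,x)\in R$ disagrees on the one-variable formula $\step(x,x)$, which is true at $a$ and false at $x$. By Theorem~\ref{thm:hml-invariant}, any formula equivalent to the standard translation of an $\HML$ formula must agree on related pairs. Hence $\step(x,x)$ is not equivalent to any $\ST_x(\varphi)$; in particular it is not equivalent to $\top$ or $\bot$, the only depth-$0$ options.

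I do not anticipate a real obstacle. The one subtlety is the quantifier-free status of $\step(x,x)$: it has depth~$0$ in the count of quantifier nesting, but it is not an $\HML$ formula. I will make explicit that the comparison uses depth-$0$ geometric atoms over one free variable. These are $\top$, $\bot$, $x=x$ (equivalent to $\top$) and $\step(x,x)$, so this last atom is the only non-constant candidate, and that gives exactly the depth-$0$ confirmation of Theorem~\ref{thm:gvb}.
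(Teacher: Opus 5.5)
Your proposal is correct and matches the paper's proof. Constancy is shown by structural induction on depth-$0$ formulas, with the $\Diamond$ case excluded because $\mathrm{depth}(\Diamond\varphi)\geq 1$, and non-invariance of $\step(x,x)$ uses the bisimulation $R=\{(a,x),(a,y)\}$ between $\SL$ and $\TC$, where $(a,x)\in R$ but the self-loop holds at $a$ and fails at $x$; your added sequent-level reading via Definition~\ref{def:bisim-invariant} is just a harmless elaboration of that same witness.
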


\begin{proof}
We prove constancy by structural induction on the formula.  The formula
$\top$ is constantly true at every state of every system.  The formula $\bot$
is constantly false.  If $\varphi$ and $\psi$ are both constant, their
conjunction $\varphi \land \psi$ is constant: if both are constantly true,
the conjunction is constantly true; if either is constantly false, the
conjunction is constantly false.  Disjunction is analogous.  The diamond
case is vacuous: a formula of depth~0 cannot have a diamond as its outermost
connective, since $\mathrm{depth}(\Diamond\varphi) = \mathrm{depth}(\varphi)
+ 1 \geq 1$.

For the separation, the self-loop system~$\SL$ and the two-cycle~$\TC$
provide the witness.  The bisimulation $R = \{(a, x), (a, y)\}$ was
verified in Section~\ref{sec:mechanisms} (Mechanism~3).  In~$\SL$, the
state~$a$ satisfies $\step(a, a)$.  In~$\TC$, the state~$x$ does not
satisfy $\step(x, x)$ since $x$'s only transition is $x \to y$ with
$y \neq x$.  Since $(a, x) \in R$, the self-loop property $\step(x,x)$ is
not bisimulation-invariant.
\end{proof}

Since $\step(x,x)$ is the only non-trivial depth-0 geometric atom over one
free variable and the binary relation~$\step$ (the only conjunction of atoms
from $\{\step(x,x)\}$ beyond the empty conjunction~$\top$), and since every
depth-0~$\HML$ formula is constant, the bisimulation-invariant depth-0
geometric formulas are exactly the constants $\top$ and $\bot$, which are
exactly the depth-0~$\HML$ formulas.

\subsubsection{Depth 1}

At depth~1, the only~$\HML$ atom beyond constants is
$\Diamond\top = \exists y.\,\step(x, y)$, asserting that~$x$ has a
successor.  The depth-1 geometric atoms over $\{x\}$ and~$\step$ include
all properties of the form $\exists y.\,\alpha(x, y)$ where $\alpha$ is a
conjunction of atoms from
$\{\step(x, y),\allowbreak \step(y, x),\allowbreak \step(x, x),\allowbreak \step(y, y),\allowbreak x = y\}$.  Beyond
$\Diamond\top = \exists y.\,\step(x, y)$, each such atom involves one of: a
self-loop ($\step(x,x)$ or $\step(y,y)$), a back-edge ($\step(y,x)$), or
an equality ($x = y$).

To separate the atoms involving back-edges, we pair the path~$\PT$
from Section~\ref{sec:first-sep} (states $\{x, y\}$, transitions
$x \to y$ and $y \to y$) with a new system.  The
\emph{back-edge graph}~$\BG$ has states
$\{\mathsf{rt}, \mathsf{lp}\}$, transitions
$\mathsf{rt} \to \mathsf{lp}$, $\mathsf{lp} \to \mathsf{lp}$, and
$\mathsf{lp} \to \mathsf{rt}$, and initial state~$\mathsf{rt}$: the
path~$\PT$ with one added back-edge.

\begin{figure}[ht]
\centering
\begin{tikzpicture}[
    >=Stealth,
    state/.style={circle, draw, minimum size=8mm, inner sep=1pt, font=\small},
    bisim/.style={dashed, shorten >=2pt, shorten <=2pt},
  ]
  \node at (-3.5, 1.2) {$\PT$};
  \node[state] (x) at (-3.5, 0) {$x$};
  \node[state] (y) at (-3.5, -1.5) {$y$};

  \draw[->] (x) -- (y);
  \draw[->] (y) to [out=210, in=250, looseness=5] (y);

  \node at (3.5, 1.2) {$\BG$};
  \node[state] (root) at (3.5, 0) {rt};
  \node[state] (loop) at (3.5, -1.5) {lp};

  \draw[->] (root) -- (loop);
  \draw[->] (loop) to [out=210, in=250, looseness=5] (loop);
  \draw[->, bend left=20] (loop) to (root);

  \draw[bisim] (x) -- node[above, yshift=10pt, font=\footnotesize] {$R$} (root);
  \draw[bisim] (y) -- (loop);
  \draw[bisim] (y) to[out=30, in=150] (root);
\end{tikzpicture}
\caption{The path~$\PT$ and back-edge graph~$\BG$ with
bisimulation $R = \{(x, \mathsf{rt}), (y, \mathsf{lp}), (y, \mathsf{rt})\}$.
The back-edge $\mathsf{lp} \to \mathsf{rt}$ in~$\BG$ is absorbed by the
self-loop $y \to y$ in~$\PT$.}
\label{fig:pt-bg}
\end{figure}

The relation
$R = \{(x, \mathsf{rt}), (y, \mathsf{lp}),
(y, \mathsf{rt})\}$ is a relational bisimulation
between~$\PT$ and~$\BG$.  The key
pair is $(y, \mathsf{lp})$, where the back-edge
$\mathsf{lp} \to \mathsf{rt}$ must be absorbed by the
self-loop~$y \to y$.

\smallskip\noindent\emph{Forth.}\enspace
At $(x, \mathsf{rt})$:
$x \to y$ is matched by
$\mathsf{rt} \to \mathsf{lp}$ with
$(y, \mathsf{lp}) \in R$.
At $(y, \mathsf{lp})$:
$y \to y$ by
$\mathsf{lp} \to \mathsf{lp}$ with
$(y, \mathsf{lp}) \in R$.
At $(y, \mathsf{rt})$:
$y \to y$ by
$\mathsf{rt} \to \mathsf{lp}$ with
$(y, \mathsf{lp}) \in R$.

\smallskip\noindent\emph{Back.}\enspace
At $(x, \mathsf{rt})$:
$\mathsf{rt} \to \mathsf{lp}$ is matched by
$x \to y$ with
$(y, \mathsf{lp}) \in R$.
At $(y, \mathsf{lp})$:
$\mathsf{lp} \to \mathsf{lp}$ by
$y \to y$ with
$(y, \mathsf{lp}) \in R$;
$\mathsf{lp} \to \mathsf{rt}$ by
$y \to y$ with
$(y, \mathsf{rt}) \in R$.
At $(y, \mathsf{rt})$:
$\mathsf{rt} \to \mathsf{lp}$ by
$y \to y$ with
$(y, \mathsf{lp}) \in R$.

\begin{thm}[Depth-1 van~Benthem]
\label{thm:depth1}
Among depth-$1$ geometric atoms, the only bisimulation-invariant one is
$\Diamond\top$.  Specifically, $\exists y.\,\step(x,y)$ is
bisimulation-invariant, and the following five properties are not
bisimulation-invariant: the self-loop property $\step(x,x)$, the
has-predecessor property $\exists y.\,\step(y,x)$, the
exists-self-loop property $\exists y.\,\step(y,y)$, the mutual-neighbor
property $\exists y.\,(\step(x,y) \land \step(y,x))$, and the
successor-with-self-loop property
$\exists y.\,(\step(x,y) \land \step(y,y))$.
\end{thm}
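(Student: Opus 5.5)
The proof has two parts: a positive invariance claim and five negative separations. Each separation comes from a relational bisimulation already verified in the paper, together with a related pair of states at which the property holds on one side and fails on the other.

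For the positive part, observe that $\exists y.\,\step(x,y) = \ST_x(\Diamond\top)$. Theorem~\ref{thm:hml-invariant} then gives, for any relational bisimulation $R$ and any $(s,t)\in R$, that $s$ satisfies it iff $t$ does. No further work is needed.

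For the negative part, three witness pairs suffice.

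\emph{Self-loop $\step(x,x)$.} This is exactly Theorem~\ref{thm:depth0}: use $\SL$ and $\TC$ at the pair $(a,x)$.

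\emph{Has-predecessor $\exists y.\,\step(y,x)$.} Use $\PT$ and $\BG$ with $R=\{(x,\mathsf{rt}),(y,\mathsf{lp}),(y,\mathsf{rt})\}$, whose back-and-forth conditions were checked above. At the pair $(x,\mathsf{rt})$, the state $x$ has no predecessor in $\PT$ (the only transitions are $x\to y$ and $y\to y$). The state $\mathsf{rt}$ does have one, via $\mathsf{lp}\to\mathsf{rt}$.

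\emph{Mutual-neighbor $\exists y.\,(\step(x,y)\land\step(y,x))$.} The same pair $(x,\mathsf{rt})$ works. In $\BG$ the witness is $y=\mathsf{lp}$, since $\mathsf{rt}\to\mathsf{lp}$ and $\mathsf{lp}\to\mathsf{rt}$. In $\PT$ the only successor of $x$ is $y$, and $y\not\to x$.

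\emph{Successor-with-self-loop $\exists y.\,(\step(x,y)\land\step(y,y))$.} Use $\SL$ and $\TC$ at $(a,x)$. In $\SL$ take $y=a$. In $\TC$ the unique successor of $x$ is $y$, and $y$ has no self-loop.

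\emph{Exists-self-loop $\exists y.\,\step(y,y)$.} Use $\SL$ and $\TC$ again, at $(a,x)$. This holds in $\SL$ (witness $a$) and fails in $\TC$, which has no self-loops. Here the existential is unguarded, so its truth is a global property of the system rather than a local one. I would state this explicitly, so that the related pair $(a,x)$ serves as the witness in the sense of Definition~\ref{def:bisim-invariant}.

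To justify the word ``only'' in the statement, I would note that every remaining conjunction from the atom list either contains $x=y$, which collapses it to one of the listed depth-0 or depth-1 properties, or contains one of the five refuted properties as a conjunct. In the latter case the same witness pair refutes the larger conjunction, and I would check this case by case.

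The main obstacle is exactly this exhaustive enumeration: one must confirm that each remaining conjunction is refuted by one of the three witness pairs. Each individual check is routine, but the enumeration has to be complete.
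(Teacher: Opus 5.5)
Your proposal is correct and follows the paper's proof essentially verbatim. It uses the same three witness pairs at the same related states, Theorem~\ref{thm:hml-invariant} for $\Diamond\top$, and the same coverage enumeration the paper gives in Remark~\ref{rem:coverage}. One caution for that enumeration: adding conjuncts can destroy satisfaction on the holding side, so the pair witnessing a contained property need not refute the larger conjunction. For example, $\exists y.\,(\step(y,x)\land\step(x,x))$ fails at $\mathsf{rt}$ in $\BG$. The clean way to organize the enumeration is to note that $\SL$ satisfies every such conjunction while $\TC$ has no self-loops, which leaves only the two listed back-edge atoms for the $\PT$/$\BG$ pair.
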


\begin{proof}
The has-successor property $\exists y.\,\step(x,y)$ is the standard
translation of~$\Diamond\top$, so it is bisimulation-invariant by
Theorem~\ref{thm:hml-invariant}.  Concretely: if $(s,t) \in R$ for a
relational bisimulation~$R$ and $s$ has a successor~$s'$, the forth
condition gives $t'$ with $t \to t'$ and $(s',t') \in R$, so $t$ has a
successor.  The back condition gives the reverse.

The self-loop property $\step(x,x)$ is not bisimulation-invariant: the
bisimulation $R = \{(a,x), (a,y)\}$ between~$\SL$ and~$\TC$ (verified in
Section~\ref{sec:mechanisms}) relates $a$ to~$x$, and $\SL \models
\step(a,a)$ while $\TC \not\models \step(x,x)$.

The has-predecessor property $\exists y.\,\step(y,x)$ is not
bisimulation-invariant: the bisimulation~$R$ between~$\PT$ and~$\BG$
relates $x$ to~$\mathsf{rt}$.  In~$\PT$, the state~$x$ has no
predecessor (no transition has $x$ as target: the only transitions are
$x \to y$ and $y \to y$).  In~$\BG$, the state~$\mathsf{rt}$
has predecessor~$\mathsf{lp}$ (via the transition
$\mathsf{lp} \to \mathsf{rt}$).

The exists-self-loop property $\exists y.\,\step(y,y)$ is not
bisimulation-invariant: the bisimulation between~$\SL$ and~$\TC$ relates
$a$ to~$x$.  In~$\SL$, the witness $y = a$ satisfies $\step(a,a)$.
In~$\TC$, neither~$x$ nor~$y$ has a self-loop, so no witness exists.

The mutual-neighbor property
$\exists y.\,(\step(x,y) \land \step(y,x))$ is not
bisimulation-invariant: the bisimulation between~$\PT$ and~$\BG$ relates
$x$ to~$\mathsf{rt}$.  In~$\BG$, the state~$\mathsf{rt}$
has mutual neighbor~$\mathsf{lp}$: $\mathsf{rt} \to \mathsf{lp}$ and
$\mathsf{lp} \to \mathsf{rt}$.  In~$\PT$, the state~$x$
has only one successor~$y$, and $y$ does not
transition back to~$x$ (its only transition is the self-loop
$y \to y$).

The successor-with-self-loop property
$\exists y.\,(\step(x,y) \land \step(y,y))$ is not
bisimulation-invariant: the bisimulation between~$\SL$ and~$\TC$ relates
$a$ to~$x$.  In~$\SL$, the witness $y = a$ satisfies $\step(a,a) \land
\step(a,a)$.  In~$\TC$, the state~$x$ has unique successor~$y$, and $y$
does not have a self-loop (its only transition is $y \to x$).
\end{proof}

\noindent
In particular, $\Diamond\top$ is the \emph{unique} bisimulation-invariant
depth-1 atom: every depth-1 geometric atom is bisimulation-invariant if and
only if it is equivalent to~$\Diamond\top$.

\begin{rem}[Depth-1 coverage]
\label{rem:coverage}
The coverage argument is as follows.  Every depth-1 geometric atom
$\exists y.\,\alpha(x, y)$ where $\alpha$ is not simply $\step(x, y)$ must
include at least one additional conjunct from
$\{\step(x,x),\allowbreak \step(y,y),\allowbreak \step(y,x),\allowbreak x = y\}$.
The atoms involving $\step(x,x)$, $\step(y,y)$, or
$x = y$ are all separated by the self-loop/two-cycle witness pair (since
$x = y$ reduces $\exists y.\,(\step(x,y) \land x = y)$ to $\step(x,x)$,
which is already covered).  The atoms involving $\step(y,x)$ are separated
by the path/back-edge witness pair.  Thus the five atoms above exhaust all
non-$\HML$ depth-1 phenomena.
\end{rem}

\subsubsection{Depth 2}

At depth~2, the geometric atoms over one free variable~$x$ and the binary
relation~$\step$ have the form $\exists y.\,\exists z.\,\alpha(x,y,z)$
or $\step(x,x) \land \exists y.\,\beta(x,y)$, where $\alpha$ and~$\beta$
are conjunctions of step-atoms and equalities.  The only depth-2~$\HML$
formula (up to equivalence) beyond constants and~$\Diamond\top$ is
$\Diamond\Diamond\top = \exists y.\,(\step(x,y) \land \exists z.\,\step(y,z))$,
asserting that~$x$ has a successor that itself has a successor.

\begin{thm}[Depth-2 van~Benthem]
\label{thm:depth2}
Among depth-$2$ geometric atoms, the only bisimulation-invariant one is
$\Diamond\Diamond\top$.  Specifically, seven non-$\HML$ atoms are not
bisimulation-invariant:
\begin{enumerate}
\item $\exists y.\,(\step(x,y) \land \step(y,y))$
  \hfill (successor has self-loop)
\item $\exists y.\,\exists z.\,(\step(x,y) \land \step(y,z) \land \step(z,y))$
  \hfill (successor on 2-cycle)
\item $\exists y.\,\exists z.\,(\step(x,y) \land \step(y,z) \land \step(z,z))$
  \hfill (successor reaches self-loop)
\item $\exists y.\,\exists z.\,(\step(x,y) \land \step(y,z) \land \step(x,z))$
  \hfill (successor has co-successor)
\item $\step(x,x) \land \exists y.\,\step(x,y)$
  \hfill (self-loop and has successor)
\item $\step(x,x) \land \exists y.\,(\step(x,y) \land \exists z.\,\step(y,z))$
  \hfill (self-loop and successor has successor)
\item $\exists y.\,\exists z.\,(\step(x,y) \land \step(y,z) \land \step(z,x))$
  \hfill (successor reaches back)
\end{enumerate}
\end{thm}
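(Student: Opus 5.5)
The plan is to handle the positive and negative halves separately, as in Theorem~\ref{thm:depth1}. For the positive half, note that $\exists y.\,(\step(x,y) \land \exists z.\,\step(y,z))$ is literally $\ST_x(\Diamond\Diamond\top)$, so its bisimulation invariance is an instance of Theorem~\ref{thm:hml-invariant}. If desired, I will also spell it out concretely: two applications of the forth condition (resp.\ back condition) transport a two-step path from $s$ to a two-step path from $t$. For the negative half, I exhibit for each of the seven atoms a relationally bisimilar pair with related states on which the atom takes different truth values.

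The self-loop/two-cycle pair $\SL,\TC$ with $R = \{(a,x),(a,y)\}$ (verified in \S\ref{sec:mechanisms}) relating $a$ to $x$ will do most of the work. In~$\SL$ every variable can be instantiated to~$a$, and every step atom $\step(a,a)$ holds, so all seven atoms are true at~$a$. In~$\TC$, every step goes between $x$ and $y$ and changes the state. Hence any conjunction of step atoms that forces a self-loop ($\step(z,z)$, $\step(x,x)$), or an odd closed walk, is false at~$x$. Concretely, I will check the following in~$\TC$ at~$x$:
\begin{itemize}
\item items~1, 3, 5, 6 fail because $\TC$ has no self-loops;
\item item~4 fails because $y = $ the successor of~$x$ and $z = $ the successor of~$y$ give $z = x$, and $\step(x,x)$ is false;
\item item~7 fails similarly: $z = x$ forces $\step(x,x)$.
\end{itemize}
Each of these is a short forced computation, since $\TC$ is deterministic.

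The one atom this pair cannot separate is item~2, ``successor on a 2-cycle''. It holds at $x$ in $\TC$ itself, via $y$, $z = x$. The path/back-edge pair $\PT,\BG$ also fails, because $\PT$'s self-loop at~$y$ supplies $z = y$. I expect this to be the only real obstacle, and the fix is to introduce the three-cycle $u \to v \to w \to u$. The total relation $\{a\} \times \{u,v,w\}$ is a bisimulation between~$\SL$ and the three-cycle by the argument of Remark~\ref{rem:func-vs-rel}: $\SL$ answers every move with its self-loop, and every state of the three-cycle has a successor. At~$a$ in~$\SL$, item~2 holds with $y = z = a$. At~$u$ in the three-cycle, the forced choices $y = v$, $z = w$ would require $\step(w,v)$, which fails.

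Finally, I will add a coverage remark mirroring Remark~\ref{rem:coverage}. Any depth-2 atom other than $\Diamond\Diamond\top$ (and the lower-depth $\Diamond\top$) contains an extra conjunct that either creates a self-loop, a back-edge, a join, or an equality. Equalities are substituted away, reducing to a listed atom or to a depth-$\le 1$ atom already separated by Theorem~\ref{thm:depth1}. Since the witnesses above make every listed atom true on the self-loop side and false on the other side, they also separate any conjunctive strengthening of them.
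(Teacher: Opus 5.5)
Your proof is correct and follows the paper's argument. $\Diamond\Diamond\top$ is invariant as $\ST_x(\Diamond\Diamond\top)$ by Theorem~\ref{thm:hml-invariant}, and the $\SL$/$\TC$ pair separates atoms (1) and (3)--(7); your forced computations in the deterministic $\TC$ spell out what the paper only sketches. The one difference is atom~(2). The paper separates it with the cycle-entry system $\TC_e$ versus the stretched-entry system $\SE$ under $R = \{(a,a)\} \cup (\{b,c\} \times \{b,c,d\})$. Your $\SL$ versus three-cycle pair works equally well, and it keeps $\SL$ on the true side for all seven atoms, which is what makes your observation that conjunctive strengthenings are also separated go through.
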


\begin{proof}
The has-successor-has-successor property $\Diamond\Diamond\top$ is the
standard translation of $\Diamond\Diamond\top$ in~$\HML$, so it is
bisimulation-invariant by Theorem~\ref{thm:hml-invariant}.

For the separation, two witness pairs suffice.  The self-loop~$\SL$ and
two-cycle~$\TC$ pair (from Section~\ref{sec:mechanisms}) separates atoms
(1) and (3)--(7): the self-loop has $\step(a,a)$ while the two-cycle has
no self-loops, and the branching structure of the self-loop ($a$ reaches
only~$a$) differs from the two-cycle ($x$ reaches $y$ and~$x$).

Atom~(2) requires a new witness pair.  The \emph{cycle-entry
system}~$\TC_e$ has states $\{a, b, c\}$ with transitions $a \to b$,
$b \to c$, $c \to b$, and initial state~$a$: state~$a$ leads into the
2-cycle $b \leftrightarrow c$.  The \emph{stretched-entry
system}~$\SE$ has states $\{a, b, c, d\}$ with transitions $a \to b$,
$b \to c$, $c \to d$, $d \to c$, and initial state~$a$: state~$a$
leads through an intermediate path into the 2-cycle
$c \leftrightarrow d$.

The relation
$R = \{(a,a)\} \cup (\{b,c\} \times \{b,c,d\})$
is a relational bisimulation between~$\TC_e$ and~$\SE$.
\emph{Forth:} At~$(a,a)$: $a \to b$ matched by $a \to b$ with
$(b,b) \in R$.  For any $(s,t)$ with $s \in \{b,c\}$ and
$t \in \{b,c,d\}$: the unique transition from~$s$ (either $b \to c$
or $c \to b$) is matched by a transition from~$t$---namely $b \to c$,
$c \to d$, or $d \to c$---and the target pair lies
in~$\{b,c\} \times \{b,c,d\}$.
\emph{Back:} At~$(a,a)$: $a \to b$ in~$\SE$ matched by $a \to b$ with
$(b,b) \in R$.  For any $(s,t)$ with $s \in \{b,c\}$ and
$t \in \{b,c,d\}$: the unique transition from~$t$ (either $b \to c$,
$c \to d$, or $d \to c$) is matched by $b \to c$ or $c \to b$
from~$s$, and the target pair again lies in~$\{b,c\} \times \{b,c,d\}$.

In~$\TC_e$, starting from~$a$, the successor~$b$ has successor~$c$ with
$c \to b$, so atom~(2) is satisfied.  In~$\SE$, starting from~$a$, the
successor~$b$ has unique successor~$c$, but $c$'s only successor is~$d
\neq b$, so atom~(2) is not satisfied.
\end{proof}

\begin{figure}[ht]
\centering
\begin{tikzpicture}[
    >=Stealth,
    state/.style={circle, draw, minimum size=8mm, inner sep=1pt, font=\small},
    bisim/.style={dashed, shorten >=2pt, shorten <=2pt},
  ]
  \node at (-3.5, 1.2) {$\TC_e$};
  \node[state] (a1) at (-3.5, 0) {$a$};
  \node[state] (b1) at (-3.5, -1.5) {$b$};
  \node[state] (c1) at (-3.5, -3.0) {$c$};

  \draw[->] (a1) -- (b1);
  \draw[->, bend left=20] (b1) to (c1);
  \draw[->, bend left=20] (c1) to (b1);

  \node at (3.5, 1.2) {$\SE$};
  \node[state] (a2) at (3.5, 0) {$a$};
  \node[state] (b2) at (3.5, -1.5) {$b$};
  \node[state] (c2) at (3.5, -3.0) {$c$};
  \node[state] (d2) at (3.5, -4.5) {$d$};

  \draw[->] (a2) -- (b2);
  \draw[->] (b2) -- (c2);
  \draw[->, bend left=20] (c2) to (d2);
  \draw[->, bend left=20] (d2) to (c2);

  \draw[bisim] (a1) -- node[above, font=\footnotesize] {$R$} (a2);
  \draw[bisim] (b1) -- (b2);
  \draw[bisim] (c1) -- (c2);
  \draw[bisim] (b1) to[out=-15, in=195] (c2);
  \draw[bisim] (c1) to[out=15, in=165] (b2);
  \draw[bisim] (b1) to[out=-25, in=165] (d2);
  \draw[bisim] (c1) to[out=-15, in=195] (d2);
\end{tikzpicture}
\caption{The cycle-entry system~$\TC_e$ and stretched-entry system~$\SE$
with bisimulation $R = \{(a,a)\} \cup (\{b,c\} \times \{b,c,d\})$.
Both systems are bisimilar, but atom~(2) $\exists y.\,(\step(x,y) \wedge
\exists z.\,(\step(y,z) \wedge \step(z,y)))$ holds in~$\TC_e$ and fails
in~$\SE$.}
\label{fig:tce-se}
\end{figure}
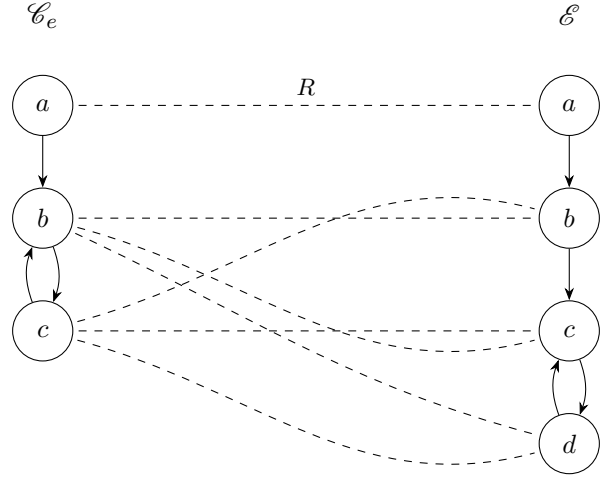

\noindent
The Lean formalization enumerates all structurally distinct depth-2 atoms
(eight total, after collapsing atoms that factor into depth-0/1 properties
or are trivially implied by others) and verifies each one; we list the
seven non-$\HML$ cases above.  In particular, $\Diamond\Diamond\top$ is
the \emph{unique} bisimulation-invariant depth-2 atom.

\begin{rem}[Depth-2 mechanisms]
\label{rem:depth2-mechanisms}
The depth-2 analysis confirms that Mechanisms~2 and~3 are both active
at this depth (Mechanism~1---explicit consequent equality---does not
arise among depth-2 step-atoms).
The self-loop/two-cycle pair exercises Mechanism~3 (repeated
variables, via atoms~(1), (3), and~(5)--(6): each contains $\step(y,y)$,
$\step(z,z)$, or $\step(x,x)$, a variable repeated in both positions of
a step atom).  The cycle-entry/stretched-entry pair exercises
Mechanism~2 (cyclic variable sharing, via the back-edge $z \to y$ in
atom~(2)).  Atom~(7) is separated by the back-edge $z \to x$, also an
instance of Mechanism~2.  Atom~(4) involves a diamond pattern
$x \to y \to z$ with $x \to z$, creating variable sharing between the source
and a depth-2 descendant.
\end{rem}

\subsection{The Three Mechanisms as Intensional Noise}
\label{sec:intensional-noise}

The three separating mechanisms have a computational interpretation
beyond their role as separation witnesses.

Recall that a geometric formula over a binary relation~$\step$ and free
variable~$x$ lies outside the image of the standard translation of
diamond-only~$\HML$ if and only if it violates at least one of three
conditions: equality-freeness, tree-shaped variable dependencies, and
variable linearity.  These mechanisms are exactly the three ways to
violate these conditions.

The unifying observation of Section~\ref{sec:mechanisms}---that all three
mechanisms assert coincidence of two structural positions---has a precise
categorical meaning: each mechanism corresponds to an equalizer condition in
the syntactic category.  Bisimulation, which can always duplicate states
into distinct copies (forth condition), systematically breaks equalizer
constraints.  Diamond-only~$\HML$, which introduces a fresh variable at each
$\Diamond$, systematically avoids them.  Theorem~\ref{thm:gvb-formalized} establishes the semantic content
($d$-bisim-invariance); the Geometric van~Benthem
Theorem (Theorem~\ref{thm:gvb}) confirms the syntactic
characterization: these three mechanisms \emph{exhaust} the
intensional noise, i.e., every geometric property not captured by~$\HML$
factors through one of them.
The complexity landscape is relevant: deciding characteristic formulae
modulo $n$-nested simulation is coNP-complete for $n=2$ and
PSPACE-complete for $n \geq 3$~\cite{aceto2025characteristic},
suggesting that scaling the bounded verification requires
structural arguments rather than brute enumeration.

\medskip
The HML characterization and its three mechanisms provide the geometric
language for the spectrum.

\subsection{Proof via Tree Unraveling}
\label{sec:tree-unraveling}

The bounded-depth verifications of Section~\ref{sec:bounded} provide
independent confirmation at depths~0--2 by exhaustive case analysis.
The general proof uses different machinery.  We describe the formalized
proof strategy based on \emph{bounded tree unraveling} that establishes
Theorem~\ref{thm:gvb} at arbitrary depth.

\subsubsection*{Bounded tree unraveling}
For a pointed $\mathsf{FinLTS}$~$(G, v)$ and depth bound~$d$, the
\emph{bounded tree unraveling} $\mathcal{T}_d(G, v)$ has as vertices the
computation paths of length~$\leq d$ starting from~$v$: finite sequences
$[(a_1, w_1), \ldots, (a_k, w_k)]$ where each
$G.\mathsf{edge}(a_i, w_{i-1}, w_i)$ holds.  An edge labeled~$a$ from
path~$p$ to path~$p \mathbin{+\!\!+} [(a, w)]$ exists when
$G.\mathsf{edge}(a, \mathsf{endpoint}(p), w)$.  The projection
$\pi \colon \mathcal{T}_d(G, v) \to G$ sending each path to its endpoint
is an LTS homomorphism satisfying the back condition at non-maximal depth:
for every path~$p$ with $|p| < d$ and every $a$-edge from
$\mathsf{endpoint}(p)$ in~$G$, there exists a matching edge in the tree.

\begin{figure}[ht]
\centering
\begin{tikzpicture}[
    >=Stealth,
    state/.style={circle, draw, minimum size=7mm, inner sep=1pt,
                  font=\small},
    tnode/.style={circle, draw, minimum size=7mm, inner sep=0pt,
                  font=\scriptsize},
    proj/.style={->, dashed, shorten >=2pt, shorten <=2pt,
                 font=\scriptsize},
  ]
  \node at (-4.5, 1.8) {$\DG$};
  \node[state] (ga) at (-4.5, 0.7) {$a$};
  \node[state] (gb) at (-5.7,-0.5) {$b$};
  \node[state] (gc) at (-3.3,-0.5) {$c$};
  \node[state] (gd) at (-4.5,-1.7) {$d$};

  \draw[->] (ga) -- (gb);
  \draw[->] (ga) -- (gc);
  \draw[->] (gb) -- (gd);
  \draw[->] (gc) -- (gd);
  \draw[->] (gd) to [out=210, in=250, looseness=5] (gd);

  \node at (3.5, 1.8) {$\mathcal{T}_2(\DG, a)$};
  \node[tnode] (te) at (3.5, 0.7) {$\varepsilon$};
  \node[tnode] (tb) at (2.0,-0.5) {$b$};
  \node[tnode] (tc) at (5.0,-0.5) {$c$};
  \node[tnode] (tbd) at (2.0,-1.7) {$bd$};
  \node[tnode] (tcd) at (5.0,-1.7) {$cd$};

  \draw[->] (te) -- (tb);
  \draw[->] (te) -- (tc);
  \draw[->] (tb) -- (tbd);
  \draw[->] (tc) -- (tcd);

  \draw[proj] (te)  to[out=160, in=20]
    node[above] {$\pi$} (ga);
  \draw[proj] (tb)  to[out=200, in=-20] (gb);
  \draw[proj] (tc)  to[out=165, in=15] (gc);
  \draw[proj] (tbd) to[out=180, in=0] (gd);
  \draw[proj] (tcd) to[out=200, in=-20] (gd);
\end{tikzpicture}
\caption{Bounded tree unraveling $\mathcal{T}_2(\DG, a)$.  Vertices are
computation paths from~$a$ of length~${\leq}\,2$; edges extend paths by
one step.  The projection~$\pi$ sends each path to its endpoint.  The
confluence point~$d$ is resolved into distinct tree vertices $bd$
and~$cd$, recovering the confluence tree~$\CT$ (cf.\ Mechanism~2).
Depth-2 vertices have no outgoing edges: the depth bound truncates the
self-loop at~$d$.}
\label{fig:tree-unraveling}
\end{figure}
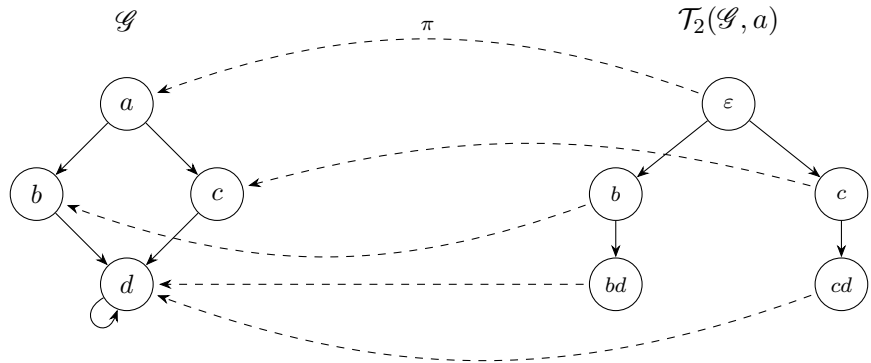

\subsubsection*{Depth preservation}
The projection's homomorphism and back properties yield bidirectional
HML transfer: for every $\HML$ formula~$\psi$ with
$\mathrm{depth}(\psi) \leq d$,
\[
\psi\;\text{holds at}\;(G, v) \quad\Longleftrightarrow\quad
\psi\;\text{holds at}\;(\mathcal{T}_d(G,v), \mathsf{root}).
\]
The forward direction uses the projection homomorphism; the backward
direction is proved by structural induction on~$\psi$, using the back
condition at each diamond step with the invariant that path length plus
formula depth stays within the bound.

\subsubsection*{Equality elimination on trees}
On rooted trees, geometric formula atoms that go beyond~$\HML$ collapse.
Self-loops are impossible (edges strictly increase path length), equality
between step source and target is absurd (combining the self-loop
impossibility), and confluence patterns (two edges arriving at the same
vertex) force their sources to be identical.  These are the tree-level
manifestations of the three separating mechanisms: on tree-like structures,
all three mechanisms become unsatisfiable.

\subsubsection*{The main theorem}

Define two pointed $\mathsf{FinLTS}$ to be \emph{$d$-bisimilar} if they
satisfy the same $\HML$ formulas of depth~$\leq d$:
\[
(G, v) \bismark_d (H, w) \quad\stackrel{\mathrm{def}}{\iff}\quad
\forall\, \psi : \HML,\; \mathrm{depth}(\psi) \leq d \;\Longrightarrow\;
(\psi \text{ at } G, v \iff \psi \text{ at } H, w).
\]
By depth preservation, $\mathcal{T}_d(G,v)$ is $d$-bisimilar to~$(G,v)$.
The geometric van~Benthem theorem then states:

\begin{thm}[Geometric van~Benthem, formalized]
\label{thm:gvb-formalized}
Let $\phi$ be a bisimulation-invariant geometric formula over the LTS
signature with existential depth~$\leq d$.  Then $\phi$ is
$d$-bisim-invariant: for all $(G, v) \bismark_d (H, w)$,
\[
\boxed{\quad \phi(G, v) \;\iff\; \phi(H, w) \quad}
\]
\end{thm}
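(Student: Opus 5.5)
Fix a bisimulation-invariant geometric formula $\phi$ of existential depth $\leq d$ and pointed systems $(G,v) \bismark_d (H,w)$. The plan is to route the comparison through the bounded tree unravelings, in three transfer steps:
\[
\phi(G,v) \iff \phi(\mathcal{T}_d(G,v),\mathsf{root}) \iff \phi(\mathcal{T}_d(H,w),\mathsf{root}) \iff \phi(H,w).
\]
The outer two equivalences should come from bisimulation-invariance. The middle one should come from the fact that on depth-$d$ trees $\phi$ is controlled by its $\HML$ content, which $d$-bisimilarity preserves.

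\emph{Outer steps.} Here I would not use the truncated tree, since $\pi$ only has the back condition below depth~$d$. Instead I would use the full (unbounded) unraveling $\mathcal{T}_\infty(G,v)$, whose projection is a genuine relational bisimulation (graph of a homomorphism with back at every depth). Bisimulation-invariance of $\phi$ then gives $\phi(G,v) \iff \phi(\mathcal{T}_\infty(G,v))$. Next I pass from $\mathcal{T}_\infty$ to $\mathcal{T}_d$. The satisfying assignments of a geometric formula of existential depth $\leq d$ at the root only inspect vertices within distance $d$ along the existential witnesses it builds. On a tree, the equality-elimination facts of \S\ref{sec:tree-unraveling} (no self-loops, confluence forces equal sources) force every satisfiable conjunctive disjunct to be tree-shaped. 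Such a disjunct's witnesses therefore lie in the depth-$\leq d$ truncation, so truth is unchanged by truncation. Some care is needed here: back-edges and predecessor atoms are simply unsatisfiable near the root of a tree, and this only helps.

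\emph{Middle step.} On rooted trees, I would normalize $\phi$. Write $\phi$ as a disjunction of existentially quantified conjunctions of atoms. Each disjunct containing a self-loop, a source–target equality, or a confluence not collapsing to identical sources is either unsatisfiable on trees or simplifies by merging variables. Each surviving disjunct is an acyclic conjunctive query rooted at $x$, i.e.\ of the form $\ST_x(\chi)$ for the characteristic $\HML$ formula $\chi$ of its variable tree, with $\mathrm{depth}(\chi) \leq d$. Hence $\phi$ agrees on depth-$d$ trees with some $\bigvee_i \ST_x(\chi_i)$. By depth preservation, $\mathcal{T}_d(G,v) \bismark_d (G,v) \bismark_d (H,w) \bismark_d \mathcal{T}_d(H,w)$, so each $\chi_i$ has the same truth value at both tree roots. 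Since $\phi$ is finitary, the disjunction is finite or at least depth-bounded, and this gives the middle equivalence.

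\emph{Main obstacle.} The hard part is the normalization on trees when $\phi$ has free variables other than the root or atoms not anchored at $x$ (for instance $\exists y\,\exists z.\,\step(y,z)$ with $y$ disconnected from $x$). Such components evaluate globally rather than along paths from the root. I would handle them by observing that on the unraveling every vertex is reachable from the root within the depth bound, so a disconnected component can be re-rooted and expressed as a disjunction over $\Diamond$-prefixes. Failing that, I would restrict the claim to formulas whose variable graph is connected to $x$, as the depth-bounded analyses implicitly do. I expect the truncation step $\mathcal{T}_\infty \to \mathcal{T}_d$ to need the same connectivity bookkeeping.
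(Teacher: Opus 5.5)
Your skeleton is sound, and two of its three legs are in better shape than you think. The direction $\phi(\mathcal{T}_d(G,v),\mathsf{root})\Rightarrow\phi(G,v)$ needs no argument: $\pi$ is a homomorphism, and geometric formulas are preserved by homomorphisms. The middle step needs neither bisimulation-invariance nor the depth hypothesis. After the tree-forced merges, each satisfiable disjunct is a forest of out-trees. On a tree of height $\le d$, a component of height $h$ not containing $x$ can only sit at depth $k\le d-h$, so your re-rooted $\Diamond^k\chi$ has depth $\le d$. Hence on such trees every finitary geometric formula agrees at the root with a depth-$\le d$ formula of $\HML$. Everything therefore rests on $\phi(G,v)\Rightarrow\phi(\mathcal{T}_d(G,v),\mathsf{root})$. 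You are right that invariance cannot be applied across $\pi$. The paper's sketch routes through $\mathcal{T}_d$ directly, and its ``negative case by contrapositive'' is this very implication restated, so it does not supply the step either. But your substitute claim, ``the witnesses lie in the depth-$\le d$ truncation,'' is asserted rather than proved, and it fails as stated in two ways.

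First, the claim depends on what existential depth measures. If it is quantifier nesting, the claim (and the theorem) is false. The formula $\exists z.\,\exists y.\,\bigl(\exists y_1.(\step(x,y_1)\land\step(y_1,y))\land\exists y_2.(\step(y,y_2)\land\step(y_2,z))\bigr)$ has nesting depth $3$ and is equivalent to $\ST_x(\Diamond\Diamond\Diamond\Diamond\top)$, hence bisimulation-invariant. Yet the paths of length $3$ and $4$ are $3$-bisimilar and disagree on it. You must use that $d$ bounds the number of existential variables in each disjunct of the disjunctive normal form.

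Second, even with that reading, variables of a disjunct that are not forward-reachable from $x$ (disconnected components, predecessors of $x$) can be witnessed below depth $d$ in $\mathcal{T}_\infty$. A single disjunct need not be bisimulation-invariant on its own, so no per-disjunct analysis rules this out. Restricting to connected formulas changes the theorem.

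The missing idea is to spend bisimulation-invariance on canonical models. For a disjunct $q$, let $q^{r}$ keep only the variables forward-reachable from $x$ and the atoms among them. The canonical model $(D_q,x)$ satisfies $\phi$ and is bisimilar, via the identity, to its forward-reachable part, which is the canonical model of $q^{r}$. So $\phi$ holds there, and preservation under homomorphisms gives $q^{r}\Rightarrow\phi$. Since $q\Rightarrow q^{r}$, you may replace every $q$ by $q^{r}$. Now every variable lies at directed distance $\le d$ from $x$, so any root-preserving witness in an unraveling lands in $\mathcal{T}_d$, which closes the truncation step.

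Finally, if invariance is only assumed on $\mathsf{FinLTS}$, replace the infinite $\mathcal{T}_\infty(G,v)$ by a finite system. Take $\mathcal{T}_N(G,v)$ with $N\ge d$, and give each leaf $p$ the out-edges of $\mathsf{endpoint}(p)$ into a disjoint copy of $G$. This system maps onto $G$ by a homomorphism with the back property, hence is bisimilar to $(G,v)$. It also coincides with $\mathcal{T}_d(G,v)$ within distance $d$ of the root.
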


\noindent
This formalizes the backward direction of the geometric van~Benthem
theorem (Theorem~\ref{thm:gvb}): every bisimulation-invariant geometric
formula of existential depth~${\leq}\,d$ is $d$-bisimulation-invariant.
The forward direction (Theorem~\ref{thm:hml-invariant}) is fully
constructive.
The backward transfer---that $\phi(G, v)$ implies
$\phi(\mathcal{T}_d(G,v), \mathrm{root})$---is proved by case split via a
decision procedure on the finite tree.  For any concrete formula, LTS, and
vertex, \texttt{satisfies\_decidable} computes whether $\phi$ holds at the
tree root; the positive case is immediate and fully constructive.  The
negative case follows by contrapositive: bisimulation-invariance ensures
that failure on the tree propagates to the graph, so no additional
geometric content is needed.

\begin{lem}[Characteristic Formula Correspondence]
\label{lem:char-formula}
For each state~$v$ of a finite tree~$T$ at depth~$d$, the
\emph{characteristic formula}~$\chi_{T,v}$ satisfies
$w \models \chi_{T,v}$ iff $(T,v) \bismark_d (G,w)$
(constructive, zero axioms).
\end{lem}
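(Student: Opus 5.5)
We define the characteristic formula by recursion on the height of the finite tree $T$ below $v$, working relative to a depth budget $d$. Set $\chi^0_{T,v} := \top$. For $d \geq 1$, let $\mathrm{succ}(v) = \{u_1,\ldots,u_n\}$ be the (finite) set of children of $v$ in $T$, and put
\[
\chi^{d}_{T,v} \;:=\; \bigwedge_{i=1}^{n} \Diamond\,\chi^{d-1}_{T,u_i} \;\land\; \Box\Bigl(\bigvee_{i=1}^{n} \chi^{d-1}_{T,u_i}\Bigr).
\]
Here $\chi_{T,v} := \chi^{d}_{T,v}$, and the empty disjunction is $\bot$. Since $\HML$ as defined here is diamond-only, the $\Box$ clause is not an $\HML$ formula. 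We therefore read $\chi_{T,v}$ as a finitary modal formula, evaluated by the same clauses as Definition~\ref{def:hml} plus the dual clause for $\Box$. The statement of the lemma only concerns satisfaction at $w$, so this is legitimate. Finiteness of $T$ and of each successor set makes every conjunction and disjunction finite. This is the point where the argument is constructive: no choice is needed, only finite recursion on a finite tree.

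The proof is by induction on $d$, generalizing over $v$ and $w$. The claim proved is $w \models \chi^d_{T,v}$ iff $(T,v) \bismark_d (G,w)$, where $\bismark_d$ is taken in its back-and-forth (Ehrenfeucht-style) form: there is a layered family of relations $Z_d \subseteq \cdots \subseteq Z_0$ with $(v,w)\in Z_d$ satisfying forth and back from $Z_k$ into $Z_{k-1}$.

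The base case $d=0$ is trivial on both sides.

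For the step, suppose first $w \models \chi^d_{T,v}$.
\begin{itemize}
\item[] \emph{Forth.} Each child $u_i$ of $v$ has a successor $w_i$ of $w$ satisfying $\chi^{d-1}_{T,u_i}$. By the induction hypothesis, $(T,u_i)\bismark_{d-1}(G,w_i)$.
\item[] \emph{Back.} Each successor $w'$ of $w$ satisfies some $\chi^{d-1}_{T,u_i}$, so again $(T,u_i)\bismark_{d-1}(G,w')$.
\end{itemize}
Conversely, if $(T,v)\bismark_d(G,w)$, the forth clause yields the diamond conjuncts and the back clause yields the box conjunct, each via the induction hypothesis.

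The main obstacle is reconciling this back-and-forth $d$-bisimilarity with the paper's definition of $\bismark_d$ as agreement on diamond-only $\HML$ formulas of depth $\le d$.

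\begin{itemize}
\item[] \emph{Easy direction.} Layered back-and-forth implies $\HML$-agreement. This is exactly the depth-indexed version of Theorem~\ref{thm:hml-invariant}; the same structural induction works, tracking depth.
\item[] \emph{Hard direction.} $\HML$-agreement up to depth $d$ should imply the layered back-and-forth. The plan is to run the Hennessy--Milner argument, using image-finiteness of both the finite tree and the $\mathsf{FinLTS}$ $G$. Suppose a successor $w'$ of $w$ is matched by no child $u_i$. Then for each $i$ choose a depth-$(d-1)$ formula $\psi_i$ separating $u_i$ from $w'$. Form $\Diamond\bigwedge_i\psi_i$ or a corresponding disjunction, and the resulting formula distinguishes $v$ from $w$.
\end{itemize}

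The hard direction is where the diamond-only restriction bites. A distinguishing formula may hold at $u_i$ but fail at $w'$, or the reverse, and without negation we cannot flip the orientation. So I would first verify that the separation always arises in the direction expressible positively. If that fails, the fallback is to restrict the lemma to the direction actually needed downstream: $w\models\chi_{T,v}$ implies $\HML$-agreement. That direction only uses the easy implication together with the induction above.
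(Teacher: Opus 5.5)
Your induction showing that $w \models \chi^d_{T,v}$ iff $v$ and $w$ are related by a layered back-and-forth of depth $d$ is correct. Your instinct that some $\Box$ (or negation) is needed is also right. The gap is the bridge to the paper's $\bismark_d$, which is agreement on diamond-only $\HML$ formulas of depth $\le d$. Your ``hard direction'' is not merely delicate; it is false. Positive agreement is mutual $d$-simulation, which is strictly coarser than $d$-bisimilarity.

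Take $T$ the path $v \to u \to x$, and $G$ with $w \to u_1 \to x_1$ and $w \to u_2$, where $u_2$ has no successors; let $d = 2$. Each of $v,w$ $2$-simulates the other, so they satisfy the same positive formulas and $(T,v)\bismark_2(G,w)$. But $u_2 \not\models \chi^1_{T,u} = \Diamond\top \land \Box\top$, so $w \not\models \chi^2_{T,v}$. Thus, with your $\chi$, the implication from $\bismark_d$ to $w \models \chi_{T,v}$ fails outright. No argument about the orientation of separating formulas can repair it, and your fallback proves only half of the biconditional, which is not the lemma.

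The missing idea is to characterize the two $d$-simulation preorders separately. Set $\sigma^0 = \tau^0 = \top$, $\sigma^d_{T,v} = \bigwedge_i \Diamond\sigma^{d-1}_{T,u_i}$ and $\tau^d_{T,v} = \Box\bigvee_i\tau^{d-1}_{T,u_i}$. Your induction, run on each conjunct (forth-only for $\sigma$, back-only for $\tau$), gives: $w \models \sigma^d_{T,v}$ iff $w$ $d$-simulates $v$, and $w \models \tau^d_{T,v}$ iff $v$ $d$-simulates $w$. So put $\chi_{T,v} := \sigma^d_{T,v} \land \tau^d_{T,v}$.

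Mutual $d$-simulation then coincides with $\bismark_d$ without any Hennessy--Milner separation argument:
\begin{itemize}
\item One direction is the forth-only, depth-tracked form of Theorem~\ref{thm:hml-invariant}.
\item For the other, $\sigma^d_{T,v}$ is a positive formula of depth $\le d$ true at $v$, hence true at $w$. Likewise $\sigma^d_{G,w}$ is true at $w$, hence at $v$. It is defined by the same recursion on $d$, so cycles in $G$ are harmless, and it is finite because $G$ is a $\mathsf{FinLTS}$.
\end{itemize}

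The $\Box$ clause is unavoidable. For a one-vertex $T$ and $d \ge 1$, the $w$ with $(T,v)\bismark_d(G,w)$ are exactly the states with no successors, and no diamond-only formula defines that set. For labelled systems, use $\langle a\rangle$ and $[a]$ over the finite alphabet.

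If instead you read $\bismark_d$ through full $\HML$ with negation, your $\chi$ is the right one. In that case the missing step is the bounded Hennessy--Milner theorem with negation, not the positive argument you sketch.
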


Via the lemma, the \emph{geo-tree decomposition} reduces geometric
satisfaction on an arbitrary graph to $\HML$ satisfaction on its tree
unraveling.  The transfer theorem then proceeds by case analysis on
truth values: the TRUE-on-tree cases follow from the characteristic
formula and simulation invariance of positive $\HML$; all cases close.

\paragraph{From rooted transition systems to the labeled spectrum.}
The hierarchy and van~Benthem results of
\S\S\ref{sec:geometric}--\ref{sec:hml} work with unlabeled rooted transition
systems and their geometric theories, the natural setting for
separation theorems, where explicit sequents
($\sigma_{\mathrm{tot}}$, $\sigma_{\mathrm{det}}$,
$\sigma_{\mathrm{conf}}$) distinguish classifying toposes.
The spectrum computation requires \emph{labeled} transition systems:
the van~Glabbeek spectrum is defined for labeled processes, and
Bisping's energy-game framework parameterizes equivalences over a label
alphabet.  For single-label systems the passage is lossless:
relational bisimulation on rooted transition systems lifts to labeled
bisimulation on their $\mathrm{FinLTS}(\mathbf{1})$ images, and
conversely (constructive, zero axioms).
We therefore shift to $\mathrm{FinLTS}(L)$ for the remainder
of the paper.  The rooted transition framework returns in
\S\ref{sec:internal-logic}, where the presheaf topos
$\mathrm{Set}[\Th{\mathrm{LTS}}]$ serves as the ambient
workspace for Grothendieck topologies, bridging per-system geometric
theories (\S\S\ref{sec:geometric}--\ref{sec:hierarchy}) and the global
spectrum structure (this section).

\section{The Spectrum Lattice}
\label{sec:spectrum-lattice}
\label{sec:spectrum-subtoposes}

The spectrum embeds into the coframe of subtoposes via two independent
constructions: an \emph{algebraic pillar} (nuclei on Lindenbaum algebras,
this section) and a \emph{geometric pillar} (Grothendieck topologies on
the presheaf topos, \S\ref{sec:internal-logic}).  The Energy--LT
Bridge (Theorem~\ref{thm:energy-lt-bridge}) connects them.

The 13~named equivalences of the van~Glabbeek spectrum, parameterized by
Bisping's 6-dimensional energy vectors~\cite{bisping2023cav}, form a
set~$\mathcal{S} \subset (\mathbb{N} \cup \{\infty\})^6$ whose sublattice
closure $L_{30} := \langle\mathcal{S}\rangle$ under componentwise
$\min$/$\max$ has 30~elements.  We
first establish the quotient-filtration framework, then analyze this lattice
algebraically (closure, Birkhoff representation, bi-Heyting structure) and
finally instantiate through labeled transition systems and concrete
separations.

The preceding sections characterized the \emph{upper} boundary of the
spectrum: how geometric logic, through first-order quantification, the three
separating mechanisms (equality, cyclic variable sharing, variable
repetition), and the van~Benthem theorem, exceeds Hennessy--Milner logic.
We now turn to the \emph{lower} structure: the internal stratification of
$\HML$ itself into sublanguages, parameterized by energy budgets.  This is a
propositional question (the Lindenbaum algebras~$L_{\bar{e}}$ are
propositional geometric theories) and the separating mechanisms
of~\S\S3--4 play no role here.

\subsection{Energy Budgets and the Spectrum}
\label{sec:spectrum}

Van Glabbeek's linear time--branching time
spectrum~\cite{vanglabbeek2001} identifies the canonical behavioral
equivalences for concrete, sequential, finitely branching processes.
Bisping's energy-game framework~\cite{bisping2023cav} captures
13~of these via 6-dimensional energy coordinates, replacing
completed trace, completed simulation, and possible-worlds
equivalences with impossible futures, revivals, and enabledness.
We work throughout with Bisping's~13.
Caramello's Duality
Theorem~\cite[Theorem~3.6]{caramello2017theories} converts this into a
subtopos filtration: adding axioms restricts models, hence determines a
subtopos.

For a rooted transition system~$\Msys$, define the \emph{quotient theory at spectrum
level~$\ell$} as~$\Th{\Msys}^{(\ell)} \supseteq \Th{\Msys}$, obtained by adding
all geometric sequents valid in~$\Msys$ that are invariant under the
equivalence at level~$\ell$.  The Duality Theorem gives a chain:
\[
\boxed{\;
\ET{\Msys} \supseteq
\mathcal{E}[\Th{\Msys}^{(\mathrm{trace})}] \supseteq
\mathcal{E}[\Th{\Msys}^{(\mathrm{sim})}] \supseteq
\mathcal{E}[\Th{\Msys}^{(\mathrm{bisim})}]
\;}
\]
Each inclusion is strict when the corresponding level distinguishes
genuinely different branching structures of~$M$: the First Separation
witnesses the trace/simulation gap (the fork~$\FK$ and path~$\PT$ agree on
traces but not on the simulation formula $\Diamond\top$), and the Second
Separation witnesses the bisimulation/topos gap (the
hub-spokes~$\HS$ and two-cycle~$\TC$ are bisimilar but distinguished by
$\sigma_{\mathrm{det}}$).

\begin{defi}[Morleyization for negative equivalences]
\label{def:morleyization}
Seven of the 13~named equivalences (traces through simulation) are
\emph{natively geometric}: their distinguishing $\HML$ fragments use only
$\top$, $\wedge$, $\bigvee$, and $\Diamond$.  The remaining six (failures,
failure traces, readiness, ready traces, impossible futures, ready simulation)
require \emph{inability atoms} $\neg\langle a\rangle\top$, which are not
geometric formulas.

Following Johnstone~\cite[D1.5.13]{johnstone2002sketches}, we handle these
via \emph{Morleyization}: each predicate~$P$ is replaced by a complementary
pair $(P, P^c)$ with an axiom $P \vee P^c = \top$.  Concretely,
$\texttt{unable}_a(s) := \neg\langle a\rangle\top$ becomes a fresh atomic
predicate~$\texttt{unable}_a$ with
$\texttt{able}_a \vee \texttt{unable}_a = \top$, rendering the theory
geometric.  All algebraic results of \S\S\ref{sec:spectrum}--\ref{sec:spectrum-lattice}
apply uniformly to the Morleyized theories.

\label{rem:ready-sim-bisim}
The ready-simulation/bisimulation separation is witnessed by the formula
$[a]\langle c\rangle\top$, which distinguishes van~Glabbeek's pair
$a.b + a.(b{+}c)$ from $a.(b{+}c)$~\cite{vanglabbeek2001}: the formula
holds at the latter (the unique $a$-successor can perform~$c$) but not
at the former ($a.b$ reaches a state with no $c$-edge).  This formula lies
in full HML but not in the ready-simulation fragment (its nested negation
$\neg\langle a\rangle\neg\langle c\rangle\top$ is not an inability atom),
completing the four-level labeled separation.
\end{defi}

\begin{defi}[Energy-indexed Lindenbaum family]
For each named equivalence~$\bar{e}$ in the van~Glabbeek spectrum, the
energy budget determines which propositional atoms are ``visible'':
base transition atoms (always), branching atoms (when $e_2 \geq 2$),
and inability atoms $\texttt{unable}_a(s)$ encoding $\neg\langle a\rangle\top$
(when $e_5 \geq 1$ and $e_6 \geq 1$).  The restricted atom set~$A_{\bar{e}}$
determines a propositional geometric theory~$\Th{\Msys}^{\bar{e}}$ whose
Lindenbaum algebra~$L_{\bar{e}}$ captures exactly the state distinctions
visible under energy budget~$\bar{e}$.  The family $\{L_{\bar{e}}\}$ is
monotone: $\bar{e}_1 \leq \bar{e}_2$ implies $|L_{\bar{e}_1}| \leq |L_{\bar{e}_2}|$
(more energy $=$ finer distinctions $=$ larger algebra).
\end{defi}

\begin{rem}[Coordinatization dependence]
\label{rem:coord-dependence}
The lattice~$L_{30} = \langle\mathcal{S}\rangle$ depends on
Bisping's 6-dimensional energy coordinatization, not merely on the
partial order of~$\mathcal{S}$.  The Birkhoff representation gives
$|J(L_{30})| = 10$ (4~named, 6~unnamed), but the free bounded
distributive lattice on the 13-element van~Glabbeek
poset~$P$~\cite{vanglabbeek2001} is the downset lattice~$\mathcal{O}(P)$,
which has $|J(\mathcal{O}(P))| = |P| = 13$ join-irreducibles.
Since $10 \neq 13$, the Bisping embedding introduces accidental
join-relations: 9~of the 13~named equivalences are join-reducible
in~$L_{30}$ due to the specific dimensional decomposition (observation
depth, conjunction width, positive/negative clause depth, negation
nesting).  The standard Birkhoff embedding
$p \mapsto (\mathbf{1}_{q \leq p})_{q \in P}$ into~$\{0,1\}^{13}$ is
order-reflecting and yields~$\mathcal{O}(P) \not\cong L_{30}$ under
componentwise $\min$/$\max$.  Thus $L_{30}$ is an invariant of the
energy-game \emph{dimensional structure}, not of the poset alone.
The topos-theoretic route offers an alternative
coordinatization-free characterization: $L_{30}$ embeds into the
subtopos coframe $\mathrm{Sub}(\mathrm{Set}[\Th{\mathrm{LTS}}])$,
which is an invariant of the classifying topos.
\end{rem}

\begin{figure}[p]
  \makebox[\textwidth][c]{%
  \begin{tikzpicture}[align=center, yscale=1.2,
    every node/.style={font=\tiny, inner sep=1.5pt, fill=white},
    every edge/.style={draw=black!35, line cap=round, line width=0.35pt, densely dashed},
    unn/.style={}]


    \node (B) at (0, 0)
      {bisimulation$^\dagger$ $B$\\[-1pt]
       $(\textcolor{obsColor}{\infty}, \textcolor{conjColor}{\infty},
         \textcolor{mainPosColor}{\infty}, \textcolor{otherPosColor}{\infty},
         \textcolor{negObsColor}{\infty}, \textcolor{negsColor}{\infty})$};

    \node (S2) at (0, -1.5)
      {$2$-nested sim.$^\dagger$ $\eqName{2S}$\\[-1pt]
       $(\textcolor{obsColor}{\infty}, \textcolor{conjColor}{\infty},
         \textcolor{mainPosColor}{\infty}, \textcolor{otherPosColor}{\infty},
         \textcolor{negObsColor}{\infty}, \textcolor{negsColor}{1})$};

    \node (RS) at (-2.5, -3.0)
      {ready sim.$^\dagger$ $\eqName{RS}$\\[-1pt]
       $(\textcolor{obsColor}{\infty}, \textcolor{conjColor}{\infty},
         \textcolor{mainPosColor}{\infty}, \textcolor{otherPosColor}{\infty},
         \textcolor{negObsColor}{1}, \textcolor{negsColor}{1})$};
    \node (PF) at (0, -3.0)
      {poss.\ futures$^\dagger$ $\eqName{PF}$\\[-1pt]
       $(\textcolor{obsColor}{\infty}, \textcolor{conjColor}{2},
         \textcolor{mainPosColor}{\infty}, \textcolor{otherPosColor}{\infty},
         \textcolor{negObsColor}{\infty}, \textcolor{negsColor}{1})$};
    \node[unn] (IFjRT) at (2.5, -3.0)
      {$\eqName{IF} \vee \eqName{RT}$\\[-1pt]
       $(\textcolor{obsColor}{\infty}, \textcolor{conjColor}{\infty},
         \textcolor{mainPosColor}{\infty}, \textcolor{otherPosColor}{1},
         \textcolor{negObsColor}{\infty}, \textcolor{negsColor}{1})$};

    \node (S) at (-5.0, -4.5)
      {simulation$^\dagger$ $S$\\[-1pt]
       $(\textcolor{obsColor}{\infty}, \textcolor{conjColor}{\infty},
         \textcolor{mainPosColor}{\infty}, \textcolor{otherPosColor}{\infty},
         0, 0)$};
    \node (RT) at (-2.5, -4.5)
      {ready traces$^\dagger$ $\eqName{RT}$\\[-1pt]
       $(\textcolor{obsColor}{\infty}, \textcolor{conjColor}{\infty},
         \textcolor{mainPosColor}{\infty}, \textcolor{otherPosColor}{1},
         \textcolor{negObsColor}{1}, \textcolor{negsColor}{1})$};
    \node[unn] (PFmRS) at (0, -4.5)
      {$\eqName{PF} \wedge \eqName{RS}$\\[-1pt]
       $(\textcolor{obsColor}{\infty}, \textcolor{conjColor}{2},
         \textcolor{mainPosColor}{\infty}, \textcolor{otherPosColor}{\infty},
         \textcolor{negObsColor}{1}, \textcolor{negsColor}{1})$};
    \node[unn] (IFjPFRT) at (2.5, -4.5)
      {$\eqName{IF} \vee (\eqName{PF}{\wedge}\eqName{RT})$\\[-1pt]
       $(\textcolor{obsColor}{\infty}, \textcolor{conjColor}{2},
         \textcolor{mainPosColor}{\infty}, \textcolor{otherPosColor}{1},
         \textcolor{negObsColor}{\infty}, \textcolor{negsColor}{1})$};
    \node[unn] (IFjFT) at (5.0, -4.5)
      {$\eqName{IF} \vee \eqName{FT}$\\[-1pt]
       $(\textcolor{obsColor}{\infty}, \textcolor{conjColor}{\infty},
         \textcolor{mainPosColor}{\infty}, 0,
         \textcolor{negObsColor}{\infty}, \textcolor{negsColor}{1})$};

    \node[unn] (SmPF) at (-7.5, -6.0)
      {$S \wedge \eqName{PF}$\\[-1pt]
       $(\textcolor{obsColor}{\infty}, \textcolor{conjColor}{2},
         \textcolor{mainPosColor}{\infty}, \textcolor{otherPosColor}{\infty},
         0, 0)$};
    \node[unn] (SmIFjRT) at (-5.0, -6.0)
      {$S \wedge (\eqName{IF}{\vee}\eqName{RT})$\\[-1pt]
       $(\textcolor{obsColor}{\infty}, \textcolor{conjColor}{\infty},
         \textcolor{mainPosColor}{\infty}, \textcolor{otherPosColor}{1},
         0, 0)$};
    \node[unn] (PFmRT) at (-2.5, -6.0)
      {$\eqName{PF} \wedge \eqName{RT}$\\[-1pt]
       $(\textcolor{obsColor}{\infty}, \textcolor{conjColor}{2},
         \textcolor{mainPosColor}{\infty}, \textcolor{otherPosColor}{1},
         \textcolor{negObsColor}{1}, \textcolor{negsColor}{1})$};
    \node (FT) at (2.5, -6.0)
      {failure traces$^\dagger$ $\eqName{FT}$\\[-1pt]
       $(\textcolor{obsColor}{\infty}, \textcolor{conjColor}{\infty},
         \textcolor{mainPosColor}{\infty}, 0,
         \textcolor{negObsColor}{1}, \textcolor{negsColor}{1})$};
    \node[unn] (IFjSR) at (5.0, -6.0)
      {$\eqName{IF} \vee (S{\wedge}R)$\\[-1pt]
       $(\textcolor{obsColor}{\infty}, \textcolor{conjColor}{2},
         \textcolor{mainPosColor}{1}, \textcolor{otherPosColor}{1},
         \textcolor{negObsColor}{\infty}, \textcolor{negsColor}{1})$};
    \node[unn] (PFmIFjFT) at (7.5, -6.0)
      {$\eqName{PF} \wedge (\eqName{IF}{\vee}\eqName{FT})$\\[-1pt]
       $(\textcolor{obsColor}{\infty}, \textcolor{conjColor}{2},
         \textcolor{mainPosColor}{\infty}, 0,
         \textcolor{negObsColor}{\infty}, \textcolor{negsColor}{1})$};

    \node[unn] (RTmSPF) at (-5.0, -7.5)
      {$\eqName{RT} \wedge (S{\wedge}\eqName{PF})$\\[-1pt]
       $(\textcolor{obsColor}{\infty}, \textcolor{conjColor}{2},
         \textcolor{mainPosColor}{\infty}, \textcolor{otherPosColor}{1},
         0, 0)$};
    \node[unn] (SmIFjFT) at (-2.5, -7.5)
      {$S \wedge (\eqName{IF}{\vee}\eqName{FT})$\\[-1pt]
       $(\textcolor{obsColor}{\infty}, \textcolor{conjColor}{\infty},
         \textcolor{mainPosColor}{\infty}, 0,
         0, 0)$};
    \node (R) at (0, -7.5)
      {readiness$^\dagger$ $R$\\[-1pt]
       $(\textcolor{obsColor}{\infty}, \textcolor{conjColor}{2},
         \textcolor{mainPosColor}{1}, \textcolor{otherPosColor}{1},
         \textcolor{negObsColor}{1}, \textcolor{negsColor}{1})$};
    \node[unn] (PFmFT) at (2.5, -7.5)
      {$\eqName{PF} \wedge \eqName{FT}$\\[-1pt]
       $(\textcolor{obsColor}{\infty}, \textcolor{conjColor}{2},
         \textcolor{mainPosColor}{\infty}, 0,
         \textcolor{negObsColor}{1}, \textcolor{negsColor}{1})$};
    \node[unn] (IFjSRV) at (5.0, -7.5)
      {$\eqName{IF} \vee (S{\wedge}\eqName{RV})$\\[-1pt]
       $(\textcolor{obsColor}{\infty}, \textcolor{conjColor}{2},
         \textcolor{mainPosColor}{1}, 0,
         \textcolor{negObsColor}{\infty}, \textcolor{negsColor}{1})$};

    \node[unn] (SmR) at (-5.0, -9.0)
      {$S \wedge R$\\[-1pt]
       $(\textcolor{obsColor}{\infty}, \textcolor{conjColor}{2},
         \textcolor{mainPosColor}{1}, \textcolor{otherPosColor}{1},
         0, 0)$};
    \node[unn] (SPFmIFjFT) at (-2.5, -9.0)
      {$(S{\wedge}\eqName{PF}) \wedge (\eqName{IF}{\vee}\eqName{FT})$\\[-1pt]
       $(\textcolor{obsColor}{\infty}, \textcolor{conjColor}{2},
         \textcolor{mainPosColor}{\infty}, 0,
         0, 0)$};
    \node (RV) at (2.5, -9.0)
      {revivals$^\dagger$ $\eqName{RV}$\\[-1pt]
       $(\textcolor{obsColor}{\infty}, \textcolor{conjColor}{2},
         \textcolor{mainPosColor}{1}, 0,
         \textcolor{negObsColor}{1}, \textcolor{negsColor}{1})$};
    \node (IF) at (5.0, -9.0)
      {imp.\ futures$^\dagger$ $\eqName{IF}$\\[-1pt]
       $(\textcolor{obsColor}{\infty}, \textcolor{conjColor}{2},
         0, 0,
         \textcolor{negObsColor}{\infty}, \textcolor{negsColor}{1})$};

    \node[unn] (SmRV) at (-2.5, -10.5)
      {$S \wedge \eqName{RV}$\\[-1pt]
       $(\textcolor{obsColor}{\infty}, \textcolor{conjColor}{2},
         \textcolor{mainPosColor}{1}, 0,
         0, 0)$};
    \node (F) at (2.5, -10.5)
      {failures$^\dagger$ $F$\\[-1pt]
       $(\textcolor{obsColor}{\infty}, \textcolor{conjColor}{2},
         0, 0,
         \textcolor{negObsColor}{1}, \textcolor{negsColor}{1})$};

    \node[unn] (SmF) at (0, -12.0)
      {$S \wedge F$\\[-1pt]
       $(\textcolor{obsColor}{\infty}, \textcolor{conjColor}{2},
         0, 0, 0, 0)$};

    \node (T) at (0, -13.5)
      {traces$^\dagger$ $T$\\[-1pt]
       $(\textcolor{obsColor}{\infty}, \textcolor{conjColor}{1},
         0, 0, 0, 0)$};

    \node (E) at (0, -15.0)
      {enabledness$^\dagger$ $E$\\[-1pt]
       $(\textcolor{obsColor}{1}, \textcolor{conjColor}{1},
         0, 0, 0, 0)$};


    \path
    (T) edge (SmF)
    (SmF) edge (SmRV)
    (SmF) edge (F)
    (SmRV) edge (SPFmIFjFT)
    (SmRV) edge (SmR)
    (SmRV) edge (RV)
    (SmR) edge (RTmSPF)
    (SmR) edge (R)
    (SPFmIFjFT) edge (RTmSPF)
    (SPFmIFjFT) edge (SmIFjFT)
    (SPFmIFjFT) edge (PFmFT)
    (RV) edge (PFmFT)
    (RV) edge (IFjSRV)
    (IF) edge (IFjSRV)
    (RTmSPF) edge (SmPF)
    (RTmSPF) edge (SmIFjRT)
    (RTmSPF) edge (PFmRT)
    (SmIFjFT) edge (SmIFjRT)
    (SmIFjFT) edge (FT)
    (R) edge (PFmRT)
    (R) edge (IFjSR)
    (PFmFT) edge (FT)
    (PFmFT) edge (PFmRT)
    (PFmFT) edge (PFmIFjFT)
    (IFjSRV) edge (IFjSR)
    (IFjSRV) edge (PFmIFjFT)
    (SmPF) edge (S)
    (SmPF) edge (PFmRS)
    (SmIFjRT) edge (S)
    (SmIFjRT) edge (RT)
    (PFmRT) edge (RT)
    (PFmRT) edge (PFmRS)
    (PFmRT) edge (IFjPFRT)
    (FT) edge (IFjFT)
    (IFjSR) edge (IFjPFRT)
    (PFmIFjFT) edge (IFjFT)
    (PFmIFjFT) edge (IFjPFRT)
    (RT) edge (IFjRT)
    (PFmRS) edge (RS)
    (PFmRS) edge (PF)
    (IFjPFRT) edge (PF)
    (IFjPFRT) edge (IFjRT)
    (IFjFT) edge (IFjRT)
    (IFjRT) edge (S2)
    ;

    \path
    (E) edge (T)
    (F) edge (RV)
    (F) edge (IF)
    (RV) edge (R)
    (FT) edge (RT)
    (S) edge (RS)
    (RT) edge (RS)
    (RS) edge (S2)
    (PF) edge (S2)
    (S2) edge (B)
    ;
  \end{tikzpicture}}%
  \caption{The spectrum lattice~$L_{30} = \langle\mathcal{S}\rangle$:
  sublattice closure of
  the 13~named energy vectors~($^\dagger$)~\cite{bisping2023cav} 
  under componentwise $\min/\max$, with 17~unnamed hybrids.
  Energy coordinates
  $(\textcolor{obsColor}{e_1}, \textcolor{conjColor}{e_2},
  \textcolor{mainPosColor}{e_3}, \textcolor{otherPosColor}{e_4},
  \textcolor{negObsColor}{e_5}, \textcolor{negsColor}{e_6})$ measure
  \textcolor{obsColor}{observation depth},
  \textcolor{conjColor}{conjunction nesting},
  \textcolor{mainPosColor}{deepest positive clause depth},
  \textcolor{otherPosColor}{other positive clause depth},
  \textcolor{negObsColor}{negative clause depth}, and
  \textcolor{negsColor}{negation nesting}.
  Finer equivalences appear higher; edges denote covering relations.}
  \label{fig:spectrum-energy}
\end{figure}

Concrete Lindenbaum algebra sizes illustrate the framework.  The
branching system $a.b + a.c$ has
$|L_{\mathrm{trace}}| = 5$ but $|L_{\mathrm{sim}}| = 7$: simulation
detects the branching that traces miss.  By contrast, the sequential
system $a.(b{+}c)$ has
$|L_{\mathrm{trace}}| = |L_{\mathrm{sim}}| = 5$; no branching, so
no trace$\to$simulation jump.  This matches the depth-1 separation
of~\S\ref{sec:graded-atoms}, confirming that the energy-indexed family
unifies the graded tower as a 1-dimensional slice.
Failures~$(\infty, 2, 0, 0, 1, 1)$ and simulation~$(\infty, \infty, \infty, \infty, 0, 0)$ reach the same algebra
size~(7) via \emph{independent} paths in the energy lattice: failures
uses negation to detect refusals, while simulation uses deep
conjunction to detect branching.

The 13~energy vectors determine 13~points in the ambient frame
$(\mathbb{N} \cup \{\infty\})^6$.  We now show that closing these
points under componentwise $\min$/$\max$ yields a 30-element lattice
with rich algebraic structure.

\subsection{The 30-Element Lattice Closure}
\label{sec:lattice-closure}

\begin{defi}[Energy-to-nucleus map]
A \emph{nucleus} on a frame~$L$ is an inflationary idempotent
$\wedge$-preserving endomorphism $j\colon L \to L$.
Johnstone~\cite[II.2]{johnstone1982stone} establishes that
Lawvere--Tierney topologies on $\Sh(L)$ correspond
bijectively to nuclei on~$L$.  For each named equivalence~$\bar{e}$,
the process equivalence ${\sim}_{\bar{e}}$ extends to a congruence
on the full Lindenbaum algebra~$L_\infty$, inducing a
nucleus~$j_{\bar{e}}$ with fixpoint set $\mathrm{Fix}(j_{\bar{e}})
\cong L_{\bar{e}}$.  The resulting map
\[
  \bar{e} \;\longmapsto\; j_{\bar{e}}
  \qquad\text{is \emph{order-reversing}:}
  \qquad
  \bar{e}_1 \leq \bar{e}_2
  \;\implies\;
  j_{\bar{e}_2} \leq j_{\bar{e}_1}.
\]
Larger energy budgets allow finer distinctions, hence less closure,
hence a smaller nucleus (closer to the identity).  Bisimulation
(the finest named equivalence) maps to the identity
nucleus~$j_{\mathrm{bisim}} = \mathrm{id}$, while coarser
equivalences map to larger nuclei.

Here $L_\infty := \varinjlim_d L_d$ is the colimit Lindenbaum algebra
at unbounded modal depth (which stabilizes at depth~$n{-}1$ for an
$n$-state system, hence is a finite distributive lattice;
see~\S\ref{sec:labeled-spectrum}).  The extension from process equivalence to
frame congruence relies on the logical characterization
theorems~\cite{vanglabbeek2001}: two states are $\bar{e}$-equivalent
iff they satisfy the same formulas of energy $\leq \bar{e}$
(Bisping~\cite[Proposition~1]{bisping2023cav}), so the equivalence
classes are unions of Lindenbaum equivalence classes and therefore
compatible with the lattice operations.  For finite distributive
lattices, every lattice congruence is automatically a frame congruence
(Funayama--Nakayama~\cite{funayama1942congruence}).
\end{defi}

\begin{thm}[Lattice Closure of the Spectrum]
\label{thm:lattice-closure}
Let $\mathcal{S} \subset (\mathbb{N} \cup \{\infty\})^6$ be the
13~named energy vectors of the van~Glabbeek spectrum.  Then
$\mathcal{S}$ is not a sublattice, and the sublattice it generates
under componentwise $\min/\max$ is
\[
\boxed{\; L_{30} \;:=\; \langle\mathcal{S}\rangle, \qquad
|L_{30}| \;=\; 30. \;}
\]
The closure adds $17$ unnamed elements, each with a concrete energy
characterization but no classical process-algebraic name, and
stabilizes in three rounds of iterated pairwise meet/join
(verified computationally in the Lean formalization).
\end{thm}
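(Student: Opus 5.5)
The plan is a direct finite computation in the distributive lattice $(\mathbb{N}\cup\{\infty\})^6$ with componentwise $\min$/$\max$. First I list the 13 named vectors from Figure~\ref{fig:spectrum-energy}, namely $B, \eqName{2S}, \eqName{RS}, \eqName{PF}, S, \eqName{RT}, \eqName{FT}, R, \eqName{RV}, \eqName{IF}, F, T, E$. To show $\mathcal{S}$ is not a sublattice, one pair suffices. For instance,
\[
S \wedge F = (\infty,2,0,0,0,0),
\]
which is not among the 13 named vectors, since the only candidates with $e_3=\cdots=e_6=0$ are $S$, $T$ and $E$, and their second coordinates differ from~$2$.

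Next I compute the closure by the iterated scheme the statement describes. Set $X_0=\mathcal{S}$ and $X_{k+1}=X_k\cup\{x\wedge y,\ x\vee y : x,y\in X_k\}$. I show $X_3=X_4$, so $X_3$ is closed under both operations and equals $\langle\mathcal{S}\rangle$. The computation is bounded: every coordinate of every element lies in the finite set of values that occur in $\mathcal{S}$ (at most $\{0,1,2,\infty\}$), so the ambient search space is finite and the procedure must terminate. I then tabulate the resulting 30 vectors and check them against the 17 unnamed nodes of Figure~\ref{fig:spectrum-energy}. For each unnamed element I record an explicit meet/join expression over named equivalences; the figure labels already supply these, for example $\eqName{IF}\vee\eqName{RT}$ and $(S\wedge\eqName{PF})\wedge(\eqName{IF}\vee\eqName{FT})$. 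This certifies membership in $\langle\mathcal{S}\rangle$, and the certified vectors are pairwise distinct.

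For the upper bound I prove that this explicit 30-element set $L$ is closed under $\wedge$ and $\vee$. It contains $\mathcal{S}$ and every element is generated from $\mathcal{S}$, so $L=\langle\mathcal{S}\rangle$. Closure amounts to $\binom{30}{2}$ meet/join checks, which reduce to coordinatewise comparisons. To organize them I would exploit the visible structure: the top is $B$, the bottom is $E$, the interval above $T$ is almost everything, and coordinates $e_1$ and $e_6$ take few values. So I would split by $(e_5,e_6)\in\{(0,0),(1,1),(\infty,1),(\infty,\infty)\}$ and check closure within and across these blocks. The count of rounds (three) comes out of the same bookkeeping by recording at which stage each of the 17 new vectors first appears.

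The main obstacle is not conceptual but exhaustiveness. A hand proof that no thirty-first vector arises requires checking all pairs without error, which is why I would rely on the Lean decision procedure (\texttt{decide} on the finite list) as the authoritative check. On paper I would present only the block decomposition and the generating expressions.
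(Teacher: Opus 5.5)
Your proposal is correct and follows the paper's route. For non-closure you exhibit an escaping operation: the paper uses $S\wedge F$ together with $\eqName{PF}\wedge\eqName{FT}$ and $\eqName{IF}\vee\eqName{FT}$. For $|L_{30}|=30$ and the 17 new elements you rely on the finite iterated meet/join computation, certified in Lean.

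Two small corrections:
\begin{itemize}
\item The named vectors to compare against are those with $e_5=e_6=0$, namely $S$, $T$, $E$. The condition $e_3=\cdots=e_6=0$ is wrong, since $S$ has $e_3=e_4=\infty$.
\item Recording first appearances will show all 17 new vectors already in $X_2$. The last four to appear are $(S\wedge\eqName{PF})\wedge(\eqName{IF}\vee\eqName{FT})$, $\eqName{RT}\wedge(S\wedge\eqName{PF})$, $\eqName{PF}\wedge(\eqName{IF}\vee\eqName{FT})$ and $\eqName{IF}\vee(\eqName{PF}\wedge\eqName{RT})$. So ``three rounds'' must be read as the round in which stabilization is observed ($X_3=X_2$), not as the round in which a new vector first appears.
\end{itemize}
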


\begin{proof}[Proof of non-sublattice property]
Three witnesses demonstrate that componentwise operations escape~$\mathcal{S}$:
\begin{enumerate}
\item $\eqName{PF} \wedge \eqName{FT}
  = (\infty, 2, \infty, 0, 1, 1)$ is unnamed (meet);
\item $S \wedge F
  = (\infty, 2, 0, 0, 0, 0)$ is unnamed (meet);
\item $\eqName{IF} \vee \eqName{FT}
  = (\infty, \infty, \infty, 0, \infty, 1)$ is unnamed (join).
\end{enumerate}
Conversely, some operations do land on named equivalences:
$\eqName{IF} \wedge \eqName{FT} = F$ and
$S \vee F = \eqName{RS}$.
Each case is decided by exhaustive comparison against
all elements of~$\mathcal{S}$.
The closure stabilizes after three rounds of iterated pairwise
meet/join, adding 17~new elements total.
\end{proof}

\noindent
The lattice of subtoposes of any Grothendieck topos is a
\emph{coframe}~\cite[A4.5]{johnstone2002sketches}: the order-dual of the
nucleus frame carries co-Heyting subtraction $A \setminus B$, the largest
subtopos~$C$ with $C \sqcup B \geq A$.

\subsection{Algebraic Structure}
\label{sec:algebraic-structure}

\begin{rem}[Birkhoff representation and bi-Heyting structure]
The Birkhoff representation $L_{30} \cong \mathcal{O}(J)$ identifies
$|J(L_{30})| = |M(L_{30})| = 10$ join-irreducible and meet-irreducible
elements.  As a finite distributive lattice, $L_{30}$ carries both a
Heyting algebra structure (with implication $a \to b$ the largest~$z$
satisfying $z \wedge a \leq b$) and a co-Heyting algebra structure
(with subtraction as above), making it a \emph{bi-Heyting algebra}.
The negation structure of this bi-Heyting algebra is
\textbf{maximally degenerate}: the Heyting pseudocomplement satisfies
$\neg x = \bot$ for every $x \neq \bot$, while the co-Heyting
negation satisfies ${\sim} x = \top$ for every $x \neq \top$.
All~30 Heyting boundaries $\partial x = x \wedge \neg x$ are trivially
zero.  The root cause is topological: $J(L_{30})$ is a \emph{connected}
poset (traces lies below every join-irreducible element), so the
pseudocomplement of any nonzero element is forced to~$\bot$.
Consequently the Boolean core $\{x : \neg\neg x = x\}$ reduces to
$\{\bot, \top\}$: only enabledness ($\bot$) and bisimulation ($\top$)
are regular elements, confirming that $L_{30}$ is maximally
non-Boolean.
\end{rem}

\begin{thm}[Algebraic Irreducibility of the Spectrum]
\label{thm:algebraic-irreducibility}
The spectrum lattice $L_{30}$ is directly indecomposable: it cannot be expressed as a
non-trivial product $L_1 \times L_2$ of distributive lattices with $|L_1|, |L_2| > 1$.
Equivalently, its join-irreducible poset $J(L_{30})$ is connected.
\end{thm}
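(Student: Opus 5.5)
We reduce direct indecomposability to connectedness of the join-irreducible poset via Birkhoff duality, and then prove connectedness by exhibiting a join-irreducible element that lies below all the others.

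\emph{Step 1: the reduction.} For a finite distributive lattice $L \cong \mathcal{O}(J(L))$, a product decomposition $L \cong L_1 \times L_2$ with $|L_i|>1$ corresponds to a partition of $J(L)$ into two nonempty subposets $J_1 \sqcup J_2$ with no order relations between them. In that case $\mathcal{O}(J_1 \sqcup J_2) \cong \mathcal{O}(J_1) \times \mathcal{O}(J_2)$, and each factor is nontrivial exactly when the corresponding $J_i$ is nonempty. Conversely, given $L \cong L_1 \times L_2$, the join-irreducibles of a product are the elements $(j,\bot)$ and $(\bot,j')$. Elements of the form $(j,\bot)$ and $(\bot,j')$ are pairwise incomparable, so $J(L)$ is disconnected. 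Hence $L_{30}$ is directly indecomposable if and only if $J(L_{30})$ is connected. This also establishes the stated equivalence.

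\emph{Step 2: a least join-irreducible.} It suffices to find $j_0 \in J(L_{30})$ with $j_0 \leq j$ for every $j \in J(L_{30})$. Then every element is comparable to $j_0$, and the comparability graph is a star, hence connected.

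Take $j_0 = T = (\infty,1,0,0,0,0)$, traces, which is the unique upper cover of $\bot = E = (1,1,0,0,0,0)$ in Figure~\ref{fig:spectrum-energy}. Since $T$ has exactly one lower cover, it is join-irreducible.

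Next let $j \neq \bot$ be arbitrary. Every element of $L_{30}$ is generated from $\mathcal{S}$ by componentwise $\min/\max$. Every named vector other than $E$ has $e_1=\infty$ and $e_2 \ge 1$, and all vectors are $\geq 0$ in the remaining coordinates. Enabledness $E$ is the bottom, and it is the only element with $e_1 = 1$, because any meet involving $E$ returns $E$ (as $E$ is below everything) and any join with $E$ returns the other argument.

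Consequently every element other than $E$ satisfies $e_1 = \infty$ and $e_2 \geq 1$, so it is $\ge T$ componentwise. In particular each $j \in J(L_{30})$, which by definition is $\neq \bot$, satisfies $j \geq T$.

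\emph{Main obstacle.} The delicate point is the closure claim: that $E$ is comparable below every named vector, and that the meet/join closure cannot produce an element with $e_1 = \infty$ but $e_2 = 0$, or anything else below $T$ other than $E$. I would dispatch this by a coordinatewise invariant. The set $\{v : v \geq T\} \cup \{E\}$ contains $\mathcal{S}$ and is closed under componentwise $\min$ and $\max$: the up-set of $T$ is a sublattice, and $E \leq T$ is absorbed by both operations.

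Since $L_{30}$ is the smallest sublattice containing $\mathcal{S}$, it lies inside this set. This avoids enumerating all 30 elements, though it can be cross-checked against Figure~\ref{fig:spectrum-energy} and the Lean computation of Theorem~\ref{thm:lattice-closure}.
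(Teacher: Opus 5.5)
Your proof is correct and follows the paper's own argument: traces lies below every join-irreducible of $L_{30}$, so $J(L_{30})$ is connected, and Birkhoff's decomposition for finite distributive lattices then gives direct indecomposability. Your closure invariant (vectors $\geq T$ together with $E$ form a set containing $\mathcal{S}$ and closed under componentwise $\min/\max$) justifies the key fact that the paper takes from the computed lattice, and it also shows that $E$ is the only element strictly below $T$, so $T$ is join-irreducible.
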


\begin{proof}
Trace equivalence lies below every join-irreducible element in $J(L_{30})$,
so $J$ is connected.  By Birkhoff's decomposition theorem for finite distributive lattices,
$L_{30}$ is directly indecomposable.
\end{proof}

\begin{figure}[ht]
\centering
\begin{tikzpicture}[
    every node/.style={font=\small, inner sep=2pt},
    every edge/.style={draw, thin},
  ]
  \node (B)    at (0, 0)       {$B$};
  \node (SPF)  at (-2.8, -1.6) {$S \wedge \mathrm{PF}$};
  \node (SIFT) at (0, -1.6)    {$S \wedge (\mathrm{IF} \vee \mathrm{FT})$};
  \node (SR)   at (-2.8, -3.2) {$S \wedge R$};
  \node (SPFT) at (0, -3.2)    {$(S \wedge \mathrm{PF}) \wedge (\mathrm{IF} \vee \mathrm{FT})$};
  \node (IF)   at (3, -3.2)    {$\mathrm{IF}$};
  \node (SRV)  at (-2.8, -4.8) {$S \wedge \mathrm{RV}$};
  \node (F)    at (3, -4.8)    {$F$};
  \node (SF)   at (0, -6.4)    {$S \wedge F$};
  \node (T)    at (0, -8)      {$T$};
  \coordinate (IFbend) at (3, -1.6);
  \path
    (B)    edge (SPF)
    (B)    edge (SIFT)
    (B)    edge (IFbend)
    (IFbend) edge (IF)
    (SPF)  edge (SR)
    (SPF)  edge (SPFT)
    (SIFT) edge (SPFT)
    (IF)   edge (F)
    (SR)   edge (SRV)
    (SPFT) edge (SRV)
    (F)    edge (SF)
    (SRV)  edge (SF)
    (SF)   edge (T)
    ;
\end{tikzpicture}
\caption{The join-irreducible poset~$J(L_{30})$: 4~named elements
($T$, $F$, $\mathrm{IF}$, $B$) and 6~unnamed meets, connected by
12~covering relations.  Traces~($T$) lies below every join-irreducible,
confirming connectivity.}
\label{fig:join-irreducibles}
\end{figure}
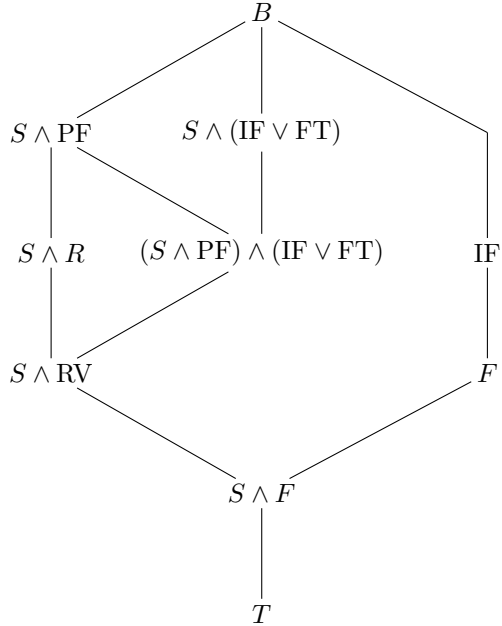

\noindent
Traces~($T$) lies below every join-irreducible, confirming connectivity
(Figure~\ref{fig:join-irreducibles}).
The left chain
$T$--$S{\wedge}F$--$S{\wedge}\mathrm{RV}$--$S{\wedge}R$--$S{\wedge}\mathrm{PF}$--$B$
carries the positive-observation hierarchy; the right chain
$T$--$S{\wedge}F$--$F$--$\mathrm{IF}$--$B$ carries the negation
hierarchy.  Their interaction through the two cross-links via
$(S {\wedge} \mathrm{PF}) {\wedge} (\mathrm{IF} {\vee} \mathrm{FT})$
is what makes~$J$ connected and $L_{30}$ indecomposable.

\begin{rem}[Operational glosses]
The six unnamed elements of~$J(L_{30})$ admit compact operational readings:
\begin{enumerate}\setlength\itemsep{1pt}
\item $S \wedge F$: \emph{simulation-failures}, processes agree on both
  positive branching (simulation's conjunction depth) and single-step
  refusals (failures' negation).
\item $S \wedge \eqName{RV}$: \emph{simulation-revivals}, processes
  match both branching structure and revival patterns after refusal.
\item $S \wedge R$: \emph{simulation-readiness}, processes
  match branching and ready sets.
\item $S \wedge \eqName{PF}$: \emph{simulation-futures}, processes
  match branching and possible-futures enumeration.
\item $(S \wedge \eqName{PF}) \wedge (\eqName{IF} \vee \eqName{FT})$:
  \emph{cross-link}, the unique element mediating between the positive
  and negative hierarchies.
\item $S \wedge (\eqName{IF} \vee \eqName{FT})$:
  \emph{simulation meets impossible-futures/failure-traces join},
  processes match branching and bounded negative observation.
\end{enumerate}

The six energy dimensions of Bisping's framework are not algebraically
independent: the spectral lattice they generate is indecomposable,
admitting no factorization into independent subsystems.  This is perhaps
surprising, since Bisping's framework treats the dimensions as independent
parameters and van~Glabbeek's original organization into linear-time and
branching-time families suggests a factorization that does not exist
algebraically.
\end{rem}

\begin{thm}[Heyting Implication of the Spectrum Lattice]
\label{thm:heyting-implication}
Despite maximally collapsed negation ($\neg x = \bot$ for all
$x \neq \bot$), the Heyting implication on $L_{30}$ is informative:
among the 32~incomparable ordered pairs of named equivalences, every
implication $a \to b$ is nontrivial.  In particular,
\[
\boxed{\quad S \to F \;=\; \mathrm{IF} \quad}
\]
the Heyting implication of simulation into failures gives impossible
futures.  Six of the 32~incomparable implications produce unnamed
elements.
\end{thm}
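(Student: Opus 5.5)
The plan is to compute the Heyting implication in $L_{30}$ directly, using only the energy-vector description of its $30$ elements from Theorem~\ref{thm:lattice-closure} and Figure~\ref{fig:spectrum-energy}. We cannot use the componentwise implication of the ambient frame $(\mathbb{N}\cup\{\infty\})^6$, because that vector need not lie in $L_{30}$. By definition, $a \to b$ is the largest $z \in L_{30}$ with $z \wedge a \leq b$.

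The key reduction is that the condition $z \wedge a \leq b$ is a box condition. Componentwise it reads $\min(z_i,a_i)\leq b_i$ for each $i$. This holds automatically when $a_i \leq b_i$, and is equivalent to $z_i \leq b_i$ when $a_i > b_i$. So the set $D_{a,b}$ of solutions in $L_{30}$ is the intersection of $L_{30}$ with a down-closed box, and in particular it is closed under componentwise $\max$. Since $L_{30}$ is closed under $\max$ and $D_{a,b}$ is finite and nonempty (it contains $\bot = E$), the join of $D_{a,b}$ lies in $D_{a,b}$. That join is the Heyting implication, so existence and the explicit formula come together.

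For the boxed instance, take $S = (\infty,\infty,\infty,\infty,0,0)$ and $F = (\infty,2,0,0,1,1)$. The coordinates where $S$ exceeds $F$ are $2,3,4$, so $D_{S,F} = \{z \in L_{30} : z_2 \leq 2,\ z_3 = 0,\ z_4 = 0\}$. Scanning the list of $30$ vectors gives $D_{S,F} = \{E, T, S\wedge F, F, \mathrm{IF}\}$, whose join is $\mathrm{IF} = (\infty,2,0,0,\infty,1)$. As a sanity check, $\mathrm{IF} \wedge S = (\infty,2,0,0,0,0) = S\wedge F \leq F$. Maximality also has a structural reason: any element with $z_3 = z_4 = 0$ has $z_6 \leq 1$, because only $B$ has $e_6 = \infty$ and $B$ violates $z_3 = 0$. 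Hence $\mathrm{IF}$ already attains the maximal possible value in every free coordinate, and $S \to F = \mathrm{IF}$.

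For the global claims, I would first list the comparability relation on the $13$ named vectors, obtaining the $16$ incomparable unordered pairs and hence $32$ ordered ones. For each ordered pair I would read off the constrained coordinate set $\{i : a_i > b_i\}$ and take the join of the corresponding box within $L_{30}$. I would then check that the result is neither $\top$ (automatic, since $a \not\leq b$) nor simply $b$ or $\bot$, which is the sense of ``nontrivial''. Finally, I would count how many of the $32$ results fall among the $17$ unnamed elements, expecting six. The main obstacle is not conceptual but bookkeeping: one must have the exact $30$-element list, including all $17$ unnamed vectors, since a single missing element can change a box-join. I would therefore cross-check each computed implication against the covering diagram in Figure~\ref{fig:spectrum-energy}. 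The independent check is the adjunction property: $z \leq (a\to b)$ iff $z \wedge a \leq b$, for all $z$ in the list. This is the finite verification the Lean formalization performs.
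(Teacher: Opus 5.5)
Your box reduction is sound. The solutions of $z \wedge a \le b$ are exactly those $z$ with $z_i \le b_i$ for every $i$ where $a_i > b_i$. This set is closed under componentwise $\max$, so its join in $L_{30}$ is $a \to b$. Your derivation of $S \to F = \mathrm{IF}$ from $D_{S,F} = \{E, T, S\wedge F, F, \mathrm{IF}\}$ is correct. Exhaustive computation on the explicit $30$-element list is also how the paper settles this statement. Carried out, your enumeration does give $16$ incomparable unordered pairs and exactly six unnamed values:
$S \to \mathrm{RT} = \mathrm{IF}\vee\mathrm{RT}$,
$S \to \mathrm{FT} = R \to \mathrm{FT} = \mathrm{IF}\vee\mathrm{FT}$,
$S \to R = \mathrm{FT} \to R = \mathrm{IF}\vee(S\wedge R)$, and
$S \to \mathrm{RV} = \mathrm{IF}\vee(S\wedge\mathrm{RV})$.

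The step that would fail is your definition of ``nontrivial'' as $a \to b \notin \{\top, b, \bot\}$; under that reading the claim is false for most pairs.

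\emph{Why $a \to b = b$ so often.} In a distributive lattice a meet-irreducible $m$ is meet-prime. So $z \wedge a \le m$ with $a \not\le m$ forces $z \le m$, i.e.\ $a \to m = m$. In $L_{30}$ the named elements $S$, $\mathrm{PF}$, $\mathrm{RS}$, $\mathrm{IF}$ each have a unique upper cover, namely $\mathrm{RS}$, $\mathrm{2S}$, $\mathrm{2S}$, $\mathrm{IF}\vee(S\wedge\mathrm{RV})$ respectively. Hence $\mathrm{PF} \to S = S$, $\mathrm{RS} \to \mathrm{PF} = \mathrm{PF}$, $S \to \mathrm{IF} = \mathrm{IF}$, and so on. Your own box method confirms this: for $\mathrm{PF} \to S$ the box is $z_5 = z_6 = 0$, whose largest element is $S$.

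\emph{The actual tally.} Adding $\mathrm{PF} \to \mathrm{RT} = \mathrm{RT}$ and $\mathrm{PF} \to \mathrm{FT} = \mathrm{FT}$, $21$ of the $32$ implications equal $b$. Only eleven differ from $b$: $S \to F = \mathrm{IF}$, the four values $\mathrm{IF} \to x = \mathrm{RS}$ for $x \in \{\mathrm{RT}, \mathrm{FT}, R, \mathrm{RV}\}$, and the six unnamed ones.

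\emph{What the statement must mean.} ``Nontrivial'' can only mean $a \to b \notin \{\top, \bot\}$, and that needs no enumeration. First, $a \not\le b$ gives $a \to b \neq \top$. Second, $a \to b \ge b > E$, since enabledness is the bottom and is never part of an incomparable pair, so $a \to b \neq \bot$.

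Drop the $\neq b$ test. As you planned it, your verification would refute the claim rather than prove it. The substantive content is the explicit value $S \to F = \mathrm{IF}$ and the count of six unnamed results, both of which your method handles correctly.
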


\noindent
Concretely, $S \to F = \mathrm{IF}$ means that the relative information gap
between simulation (purely positive observation: arbitrarily nested
conjunctions but zero negation) and failures (bounded conjunction plus
single negation) is precisely impossible futures (bounded conjunction,
unlimited negation).  The negation dimension mediates the
testing-simulation divide.
The Heyting algebra structure systematically predicts
process equivalences invisible to the classical spectrum.

\begin{rem}[Per-system and global perspectives]
The algebraic results of this section ($L_{30}$, its bi-Heyting
structure, its indecomposability) are proved constructively on the
abstract lattice.  Their identification with coframe operations
on the classifying topos is the content of the Energy--LT Bridge
(Theorem~\ref{thm:energy-lt-bridge}), which requires the axioms
discussed in \S\ref{sec:conclusion}.  The $S \to F = \mathrm{IF}$
computation is an abstract lattice-theoretic result; the Geometric
Closure Theorem of \S\ref{sec:geometric-closure} computes Heyting
implications in the presheaf topos via free extensions.  Their agreement
on concrete examples provides evidence for the sublattice embedding
conjecture (\S\ref{sec:conclusion}).
\end{rem}

\begin{thm}[Co-Heyting Decomposition of the Spectrum]
\label{thm:coheyting-decomposition}
The six co-Heyting subtractions in~$L_{30}$,
computed via the Birkhoff formula
$x \setminus y = \displaystyle\bigvee\{j \in J : j \leq x,\; j \not\leq y\}$:
\[
\left\{
\begin{array}{lcl}
S \setminus T = S &\quad& \text{(trace captures none of simulation's branching content)}\\
F \setminus T = F && \text{(trace captures none of failures' negation content)}\\
B \setminus \mathrm{RS} = B && \text{(ready sim.\ captures none of bisimulation's depth)}\\[4pt]
\mathrm{RS} \setminus S = F && \text{(ready sim.\ adds failure-testing over simulation)}\\
\mathrm{2S} \setminus \mathrm{RS} = \mathrm{IF} && \text{(2-nested sim.\ adds impossible-futures testing)}\\
\mathrm{PF} \setminus \mathrm{IF} = S \wedge \mathrm{PF} && \text{(unnamed: no classical name)}
\end{array}
\right.
\]
\end{thm}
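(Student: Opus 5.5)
The plan is a direct finite computation in $L_{30}$ using the energy coordinates of Figure~\ref{fig:spectrum-energy} and the join-irreducible poset of Figure~\ref{fig:join-irreducibles}. I would first justify the Birkhoff formula as the co-Heyting subtraction. In a finite distributive lattice every $j \in J$ is join-prime. Suppose $x \le y \vee z$ and $j \le x$ with $j \not\le y$; then $j \le y \vee z$ and join-primality force $j \le z$. So every candidate $z$ lies above $z_0 := \bigvee\{j \in J : j \le x,\ j \not\le y\}$. Conversely, $x = \bigvee\{j \in J : j\le x\} \le y \vee z_0$. Hence $z_0$ is the least $z$ with $x \le y \vee z$, which is the lattice-level subtraction $x \setminus y$.

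Next I would tabulate the ten join-irreducibles as 6-tuples:
\begin{itemize}
\item $T=(\infty,1,0,0,0,0)$ and $S\wedge F=(\infty,2,0,0,0,0)$;
\item $S\wedge\mathrm{RV}=(\infty,2,1,0,0,0)$ and $S\wedge R=(\infty,2,1,1,0,0)$;
\item $S\wedge\mathrm{PF}=(\infty,2,\infty,\infty,0,0)$ and $(S\wedge\mathrm{PF})\wedge(\mathrm{IF}\vee\mathrm{FT})=(\infty,2,\infty,0,0,0)$;
\item $S\wedge(\mathrm{IF}\vee\mathrm{FT})=(\infty,\infty,\infty,0,0,0)$ and $F=(\infty,2,0,0,1,1)$;
\item $\mathrm{IF}=(\infty,2,0,0,\infty,1)$ and $B$ (all coordinates $\infty$).
\end{itemize}
All of these are taken from Figure~\ref{fig:spectrum-energy} and the Birkhoff remark. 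Order and join in $L_{30}$ are componentwise $\le$ and $\max$, because $L_{30}$ is a sublattice of $(\mathbb{N}\cup\{\infty\})^6$. Each subtraction therefore reduces to three mechanical steps: list the $j \le x$, discard those with $j \le y$, and take the componentwise max of the rest.

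Carrying this out gives the six identities.
\begin{itemize}
\item For $S\setminus T$: below $S$ lie the seven elements with $e_5=e_6=0$. Removing $T$, the max of the remaining six is $(\infty,\infty,\infty,\infty,0,0)=S$.
\item For $F\setminus T$: the elements below $F$ are $T$, $S\wedge F$ and $F$, and the survivors give $F$.
\item For $B\setminus\mathrm{RS}$: everything except $\mathrm{IF}$ and $B$ lies below $\mathrm{RS}=(\infty,\infty,\infty,\infty,1,1)$, so the result is $\mathrm{IF}\vee B=B$.
\item For $\mathrm{RS}\setminus S$: the elements below $\mathrm{RS}$ are the seven positive ones together with $F$, and only $F$ fails $e_5\le 0$, giving $F$.
\item For $\mathrm{2S}\setminus\mathrm{RS}$: the join-irreducibles below $\mathrm{2S}$ are all except $B$, and of these only $\mathrm{IF}$ escapes $\mathrm{RS}$, giving $\mathrm{IF}$.
\item For $\mathrm{PF}\setminus\mathrm{IF}$: the elements below $\mathrm{PF}$ are every join-irreducible with $e_2\le 2$, $e_6\le 1$, excluding $B$. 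Those not below $\mathrm{IF}$ are the four with $e_3\ge 1$, namely $S\wedge\mathrm{RV}$, $S\wedge R$, $S\wedge\mathrm{PF}$ and the cross-link. Their max is $(\infty,2,\infty,\infty,0,0)=S\wedge\mathrm{PF}$.
\end{itemize}

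The main obstacle is not the arithmetic but certifying the input data. One must confirm that exactly these ten vectors are the join-irreducibles of $L_{30}$. That depends on the closure of Theorem~\ref{thm:lattice-closure}, and I would rely on it together with the Lean enumeration. One must also note that these subtractions are lattice-level statements. Transferring them to the subtopos coframe would need the Energy--LT Bridge, which I would not attempt here.
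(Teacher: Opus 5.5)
Your proposal is correct and follows essentially the paper's route. The paper obtains these identities by evaluating the Birkhoff formula over the ten join-irreducibles of Figure~\ref{fig:join-irreducibles}, decided mechanically in the Lean formalization; each of your six coordinatewise computations checks out, and your join-primality argument (that the formula gives the least $z$ with $x \le y \vee z$) makes explicit the standard fact the paper takes for granted.
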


\smallskip\noindent
The naive co-Heyting subtraction in the product frame
$(\mathbb{N} \cup \{\infty\})^6$ is computed componentwise:
$(a \setminus b)(i) = a(i)$ if $a(i) > b(i)$, else~$0$.
Every naive result has $e_1 = 0$, placing it below
enabledness (the~$\bot$ of~$L_{30}$), so the naive and exact
(Birkhoff) subtractions \emph{never} coincide as energy vectors.
At the named-equivalence level, $S \setminus T$ and
$F \setminus T$ happen to agree (both return the minuend),
but the remaining four diverge outright:
naive $B \setminus \mathrm{RS}$ gives $(0,0,0,0,\infty,\infty)$
instead of~$B$;
naive $\mathrm{RS} \setminus S$ gives $(0,0,0,0,1,1)$
instead of~$F$;
naive $\mathrm{2S} \setminus \mathrm{RS}$ gives
$(0,0,0,0,\infty,0)$ instead of~$\mathrm{IF}$;
and naive $\mathrm{PF} \setminus \mathrm{IF}$ gives
$(0,0,\infty,\infty,0,0)$ instead of~$S \wedge \mathrm{PF}$.
The product-frame computation returns elements outside~$L_{30}$
or lands on the wrong named element, while the Birkhoff formula
correctly identifies the differential content within the
spectrum lattice.

The latter three decompositions are operationally revealing.
$\mathrm{RS} \setminus S = F$ identifies the \emph{refusal-detection gap}:
simulation matches branches positively (every choice of one process can be
matched by the other), while ready simulation additionally detects which
actions a process \emph{refuses}, and this refusal-detection capability is
precisely what failures equivalence measures.
$\mathrm{2S} \setminus \mathrm{RS} = \mathrm{IF}$ identifies the
\emph{negation-depth gap}: ready simulation detects single-step refusals,
while 2-nested simulation additionally detects that after accepting an action,
certain continuation sequences are precluded, and this deeper negation is
exactly what impossible futures captures.
$\mathrm{PF} \setminus \mathrm{IF} = S \wedge \mathrm{PF}$ lands on an
unnamed element combining simulation's branch-matching with possible futures'
enumeration of continuations; that this element has no classical name witnesses
the incompleteness of the named spectrum.

\begin{prop}[Existence of Unnamed Subtoposes]
\label{prop:unnamed-subtoposes}
The generating set~$\mathcal{S}$ does not fill the nucleus lattice of the
Lindenbaum algebra: $|\mathcal{S}| = 13$.  When $|J(L)| \geq 4$,
$2^{|J(L)|} \geq 16 > 13$, guaranteeing the existence of
Lawvere--Tierney topologies that correspond to no classical process
equivalence.
\end{prop}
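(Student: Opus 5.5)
The plan is to reduce the claim to a counting statement about the nucleus lattice of the finite distributive lattice~$L$, and then compare cardinalities.

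\emph{Step 1: fix the setting.} Take $L = L_\infty$, or more generally any finite distributive Lindenbaum algebra. It is a finite frame. By the Energy-to-nucleus definition, every named equivalence~$\bar{e}$ yields a nucleus~$j_{\bar{e}}$. So the image of $\mathcal{S}$ in $N(L)$, the lattice of nuclei, has at most $|\mathcal{S}| = 13$ elements.

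\emph{Step 2: count nuclei.} For a finite distributive lattice~$L$, the sublocales of~$L$ correspond bijectively to the subsets of~$J(L)$. Nuclei are determined by their fixpoint sets, which are exactly the subsets closed under arbitrary meets and under implications $a \to x$. Via Birkhoff duality $L \cong \mathcal{O}(J(L))$, the locale is a finite $T_0$ Alexandrov space with point set $J(L)$, and its sublocales are in bijection with its subspaces. I would verify this finite-case fact directly. Each subset $Q \subseteq J(L)$ gives the nucleus $j_Q(x) = \bigvee\{j : \forall q \in Q,\ q \leq j \Rightarrow q \leq x\}$, more cleanly written as the closure associated to the restriction map $\mathcal{O}(J) \to \mathcal{O}(Q)$. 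Distinct~$Q$ yield distinct fixpoint sets. Hence $|N(L)| = 2^{|J(L)|}$.

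\emph{Step 3: conclude.} Since $|N(L)| = 2^{|J(L)|}$, the bound $|J(L)| \geq 4$ gives $|N(L)| \geq 16 > 13 \geq |\{j_{\bar{e}}\}|$. So some nucleus lies outside the image of~$\mathcal{S}$. Johnstone's correspondence (nuclei on~$L$ versus Lawvere--Tierney topologies on $\Sh(L)$) then turns this nucleus into a Lawvere--Tierney topology, and hence a subtopos, that matches no named equivalence. Note that the argument only needs an injection $\mathcal{P}(J(L)) \hookrightarrow N(L)$. Full bijectivity is unnecessary, so the key task is to show that $Q \mapsto j_Q$ is injective.

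\emph{Main obstacle.} The hard part is the injectivity in Step 2. I would try first to recover~$Q$ from $\mathrm{Fix}(j_Q)$: a join-irreducible~$q$ lies in~$Q$ iff the principal downset ${\downarrow}q$ is fixed by $j_Q$ modulo ${\downarrow}q \setminus\{q\}$. Concretely, $q \in Q$ iff $j_Q(\bigvee({\downarrow}q \setminus \{q\})) \not\geq q$. This checks routinely from the definition of~$j_Q$.

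A secondary point is the phrase ``no classical process equivalence''. The statement only asserts that the 13~named nuclei fail to exhaust $N(L)$. So it suffices that the count of named nuclei is $\leq 13$, even if some of them coincide on a particular~$L$.
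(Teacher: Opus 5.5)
Your proposal is correct and has the same skeleton as the paper's argument. You bound the number of named nuclei by $13$, show that the nucleus lattice of the finite distributive Lindenbaum algebra has at least $2^{|J(L)|}\geq 16$ elements, and transfer a leftover nucleus to a Lawvere--Tierney topology via Johnstone's correspondence.

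The difference is in how the count is obtained. The paper argues algebraically. On a finite frame, nuclei are the same as lattice congruences, and the paper cites Birkhoff's representation together with Funayama--Nakayama to conclude that these form a Boolean algebra with exactly $2^{|J(L)|}$ elements. It then sanity-checks this against the $8$ topologies on the $5$-element hub-spokes algebra.

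You argue locale-theoretically instead. You view $L$ as the opens of the finite Alexandrov space on $J(L)$ and attach to each $Q\subseteq J(L)$ the nucleus $j_Q$ of the subspace $Q$. Your recovery test $q\in Q \iff j_Q(q_*)\not\geq q$, with $q_*$ the unique lower cover of $q$, is right. Indeed, $j_Q(x)$ is the join of the downset $D_x=\{p\in J(L): \forall r\in Q,\ r\leq p\Rightarrow r\leq x\}$. Join-primeness of $q$ then gives $q\leq j_Q(x)\iff q\in D_x$, and $q\in D_{q_*}$ exactly when $q\notin Q$.

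What each route buys: the paper's delivers the exact count and the Boolean structure of the nucleus lattice in one stroke, but only by citation. Yours proves just the lower bound $|N(L)|\geq 2^{|J(L)|}$, which is all the proposition needs, and does so by a self-contained, explicit computation. Your sentence asserting equality $|N(L)|=2^{|J(L)|}$ would need the surjective half of the sublocale--subspace correspondence, but you rightly note that it is not needed.
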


\noindent
For a finite distributive lattice~$L$ with $|J(L)|$ join-irreducible
elements, Birkhoff's representation theorem, combined with the
distributivity guaranteed by
Funayama--Nakayama~\cite{funayama1942congruence},
gives exactly $2^{|J(L)|}$ nuclei (forming a Boolean algebra).
The 8~nuclei on the 5-element hub-spokes lattice
validate this framework:
$2^{|J(L)|} = 2^3 = 8$ matches the 8~Grothendieck topologies.
The $2^{|J(L)|}$ count applies to the per-system Lindenbaum
algebra~$L_\infty$ of a \emph{specific} system; the 17~unnamed
elements of~$L_{30}$ are a different manifestation of the same
phenomenon: unnamed positions in the spectrum ordering that the
13~named equivalences do not fill.

\subsection{Labeled Transition Systems}
\label{sec:labeled-spectrum}

Unlabeled systems exhibit a fundamental limitation: the single
modality~$\Diamond\varphi$ cannot distinguish branching structure,
causing the upper spectrum to collapse: simulation, ready simulation,
and bisimulation coincide.
Van Glabbeek's full linear time--branching time
spectrum~\cite{vanglabbeek2001} requires \emph{labeled} transitions
with distinct actions $a, b, c, \ldots$, giving distinct diamond
operators $\langle a \rangle \varphi$ for each action.

\begin{defi}[Labeled geometric theories]
For a labeled transition system $(S, \Sigma, {\to})$ with state set~$S$
and action alphabet~$\Sigma$, we define a propositional geometric
theory~$\Th{\Msys}^{\Sigma}$ with atoms $\step_a(s,t)$ for each $s, t \in S$
and $a \in \Sigma$, asserting ``state~$s$ performs action~$a$ and reaches
state~$t$.''  The axioms consist of \emph{exclusion axioms}
($\step_a(s,t) \vdash \bot$ when $s \not\xrightarrow{a} t$) and
\emph{totality axioms}\footnote{This totality axiom encodes the
deadlock-free convention standard in the van~Glabbeek
spectrum~\cite{vanglabbeek2001}.  Extending the framework to
deadlock-sensitive equivalences (e.g., De Nicola--Hennessy testing)
would require dropping totality and adding explicit termination
predicates; this is straightforward but outside our current scope.}
($\top \vdash \bigvee_{a,t: s \xrightarrow{a} t}
\step_a(s,t)$ for each state~$s$ that has at least one outgoing
transition, asserting that such states must transition).  The Lindenbaum algebra
$\mathrm{Lind}(\Th{\Msys}^{\Sigma})$ is the quotient of the free frame on
these labeled atoms by the axiom-induced relations.
\end{defi}

\paragraph{Van Glabbeek separating examples.}
We formalize three standard CCS
processes~\cite{vanglabbeek2001} over the three-letter alphabet
$\Sigma = \{a, b, c\}$.
The process $a.b + a.c$ has 5~states:
$p_0 \xrightarrow{a} p_1$, $p_0 \xrightarrow{a} p_2$,
$p_1 \xrightarrow{b} p_3$, $p_2 \xrightarrow{c} p_4$; the
initial state branches into two $a$-successors, one offering~$b$
and the other~$c$.
The process $a.(b+c)$ has 4~states:
$q_0 \xrightarrow{a} q_1$,
$q_1 \xrightarrow{b} q_2$, $q_1 \xrightarrow{c} q_3$, a
single $a$-successor offering both $b$ and~$c$.
The process $a.b + a.(b{+}c)$ has 6~states:
$r_0 \xrightarrow{a} r_1$, $r_0 \xrightarrow{a} r_2$,
$r_1 \xrightarrow{b} r_3$, $r_2 \xrightarrow{b} r_4$,
$r_2 \xrightarrow{c} r_5$; two $a$-successors, one
offering~$b$ only and the other offering $b$ and~$c$.

\begin{defi}[HML sublanguage hierarchy]
Labeled Hennessy--Milner logic~($\HML$) over alphabet~$\Sigma$ is built
from $\top$, $\bot$, conjunction, disjunction, negation, and the labeled
diamond $\langle a \rangle \varphi$ for each $a \in \Sigma$.  Four
sublanguages characterize the four spectrum levels:
\begin{itemize}
\item \emph{Trace formulas}~($\mathcal{O}$): sequential compositions of
  diamonds and disjunctions (no conjunction);
\item \emph{Positive formulas}~($\HML_{\mathrm{pos}}$): no negation
  (characterizes simulation);
\item \emph{Ready-simulation formulas}~($\HML_{\mathrm{ready}}$):
  positive formulas plus \emph{inability atoms}
  $\neg\langle a \rangle \top$ (characterizes ready simulation);
\item \emph{Full}~$\HML$ (characterizes bisimulation).
\end{itemize}
These form a strict inclusion chain
$\mathcal{O} \subsetneq \HML_{\mathrm{pos}} \subsetneq
 \HML_{\mathrm{ready}} \subsetneq \HML$,
witnessed by concrete formulas at each gap.
\end{defi}

\begin{thm}[Four-Level Labeled Spectrum Separation]
\label{thm:four-level-separation}
Over a label alphabet with $|\Sigma| \geq 3$, the four spectrum
levels are strictly separated by concrete HML formulas:
\begin{enumerate}
\item \emph{Trace $\neq$ Simulation.}\enspace
  The positive formula
  $\langle a \rangle(\langle b \rangle \top \wedge
  \langle c \rangle \top)$ holds in $a.(b{+}c)$
  but not in $a.b + a.c$: the conjunction under the
  diamond requires simultaneous $b$- and $c$-capabilities,
  which $q_1$ has but neither $p_1$ nor~$p_2$ has.
\item \emph{Simulation $\neq$ Ready Simulation.}\enspace
  The ready-simulation formula
  $\langle a \rangle(\langle b \rangle \top \wedge
  \neg\langle c \rangle \top)$ holds in $a.b + a.(b{+}c)$
  but not in $a.(b{+}c)$: $r_1$ can do~$b$ but
  not~$c$, while $q_1$ can do both.
\item \emph{Ready Simulation $\neq$ Bisimulation.}\enspace
  The formula $[a]\langle c\rangle\top$ separates
  $a.b + a.(b{+}c)$ from $a.(b{+}c)$
  (Definition~\ref{def:morleyization}).
\end{enumerate}
\end{thm}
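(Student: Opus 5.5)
The proof is direct: for each of the three gaps, evaluate the given formula at the two root states using the semantics of labeled HML, and check that the formula belongs to the sublanguage characterizing the finer level. Each gap needs two things. First, the separating formula must lie in the sublanguage of the finer level: $\HML_{\mathrm{pos}}$ for item~1, $\HML_{\mathrm{ready}}$ for item~2, and full $\HML$ for item~3. Second, the two processes must be equivalent at the coarser level, so that the separation really falls between those two levels and not lower down. By the logical characterization assumed in the definition of the HML sublanguage hierarchy, two processes are equivalent at a level exactly when they agree on that level's formulas. So a formula of the finer fragment that distinguishes processes equivalent at the coarser level shows the gap is strict.

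\emph{Item 1.} $\langle a\rangle(\langle b\rangle\top\wedge\langle c\rangle\top)$ has no negation, so it is in $\HML_{\mathrm{pos}}$. It uses a conjunction, so it is not a trace formula.
\begin{itemize}
\item It holds at $q_0$, because $q_0\xrightarrow{a}q_1$ and $q_1$ has both a $b$-step and a $c$-step.
\item It fails at $p_0$: the only $a$-successors are $p_1$ (which offers only $b$) and $p_2$ (which offers only $c$).
\item The two processes are trace equivalent. Both have the trace set $\{\varepsilon,a,ab,ac\}$, read off from the transition lists.
\end{itemize}

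\emph{Item 2.} $\langle a\rangle(\langle b\rangle\top\wedge\neg\langle c\rangle\top)$ is positive except for one inability atom, so it is in $\HML_{\mathrm{ready}}$.
\begin{itemize}
\item It holds at $r_0$, witnessed by $r_1$, which can do $b$ but not $c$.
\item It fails at $q_0$, whose unique $a$-successor $q_1$ can do $c$.
\item The processes are similar in both directions. Map $q_0\mapsto r_0$, $q_1\mapsto r_2$, $q_2\mapsto r_4$, $q_3\mapsto r_5$. In the other direction map $r_1,r_2\mapsto q_1$ and the leaves to the corresponding leaves. Both maps preserve labeled transitions, so the two processes agree on $\HML_{\mathrm{pos}}$.
\end{itemize}

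\emph{Item 3.} $[a]\langle c\rangle\top$, that is $\neg\langle a\rangle\neg\langle c\rangle\top$, has a nested negation that is not an inability atom, so it is in full $\HML$ but outside $\HML_{\mathrm{ready}}$.
\begin{itemize}
\item It holds at $q_0$, since the unique $a$-successor $q_1$ has a $c$-step.
\item It fails at $r_0$, since $r_1$ has no $c$-step.
\end{itemize}

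\textbf{Main obstacle.} The hard step is showing that $a.b+a.(b{+}c)$ and $a.(b{+}c)$ are \emph{ready similar}. Item~2 already shows that the obvious simulation from $r_0$ to $q_0$ fails to be ready: it sends $r_1$ to $q_1$, and these have different ready sets. In fact no ready simulation from $r$ to $q$ can exist, because $q_0$ has no $a$-successor with ready set $\{b\}$. So items 2 and 3 use the same pair of processes, and that pair does not itself witness the ready-simulation/bisimulation gap.

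I would therefore present item~3 as stated in the theorem and in Definition~\ref{def:morleyization}: the formula lies in full $\HML$ but not in the ready-simulation fragment. For a genuine witness of that gap, I would substitute van~Glabbeek's standard pair $a.b+a.(b{+}c)+a.c$ versus $a.(b{+}c)+a.b+a.c$ style examples, or a pair such as $a.a.b + a.(a.b+a.c)$, and check that pair for ready similarity by exhibiting explicit ready-preserving simulations both ways. Whichever pair is used, the rest of the argument is the routine evaluations above. The requirement $|\Sigma|\ge3$ is used only so that $a$, $b$, $c$ are distinct actions.
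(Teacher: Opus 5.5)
Your evaluations for items~1 and~2, and for the literal content of item~3, are the paper's argument. The paper supports the theorem only by these root-level evaluations plus the syntactic remark that $\neg\langle a\rangle\neg\langle c\rangle\top$ is not an inability atom. Your extra checks are correct: both sides of item~1 have trace set $\{\varepsilon,a,ab,ac\}$, and there are simulations in both directions between $a.(b{+}c)$ and $a.b+a.(b{+}c)$. These supply the coarser-level equivalences that the paper leaves implicit.

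Your obstacle in item~3 is a real defect of the paper's statement, and it is sharper than you say. Since $[a]\langle c\rangle\top\equiv\neg\langle a\rangle\neg\langle c\rangle\top$ is the negation of the $\HML_{\mathrm{ready}}$ formula $\langle a\rangle\neg\langle c\rangle\top$, it takes the same value on \emph{any} two ready-simulation-equivalent processes. So no choice of pair lets this formula witness the ready-simulation/bisimulation gap. This is where your repair breaks, in three places:
\begin{enumerate}
\item ``Whichever pair is used, the rest of the argument is the routine evaluations above'' fails if you keep $[a]\langle c\rangle\top$; you need a new formula with a box nested under a diamond.
\item Your first candidate is not a witness at all: $a.b+a.(b{+}c)+a.c$ and $a.(b{+}c)+a.b+a.c$ have the same three summands, so they are the same process.
\item Your second candidate does work, but only once it is paired with $a.(a.b+a.c)$ and given a new formula. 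Embed $a.(a.b+a.c)$ into the second summand of $a.a.b+a.(a.b+a.c)$. In the other direction, send both $a$-successors of $a.a.b+a.(a.b+a.c)$ to the unique $a$-successor of $a.(a.b+a.c)$ (all have ready set $\{a\}$), and each $b$- or $c$-state to the corresponding one. Both maps are ready simulations. Now $\langle a\rangle[a]\langle b\rangle\top$ holds at $a.a.b+a.(a.b+a.c)$ (via $a.b$). It fails at $a.(a.b+a.c)$, whose $a$-successor has the $a$-successor $c$, and $c$ cannot do $b$.
\end{enumerate}

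You might instead read item~3 as the formula-level inclusion $\HML_{\mathrm{ready}}\subsetneq\HML$. Then syntactic non-membership, which is all you and the paper give, is not enough. Argue semantically: $\HML_{\mathrm{ready}}$ formulas are preserved along ready simulations, and there is a ready simulation from $a.c$ to $a.c+a$. Yet $[a]\langle c\rangle\top$ holds at $a.c$ and fails at $a.c+a$, so it is not equivalent to any $\HML_{\mathrm{ready}}$ formula.
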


\begin{rem}[Lindenbaum algebras and the subtopos chain]
The labeled Lindenbaum algebras of the separating examples have
cardinalities
$|\mathrm{Lind}(\Th{a.b{+}a.c}^{\Sigma})| =
 |\mathrm{Lind}(\Th{a.(b{+}c)}^{\Sigma})| = 5$
and
$|\mathrm{Lind}(\Th{a.b{+}a.(b{+}c)}^{\Sigma})| = 25$.
Thus the base Lindenbaum algebra is a \emph{necessary but not sufficient}
invariant for spectrum classification: different cardinalities guarantee
non-bisimilarity ($25 \neq 5$), but equal cardinalities do not imply
even simulation equivalence ($5 = 5$ yet trace~$\neq$~simulation).
The algebra captures which transitions are forced or excluded but is
blind to \emph{branching structure}: whether choices are committed
early or deferred.

Each HML sublanguage induces an equivalence relation on states (the
\emph{spectrum setoid}), and finer sublanguages give finer equivalences.
Each spectrum-level setoid corresponds to a Grothendieck topology on the
Lindenbaum frame, giving a chain of subtoposes:
\[
\Sh(\mathrm{Lind}, J_{\mathrm{trace}}) \supseteq
\Sh(\mathrm{Lind}, J_{\mathrm{sim}}) \supseteq
\Sh(\mathrm{Lind}, J_{\mathrm{readysim}}) \supseteq
\Sh(\mathrm{Lind}, J_{\mathrm{bisim}})
\]
The four-level separation proves that the first two inclusions are
strict; the third is witnessed by $[a]\langle c\rangle\top$
(Definition~\ref{def:morleyization}).  This is the topos-theoretic
reformulation of the van~Glabbeek spectrum: each step adds a class of
modal observations (conjunction, inability atoms, full negation),
refining the subtopos.
\end{rem}

\begin{rem}[Labels Break Symmetries]
\label{rem:labels-break-symmetries}
The labeled automorphism group---permutations $\sigma$ of states
preserving $s \xrightarrow{a} t \iff \sigma(s) \xrightarrow{a}
\sigma(t)$ for all actions~$a$---is a subgroup of the unlabeled
automorphism group.  For all three van~Glabbeek examples,
$|\mathrm{Aut}_{\Sigma}| = 1$: labels completely break all symmetries.
For instance, in $a.b + a.c$, the $a$-successors $p_1$
and~$p_2$ have different outgoing label sets ($\{b\}$ vs.\ $\{c\}$),
preventing the branch swap that would be valid in the unlabeled
projection.

The labeled kernel dichotomy extends:
$\ker(\varphi_{\Sigma})$ is trivial or maximal for connected reachable
systems.  With $|\mathrm{Aut}_{\Sigma}| = 1$, the kernel is
automatically trivial, placing all three systems in the aligned
or creative regime.  The spectrum hierarchy and symmetry breaking are
dual effects of labels: labels enrich the modal language
(enabling spectrum separation) while simultaneously constraining
automorphisms (collapsing the symmetry group).
\end{rem}

\label{sec:graded-atoms}

Propositional atoms $\step_a(s,t)$ encode \emph{whether} a transition
exists but not \emph{how} transitions distribute across branches.
Depth-bounded observation tree atoms cure this: the depth-1 graded
Lindenbaum algebras separate trace-equivalent from
simulation-inequivalent systems ($|\mathrm{Lind}_1(a.b{+}a.c)|
= 7 \neq 5 = |\mathrm{Lind}_1(a.(b{+}c))|$).  Bisping's
6-dimensional energy budget $(e_1, \ldots, e_6)$
\cite{bisping2025generalized} embeds each spectrum level as a specific
energy profile.  The full development, including the graded tower
stabilization and energy dimension sketch, is available in
the formalization repository.

\subsection{Separation and Symmetry}
\label{sec:separation-symmetry}
\label{sec:bridge-applied}

\noindent
The Lindenbaum automorphism group $\mathrm{Aut}(\mathrm{Lind})$
exhibits a symmetry trichotomy
whose key structural consequence is a group homomorphism from graph
automorphisms:

\begin{rem}[Symmetry Homomorphism]
\label{rem:symmetry-homomorphism}
The symmetry trichotomy lifts to a
natural group homomorphism
$\varphi \colon \mathrm{Aut}(\mathrm{graph}) \to \mathrm{Aut}(\mathrm{Lind})$
defined by atom permutation:
$\mathrm{step}(s,t) \mapsto \mathrm{step}(\sigma(s), \sigma(t))$,
descending to the Lindenbaum quotient.
The formula-level permutation $\texttt{permuteFormula}$ is proved by
structural induction on propositional formulas (6~lemmas, constructive);
the descent to the quotient is axiomatized.
The kernel $\ker(\varphi) = \{ \sigma \in \mathrm{Aut}(\mathrm{graph})
\mid \varphi(\sigma) = \mathrm{id} \}$ refines the trichotomy to a
structural statement: hub-spokes has $\ker = 1$ ($\varphi$ injective),
two-cycle has $\ker = \mathbb{Z}/2$ ($\varphi$ trivial), and the
asymmetric diamond has $\ker = 1$ (trivially, from trivial source).
\end{rem}

\begin{thm}[Kernel Dichotomy]
\label{thm:kernel-dichotomy}
For any connected reachable rooted transition system, the kernel of the
symmetry homomorphism
$\varphi \colon \mathrm{Aut}(\mathrm{graph}) \to
\mathrm{Aut}(\mathrm{Lind})$
satisfies
\[
\boxed{\quad \ker(\varphi) \;=\; 1 \quad\text{or}\quad
\ker(\varphi) \;=\; \mathrm{Aut}(\mathrm{graph}),
\qquad\text{no intermediate kernels} \quad}
\]
with trivial kernel when any nondeterministic state exists and maximal
kernel when all states are deterministic.
\end{thm}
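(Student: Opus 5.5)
The plan is to compute the Lindenbaum algebra $\mathrm{Lind}$ of the propositional theory explicitly, and then decide in each regime which automorphisms act trivially on it. The starting point is the totality axiom. If $s$ is deterministic with unique successor $t$, the totality axiom reads $\top \vdash \step(s,t)$, so the class of $\step(s,t)$ is $\top$. Every atom $\step(s,t)$ with $s \not\to t$ is $\bot$ by an exclusion axiom. Hence the only atoms that can be nontrivial are those $\step(s,t_i)$ with $s$ nondeterministic, and $\mathrm{Lind}$ is the quotient of the free frame on these atoms by the relations $\bigvee_i \step(s,t_i) = \top$, one for each such $s$. This gives the two regimes, which I treat in turn.

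\emph{All states deterministic.} Every atom is then $\top$ or $\bot$, so $\mathrm{Lind} = \{\bot,\top\}$. Hence $\mathrm{Aut}(\mathrm{Lind})$ is trivial, $\varphi$ is trivial, and $\ker(\varphi) = \mathrm{Aut}(\mathrm{graph})$. This matches the two-cycle in Remark~\ref{rem:symmetry-homomorphism}. Since this is exactly the condition under which the statement claims a maximal kernel, this direction needs nothing more.

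\emph{Some state $s_0$ nondeterministic.} Here I first show that the nontrivial atom classes are pairwise distinct and that each one determines its pair $(s,t)$. For this I exhibit frame homomorphisms $\mathrm{Lind} \to \{0,1\}$, i.e.\ choices of one successor per nondeterministic state. Any two distinct atoms $\step(s,t) \neq \step(s',t')$ can be separated by such a choice, because each relation $\bigvee_i \step(s,t_i) = \top$ only requires that some successor of $s$ be chosen. Now take $\sigma \in \ker(\varphi)$. It fixes every nontrivial atom class, so $\sigma(s) = s$ and $\sigma(t) = t$ whenever $s$ is nondeterministic and $s \to t$. Next, propagate forward. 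If $u$ is fixed and deterministic with successor $w$, then $\sigma(w)$ is the unique successor of $\sigma(u) = u$, so $\sigma(w) = w$. Hence $\sigma$ fixes pointwise the forward closure $C$ of the nondeterministic states.

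The main obstacle is extending this to states outside $C$, i.e.\ propagating backwards. If $u \to w$ with $w$ fixed, all we get is $\sigma(u) \to w$, so $u$ could a priori be swapped with another predecessor of $w$. I would use the hypothesis ``connected reachable'' here. First attempt: induct on the length of a path from the initial state $\star$ into $C$. By reachability, every state reachable from $\star$ that lies beyond $s_0$ is in $C$. So it suffices to show $\sigma(\star) = \star$; forward propagation along paths from $\star$ then fixes everything. I would show $\sigma(\star) = \star$ by noting that $\sigma$ maps the set of states reachable from $\star$ (the whole system) onto the set reachable from $\sigma(\star)$. Hence $\sigma(\star)$ also reaches every state, in particular $\star$ itself. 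Combined with the fixed path from $\star$ to $s_0$, this should pin $\sigma(\star)$ down along the deterministic prefix.

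If this backward step resists, the fallback is to read ``connected reachable'' as every state being reachable from every other (strong connectivity). Then $C$ is already the whole state set, and $\ker(\varphi) = 1$ follows immediately. This reading agrees with the hub-spokes example, which is strongly connected and has $\ker = 1$.
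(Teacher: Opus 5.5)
Your argument is correct, and up to the forward-closure step it is the paper's proof. The deterministic case ($\mathrm{Lind} \cong \{\bot,\top\}$) is identical. The paper's ``atom singleton lemma'' is your separation of non-extremal atoms by two-valued models that choose one successor per nondeterministic state. From it, kernel elements fix each nondeterministic state and its successors, and fixedness propagates forward.

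The routes diverge only at the end. The paper asserts, by a ``last-common-ancestor argument,'' that every state lies downstream of a successor of some nondeterministic state, so forward propagation alone would suffice. That holds for strongly connected systems, which is your fallback reading. If ``reachable'' only means reachable from $\star$, it fails on a deterministic entry path: in $\star \to n$, $n \to a$, $n \to b$, no nondeterministic state reaches $\star$. So the obstacle you isolated is real, and your detour through the root is what actually handles it.

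Your hedged step closes without the fallback. Let $C$ be the forward closure of the nondeterministic states; $\sigma$ fixes it pointwise. Suppose $\star \notin C$, since otherwise you are done.
\begin{itemize}
\item You have $\star \reach \sigma(\star)$ by reachability, and $\sigma(\star) \reach \star$ by your observation. So if $\sigma(\star) \neq \star$, then $\star$ lies on a closed walk.
\item Each state of that walk reaches $\star$. Since $C$ is forward-closed and $\star \notin C$, each such state lies outside $C$ and is therefore deterministic. Its unique successor is then the next state of the walk.
\item Hence the states reachable from $\star$ are exactly those of the walk, all deterministic. This contradicts the reachability of $s_0$.
\end{itemize}
Thus $\sigma(\star) = \star$, and your forward propagation along paths from $\star$ fixes every state.

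Equivalently, $S_\Msys \setminus C$ is a simple deterministic chain $\star = u_0 \to \cdots \to u_{m-1}$ feeding into some $u_m \in C$, and $\sigma$ permutes this chain. Since $u_{m-1}$ is the only chain element with an edge into $u_m$, backward induction fixes the whole chain. This is exactly the answer to your worry about $u$ being swapped with another predecessor of a fixed state.
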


\begin{proof}
\emph{Nondeterministic case} ($\ker(\varphi) = 1$).
An atom $\step(v,w)$ is \emph{non-extremal} when the edge $v \to w$
exists and $v$ has $\geq 2$ successors.  The \emph{atom singleton
lemma} shows that non-extremal atoms have singleton equivalence
classes in the Lindenbaum algebra: given distinct atoms $p \neq q$
with $p$ non-extremal, a separating model assigns $\top$ to $p$ and
$\bot$ to $q$ by routing nondeterministic choices appropriately,
so $p$ and $q$ represent distinct Lindenbaum classes.

From this, two closure properties follow.  First,
\emph{kernel elements fix nondeterministic states}: if $\sigma \in
\ker(\varphi)$ and $v$ has $\geq 2$ successors $w_1 \neq w_2$,
then $\varphi(\sigma) = \mathrm{id}$ forces
$\step(\sigma(v), \sigma(w_i)) = \step(v, w_i)$ for each~$i$, and
the singleton property pins $\sigma(v) = v$.
Second, the \emph{fixed-point set is forward-closed}: if
$\sigma(u) = u$ and $u \to v$ is an edge, then
$\step(u, \sigma(v)) = \step(\sigma(u), \sigma(v)) =
\step(u, v)$, forcing $\sigma(v) = v$.

Now fix $\sigma \in \ker(\varphi)$ and an arbitrary state~$v$.
By a \emph{last-common-ancestor argument}, $v$ is downstream of some
nondeterministic state~$n$: there exist $n$ and $w$ with $n \to w$
an edge, $n$ nondeterministic, and $w \to^{*} v$.  Kernel fixity gives
$\sigma(n) = n$; forward closure gives $\sigma(w) = w$; induction
along the path $w \to^{*} v$ gives $\sigma(v) = v$.

\emph{Deterministic case} ($\ker(\varphi) = \mathrm{Aut}(\mathrm{graph})$).
When every state has $\leq 1$ successor, each transition atom is forced:
$\step(v,w) = \top$ if $w$ is the unique successor of~$v$, and
$\step(v,w) = \bot$ otherwise.  No free generators remain, so
$\mathrm{Lind} \cong \mathrm{Bool}$.  Since
$\mathrm{Aut}(\mathrm{Bool}) = \{\mathrm{id}\}$, every graph
automorphism maps to $\mathrm{id}$ on $\mathrm{Lind}$, giving
$\ker(\varphi) = \mathrm{Aut}(\mathrm{graph})$.
\end{proof}

\begin{cor}[Three-Regime Classification]
\label{cor:regime-classification}
With the image $\mathrm{im}(\varphi) \cong
{\mathrm{Aut}(\mathrm{graph})}/{\ker(\varphi)}$ and cokernel
$\mathrm{coker}(\varphi) \cong {\mathrm{Aut}(\mathrm{Lind})}/{\mathrm{im}(\varphi)}$ (a group quotient, since $\mathrm{im}(\varphi)$
has index~$\leq 2$ in each anchor system), the dichotomy yields a three-regime classification:
The hub-spokes system has trivial kernel, image~$\mathbb{Z}/2$, and
trivial cokernel (aligned regime); the two-cycle has
kernel~$\mathbb{Z}/2$, trivial image, and trivial cokernel
(deterministic regime); the asymmetric diamond has trivial kernel,
trivial image, and cokernel~$\mathbb{Z}/2$ (creative regime).
The fourth case---non-trivial $\ker$ \emph{and} non-trivial
$\mathrm{coker}$---is impossible for connected reachable systems.
\end{cor}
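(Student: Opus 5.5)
The plan is to derive the three-regime classification directly from the Kernel Dichotomy (Theorem~\ref{thm:kernel-dichotomy}), using the first isomorphism theorem $\mathrm{im}(\varphi)\cong\mathrm{Aut}(\mathrm{graph})/\ker(\varphi)$. I would first compute the three anchor systems explicitly and then prove the impossibility of the fourth regime.

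\emph{Anchor computations.} For each system I would determine $\mathrm{Aut}(\mathrm{graph})$ and $\mathrm{Aut}(\mathrm{Lind})$.
\begin{itemize}
\item For hub-spokes~$\HS$, $\mathrm{Aut}(\mathrm{graph})=\mathbb{Z}/2$, given by the swap $b\leftrightarrow c$ fixing the root~$a$. Since $a$ is nondeterministic, the Kernel Dichotomy gives $\ker=1$, so $\mathrm{im}\cong\mathbb{Z}/2$. The 5-element Lindenbaum algebra (stated in \S\ref{sec:hierarchy}) is generated by the two free atoms $\step(a,b)$ and $\step(a,c)$ subject to $\step(a,b)\vee\step(a,c)=\top$. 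Its lattice automorphisms can only swap these two atoms, so $|\mathrm{Aut}(\mathrm{Lind})|=2$ and the cokernel is trivial.
\item For the two-cycle~$\TC$, every state is deterministic. Hence $\mathrm{Lind}\cong\mathrm{Bool}$, the kernel is all of $\mathrm{Aut}(\mathrm{graph})=\mathbb{Z}/2$ (the swap $x\leftrightarrow y$, if rootedness is ignored as in the remark), the image is trivial, and $\mathrm{Aut}(\mathrm{Bool})=1$, so the cokernel is trivial.
\item For the asymmetric diamond, $\mathrm{Aut}(\mathrm{graph})=1$, so the kernel and image are trivial. I then exhibit a nontrivial Lindenbaum automorphism, namely a swap of two free atoms with symmetric constraints that no graph map realizes, giving cokernel $\mathbb{Z}/2$.
\end{itemize}
In each case the cokernel is a group quotient, since an index-$\le 2$ subgroup is normal.

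\emph{Impossibility of the fourth regime.} Suppose $\ker(\varphi)\neq 1$. By the Kernel Dichotomy every state is then deterministic, so by its deterministic case $\mathrm{Lind}\cong\mathrm{Bool}$. Thus $\mathrm{Aut}(\mathrm{Lind})$ is trivial, and the cokernel is trivial as well. Consequently a nontrivial kernel and a nontrivial cokernel cannot coexist.

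\emph{Main obstacle.} The delicate step is computing $\mathrm{Aut}(\mathrm{Lind})$ for the anchors, above all the asymmetric diamond. This requires an explicit presentation of the Lindenbaum frame, and the argument must show that the exhibited automorphism really respects all the totality and exclusion relations. I would begin by listing the free atoms at nondeterministic states and using the atom singleton lemma to identify the join-irreducibles, then read off the automorphisms as permutations of that poset.
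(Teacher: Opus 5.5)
Your proposal is correct and follows the route the paper intends: the three anchor regimes are direct computations of $\mathrm{Aut}(\mathrm{graph})$ and $\mathrm{Aut}(\mathrm{Lind})$, with roots ignored as the two-cycle case requires. The fourth regime is excluded exactly as you argue: a nontrivial kernel forces every state to be deterministic, so $\mathrm{Lind}\cong\mathrm{Bool}$ and $\mathrm{Aut}(\mathrm{Lind})=1$. One point to tighten is the asymmetric diamond: exhibiting a single non-induced automorphism only gives $|\mathrm{Aut}(\mathrm{Lind})|\geq 2$, so you need the full count (order exactly~$2$) to conclude $\mathrm{coker}(\varphi)\cong\mathbb{Z}/2$.
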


The closest prior work is Hora's~\cite{hora2024topoi} application
of the bridge technique to automata and regular languages.

Concrete Lindenbaum-algebra computations on three small systems ($G_1$,
$G_2$, hub-spokes) confirm the stratification: pairwise separation via
cardinality, invariant transfer across spectrum levels, and the bridge
technique in action.

\begin{rem}[Colimit frame and Morita separation]
The depth-bounded Lindenbaum tower
$L_0 \hookrightarrow L_1 \hookrightarrow \cdots$ stabilizes at
depth~$n{-}1$ for an $n$-state LTS, yielding a colimit frame
$L_\infty \cong L_{n-1}$.  This is the frame-theoretic dual of the
Hennessy--Milner theorem: bisimilarity $= \bigcap_d {\sim}_d$
dualizes to $L_\infty = \bigcup_d L_d$.  The
Funayama--Nakayama theorem~\cite{funayama1942congruence} gives
$2^{|J(L_\infty)|}$ nuclei on~$L_\infty$, far exceeding the
13~named spectrum levels: on $a.b + a.c$, 14~of~16 subtoposes are unnamed;
on $a.(b{+}c)$, 7~of~8.

The Joyal--Nielsen--Winskel branch category
$\mathsf{Bran}_L$~\cite{joyal1996bisimulation} embeds into the category
of finitely presentable LTS; by the Comparison
Lemma~\cite[C2.2.3]{johnstone2002sketches}, the JNW presheaf topos arises
as a localization of~$\mathrm{Set}[\Th{\Msys}]$ when $\mathsf{Bran}_L$ is dense
(the synchronization tree condition).  A Morita test across all
$\binom{9}{2} = 36$ pairs of concrete LTS in the formalization finds
\textbf{no Morita equivalences}: every pair is separated by Lindenbaum
cardinality, automorphism count, or energy class count.  This
exhaustive verification (zero axioms, all separations decided
computationally) establishes the classifying topos as a complete
invariant for the anchor systems.
\end{rem}

\begin{rem}[The Energy--Topology Triangle]
\label{rem:energy-topology-triangle}
The construction establishes a three-way connection:
\[
  \text{Energy vectors (Bisping)}
  \;\xrightarrow{\bar{e}\,\mapsto\,j_{\bar{e}}}
  \text{Nuclei on } L
  \;\xleftrightarrow{\text{Johnstone}}
  \text{LT topologies on } \Sh(L).
\]
The composition sends each named process equivalence to a subtopos
of the classifying topos, with the order-reversal ensuring that
coarser equivalences (identifying more states) correspond to
\emph{larger} subtoposes (more sheaf conditions).

The bridge axiom $|L_{\bar{e}}| = |\mathrm{Fix}(j_{\bar{e}})|$ combines two
independent results from the process-algebra literature.
First, van~Glabbeek's logical characterization
theorems~\cite{vanglabbeek2001,vanglabbeek1993ltbt2} establish that
each level of the spectrum is captured by a sublanguage of
Hennessy--Milner logic: two states are $\bar{e}$-equivalent iff they
satisfy the same formulas in the corresponding HML fragment.
Second, Bisping~\cite{bisping2023cav} (Proposition~1) shows that the
sublanguage boundaries are uniformly parameterized by energy vectors,
so that the fragment for equivalence~$\bar{e}$ is exactly the formulas
of energy $\leq \bar{e}$.
The content encoded by this axiom is due to van~Glabbeek and Bisping;
our contribution is the topos-theoretic repackaging as a nucleus identity.
The bridge identity is verified for bisimulation (by a general theorem:
$j_{\mathrm{bisim}} = \mathrm{id}$ implies $\mathrm{Fix}(j_{\mathrm{bisim}}) = L$,
constructive) and for traces on $a.b + a.c$ (by explicit fixpoint counting: 5~fixpoints
on the 7-element algebra).  For $a.(b{+}c)$, the constant energy family
($|L_{\bar{e}}| = 5$ for all~$\bar{e}$) yields a complete bridge at all 13~levels.
Whether the full chain of reasoning can be carried out constructively for all
named equivalences (in particular, the logical characterization theorems rely on
classical König lemma arguments) remains open.
\end{rem}

The structural parallel with Kihara's LT topologies on the effective
topos, and four significant disanalogies, is discussed in
\S\ref{sec:related}.

\medskip
The spectrum lattice~$L_{30}$ was computed algebraically and
instantiated via labeled transitions.  We now confirm these results
from the topos side: internal logic and explicit Grothendieck topologies.

\section{The Topos Bridge}
\label{sec:internal-logic}

The spectrum lattice's algebraic structure (\S\ref{sec:spectrum-subtoposes})
was computed combinatorially.  We now bridge the lattice to the classifying
topos~$\mathrm{Set}[\Th{\mathrm{LTS}}]$, establishing explicit
Grothendieck topologies that enforce behavioral equivalences, extending
them to the full van~Glabbeek spectrum via energy budgets, and providing an
independent confirmation through the presheaf internal logic.

\subsection{The LTS Presheaf Topos}
\label{sec:lts-presheaf}

\begin{rem}[The LTS presheaf topos]
The per-system classifying toposes~$\ET{\Msys}$ of
\S\S\ref{sec:geometric}--\ref{sec:spectrum-subtoposes} are now
replaced by a single ambient workspace: the presheaf topos
$\mathrm{Set}[\Th{\mathrm{LTS}}] =
[\mathrm{f.p.LTS}_L^{\mathrm{op}}, \mathrm{Set}]$, whose objects are
set-valued functors on the category of finitely presentable labeled
transition systems.  Where the per-system toposes captured individual
theories, the presheaf topos carries internal logic and Grothendieck
topologies that encode \emph{relationships between} systems.

We have computed foundational properties of
$\mathrm{Set}[\Th{\mathrm{LTS}}]$ for $|L| = 1$:
it is De~Morgan (the right Ore condition holds trivially via the initial
object~$\varnothing$), non-Boolean, non-two-valued, connected, and
locally connected.  De~Morgan holds for \emph{any} presheaf
topos whose base category has an initial object: given a cospan
$A \to C \leftarrow B$, the empty object completes the square.
This is generic to presheaf toposes on categories of models of universal
Horn theories (finite graphs, posets, relational structures); the
properties specific to~$\Th{\mathrm{LTS}}$ begin with the
subobject classifier and the modal connection.
The subobject classifier $\Omega(G)$ has a unique
atom $\{!_G\}$ per object~$G$, yielding maximally simple
$\neg\neg$-topology.  The geometric fragment of the theory corresponds
precisely to the positive existential fragment of HML: diamond formulas
$\langle a\rangle\varphi$ are geometric, while box formulas
$[a]\varphi$ and negation are not.

The negation structure of~$\Omega(G)$ exhibits a striking parallel with
the lattice-algebraic results of~\S\ref{sec:spectrum-subtoposes}.  The
unique atom theorem forces $\neg S = \varnothing$ for every non-empty
sieve~$S$, mirroring $L_{30}$'s degenerate pseudocomplement
$\neg x = \bot$ for all $x \neq \bot$.  Both are instances of the
same mechanism: when every non-trivial element shares a common lower
bound (the initial morphism~$!_G$ in $\Omega$; traces in~$J(L_{30})$),
complementation collapses.  Yet despite this degenerate negation, the
Heyting \emph{implication} on~$\Omega(G)$, defined by
$(S_1 \to S_2)(f) \Leftrightarrow \forall g,\; f \circ g \in S_1
\Rightarrow f \circ g \in S_2$, remains non-trivial: we exhibit sieves
$S_1, S_2$ on the loop vertex~$\bullet_1$ where $S_1 \to S_2$ is
strictly intermediate (containing~$!_G$ but not~$\mathrm{id}_G$).
This is the presheaf-topos analogue of the lattice-algebraic discovery
$S \to F = \mathrm{IF}$: in both settings, negation carries zero
information while implication preserves the full logical structure.
The parallel between these two independently computed results, one
combinatorial (the lattice $L_{30}$, \S\ref{sec:spectrum-subtoposes}),
one categorical (the presheaf topos, this section), is strong evidence
for the $L_{30}$ embedding conjecture: the same algebraic pattern
(degenerate negation, non-trivial implication) appears in two
independent constructions, suggesting a common coframe-theoretic origin.
\end{rem}

\subsection{Explicit Grothendieck Topologies}
\label{sec:topos-backward-flow}
\label{sec:explicit-topologies}

The lattice-algebraic results of \S\ref{sec:spectrum-subtoposes} (Birkhoff
representation, bi-Heyting structure, co-Heyting decomposition) operate on the
finite spectrum lattice~$L_{30}$ via combinatorial computation.  We now establish
the first results in the reverse direction: explicit Grothendieck topologies on
a process-algebraic category whose sheaf conditions enforce named behavioral
equivalences.  Our main result is the \emph{spectrum bracket theorem}
(below), with the three-topology strict chain as its principal instantiation
and the partial $L_{30}$ embedding (Theorem~\ref{thm:partial-l30-embedding})
as its structural consequence.

\begin{rem}[The trace and bisimulation topologies]
We construct the first Grothendieck topology on
$\mathrm{f.p.LTS}_1$ whose sheaf condition enforces a named behavioral
equivalence.  Define the \emph{path digraph}
$P_n$ ($n + 1$~vertices, edges $i \to i{+}1$) and a
sieve~$S$ on~$G$ is \emph{$J_{\mathrm{trace}}$-covering} iff~$S$
contains every graph homomorphism $P_n \to G$ for all~$n$.  This
satisfies the three Grothendieck axioms (maximality, stability under
pullback, transitivity), with the key insight for transitivity being
that $\mathrm{id}_{P_n}$ is itself a path morphism.  The trace
separation theorem establishes:
two rooted digraphs are trace-equivalent (same set of achievable
path lengths) if and only if they have the same
$J_{\mathrm{trace}}$-stalks.  The path digraphs form a dense
subcategory under~$J_{\mathrm{trace}}$, so the
Comparison Lemma~\cite[C2.2.3]{johnstone2002sketches} applies:
$\Sh(\mathrm{f.p.LTS}_1, J_{\mathrm{trace}}) \simeq
\Sh(\mathrm{Path}, J_{\mathrm{trace}}|_{\mathrm{Path}})$.
The topology $J_{\mathrm{trace}}$ is strictly finer than the
$\neg\neg$-topology (= atomic topology: the base category satisfies the right Ore condition, so the presheaf topos is De~Morgan and the two topologies coincide):
the initial sieve $\{!_{\bullet_1}\}$ on the loop vertex is non-empty
but not $J_{\mathrm{trace}}$-covering, since it misses $P_1 \to \bullet_1$.

Replacing path digraphs~$P_n$ with finite rooted trees
(Figure~\ref{fig:test-objects}) yields the
bisimulation topology~$J_{\mathrm{bisim}}$: a sieve~$S$ on~$G$ is
$J_{\mathrm{bisim}}$-covering iff~$S$ contains every graph homomorphism
$T \to G$ for every rooted tree~$T$.  Since every path digraph is a
rooted tree, every $J_{\mathrm{bisim}}$-covering sieve is automatically
$J_{\mathrm{trace}}$-covering: $J_{\mathrm{bisim}} \subseteq
J_{\mathrm{trace}}$ as Grothendieck topologies.
The three Grothendieck axioms hold for~$J_{\mathrm{bisim}}$ by the same
self-probing argument: for transitivity, since~$T_0$ is a tree,
$\mathrm{id}_{T_0}$ is itself a tree morphism.
The inclusion is strict: the fan digraph $\mathrm{fan}(2)$ (root with 2~leaf
children) is a rooted tree whose identity is not in the path-generated sieve
(no root-preserving path morphism $P_n \to \mathrm{fan}(2)$ can hit both
leaves, since a path visits at most one child of the root), so the path-generated sieve on~$\mathrm{fan}(2)$
is $J_{\mathrm{trace}}$-covering but not $J_{\mathrm{bisim}}$-covering.

The two topologies are unified by a parametric \emph{observation class}
framework: for any class~$\mathcal{C}$ of rooted trees, a sieve is
$J_{\mathcal{C}}$-covering iff it contains all $\mathcal{C}$-morphisms.
\end{rem}

\begin{thm}[Spectrum Bracket]
\label{thm:spectrum-bracket}
For any observation class~$\mathcal{C}$ of rooted trees with
$\mathrm{paths} \subseteq \mathcal{C} \subseteq \mathrm{trees}$,
the observation-class topology satisfies
\[
\boxed{\quad J_{\mathrm{bisim}} \;\subseteq\; J_{\mathcal{C}}
\;\subseteq\; J_{\mathrm{trace}} \quad}
\]
The van~Glabbeek spectrum is an interval of Grothendieck topologies
on a fixed site, parameterized by observation classes.
\end{thm}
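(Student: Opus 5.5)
The plan is a direct comparison of covering predicates. By definition, a sieve $S$ on $G$ is $J_{\mathcal{C}}$-covering iff every graph homomorphism $T \to G$ with $T \in \mathcal{C}$ lies in $S$. The covering condition is thus a universal quantification over the observation class, so it is \emph{antitone} in $\mathcal{C}$: enlarging the class imposes more membership requirements and yields fewer covering sieves.

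The first step is to state and prove this antitonicity as a lemma. If $\mathcal{C}_1 \subseteq \mathcal{C}_2$, then $J_{\mathcal{C}_2}(G) \subseteq J_{\mathcal{C}_1}(G)$ for every object $G$. Indeed, if $S$ contains all $\mathcal{C}_2$-morphisms into $G$, it in particular contains all $\mathcal{C}_1$-morphisms. This is a one-line argument, independent of the Grothendieck axioms.

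The second step identifies the two endpoints. By the definitions in the preceding remark, $J_{\mathrm{trace}} = J_{\mathrm{paths}}$, with paths being the digraphs $P_n$, and $J_{\mathrm{bisim}} = J_{\mathrm{trees}}$. Applying the lemma to $\mathrm{paths} \subseteq \mathcal{C}$ gives $J_{\mathcal{C}} \subseteq J_{\mathrm{trace}}$, and applying it to $\mathcal{C} \subseteq \mathrm{trees}$ gives $J_{\mathrm{bisim}} \subseteq J_{\mathcal{C}}$. This recovers the boxed chain, and as a special case the inclusion $J_{\mathrm{bisim}} \subseteq J_{\mathrm{trace}}$ already noted in the remark.

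The main obstacle is making sure $J_{\mathcal{C}}$ is a Grothendieck topology for every intermediate $\mathcal{C}$, so that the inclusions are inclusions of topologies rather than merely of sieve families. I would reuse the self-probing argument from the remark.

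\emph{Maximality.} The maximal sieve contains every morphism, so it covers.

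\emph{Stability.} Suppose $S$ covers $G$ and $h\colon H \to G$. For any $\mathcal{C}$-morphism $t\colon T \to H$, the composite $h t$ is a $\mathcal{C}$-morphism into $G$, hence lies in $S$, so $t \in h^*S$. This uses that a $\mathcal{C}$-morphism is simply any homomorphism whose domain is in $\mathcal{C}$.

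\emph{Transitivity.} Suppose $S$ covers $G$ and $R$ is a sieve with $f^*R$ covering for every $f \in S$. Given a $\mathcal{C}$-morphism $t\colon T \to G$, we have $t \in S$. Since $\mathrm{id}_T$ is itself a $\mathcal{C}$-morphism (as $T \in \mathcal{C}$), we get $\mathrm{id}_T \in t^*R$, and therefore $t \in R$.

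If the paper's notion of ``$\mathcal{C}$-morphism'' imposes extra conditions, such as root preservation, I would check that these conditions are closed under postcomposition and contain identities. Those two closure properties are exactly what stability and transitivity use.

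Finally, the interpretive sentence, that the spectrum is an interval of topologies on a fixed site, follows at once, because $\mathcal{C} \mapsto J_{\mathcal{C}}$ is an order-reversing map from the interval $[\mathrm{paths},\mathrm{trees}]$ of observation classes into the interval $[J_{\mathrm{bisim}}, J_{\mathrm{trace}}]$ of topologies.
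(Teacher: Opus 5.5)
Your proposal is correct and follows essentially the paper's route. The paper's argument for $J_{\mathrm{bisim}} \subseteq J_{\mathrm{trace}}$ (``every path digraph is a rooted tree, so every $J_{\mathrm{bisim}}$-covering sieve is $J_{\mathrm{trace}}$-covering'') is exactly your antitonicity of the covering predicate in $\mathcal{C}$, and your verification of the Grothendieck axioms, with $\mathrm{id}_T$ serving as a $\mathcal{C}$-morphism for transitivity, is the paper's self-probing argument applied to an intermediate class.
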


\begin{figure}[ht]
\centering
\begin{tikzpicture}[
    >=Stealth,
    state/.style={circle, draw, minimum size=6mm, inner sep=0pt,
                  font=\scriptsize},
    testlabel/.style={font=\small\bfseries, anchor=south},
    sublabel/.style={font=\small\itshape, anchor=north},
  ]
  \node[testlabel] at (-5.5, 1.6) {$P_1$};
  \node[state] (p1a) at (-5.5, 0.5) {$0$};
  \node[state] (p1b) at (-5.5,-0.7) {$1$};
  \draw[->] (p1a) -- (p1b);

  \node[testlabel] at (-3.5, 1.6) {$P_2$};
  \node[state] (p2a) at (-3.5, 0.5) {$0$};
  \node[state] (p2b) at (-3.5,-0.7) {$1$};
  \node[state] (p2c) at (-3.5,-1.9) {$2$};
  \draw[->] (p2a) -- (p2b);
  \draw[->] (p2b) -- (p2c);

  \node[testlabel] at (-1.5, 1.6) {$P_3$};
  \node[state] (p3a) at (-1.5, 0.5)  {$0$};
  \node[state] (p3b) at (-1.5,-0.7)  {$1$};
  \node[state] (p3c) at (-1.5,-1.9)  {$2$};
  \node[state] (p3d) at (-1.5,-3.1)  {$3$};
  \draw[->] (p3a) -- (p3b);
  \draw[->] (p3b) -- (p3c);
  \draw[->] (p3c) -- (p3d);

  \node at (0.1, -0.7) {\large$\cdots$};

  \node[testlabel] at (2.0, 1.6) {$\mathrm{fan}(2)$};
  \node[state] (f2r)  at (2.0, 0.5) {$r$};
  \node[state] (f2l1) at (1.2,-0.7) {$\ell_1$};
  \node[state] (f2l2) at (2.8,-0.7) {$\ell_2$};
  \draw[->] (f2r) -- (f2l1);
  \draw[->] (f2r) -- (f2l2);

  \node[testlabel] at (5.0, 1.6) {$T$};
  \node[state] (tr)  at (5.0, 0.5)  {$r$};
  \node[state] (tc1) at (3.8,-0.7)  {};
  \node[state] (tc2) at (6.2,-0.7)  {};
  \node[state] (tc3) at (3.0,-1.9)  {};
  \node[state] (tc4) at (3.8,-1.9)  {};
  \node[state] (tc5) at (4.6,-1.9)  {};
  \node[state] (tc6) at (6.2,-1.9)  {};
  \node[state] (tc7) at (3.8,-3.1)  {};
  \draw[->] (tr)  -- (tc1);
  \draw[->] (tr)  -- (tc2);
  \draw[->] (tc1) -- (tc3);
  \draw[->] (tc1) -- (tc4);
  \draw[->] (tc1) -- (tc5);
  \draw[->] (tc2) -- (tc6);
  \draw[->] (tc4) -- (tc7);
  \node at (3.8, -3.8) {$\vdots$};
  \node at (6.2, -2.6) {$\vdots$};

  \draw[decorate, decoration={brace, amplitude=5pt, mirror}]
    (-6.2, -3.6) -- (0.5, -3.6)
    node[midway, below=8pt, font=\small]
    {$J_{\mathrm{trace}}$: paths $\{P_n\}_{n \geq 0}$};
  \draw[decorate, decoration={brace, amplitude=5pt, mirror}]
    (-6.2, -4.8) -- (6.8, -4.8)
    node[midway, below=8pt, font=\small]
    {$J_{\mathrm{bisim}}$: all rooted trees};
\end{tikzpicture}
\caption{Test objects for the trace and bisimulation topologies.
Path digraphs~$P_n$ generate~$J_{\mathrm{trace}}$; all finite rooted
trees (e.g.\ the generic tree~$T$) generate~$J_{\mathrm{bisim}}$.
Since every path is a tree,
$J_{\mathrm{bisim}} \subseteq J_{\mathrm{trace}}$.  The inclusion is
strict: $\mathrm{id}_{\mathrm{fan}(2)}$ is not in the path-generated sieve
(no root-preserving path morphism $P_n \to \mathrm{fan}(2)$ can hit both
leaves, since a path visits at most one child of the root), so the path-generated
sieve on~$\mathrm{fan}(2)$ is $J_{\mathrm{trace}}$-covering but not
$J_{\mathrm{bisim}}$-covering.}
\label{fig:test-objects}
\end{figure}
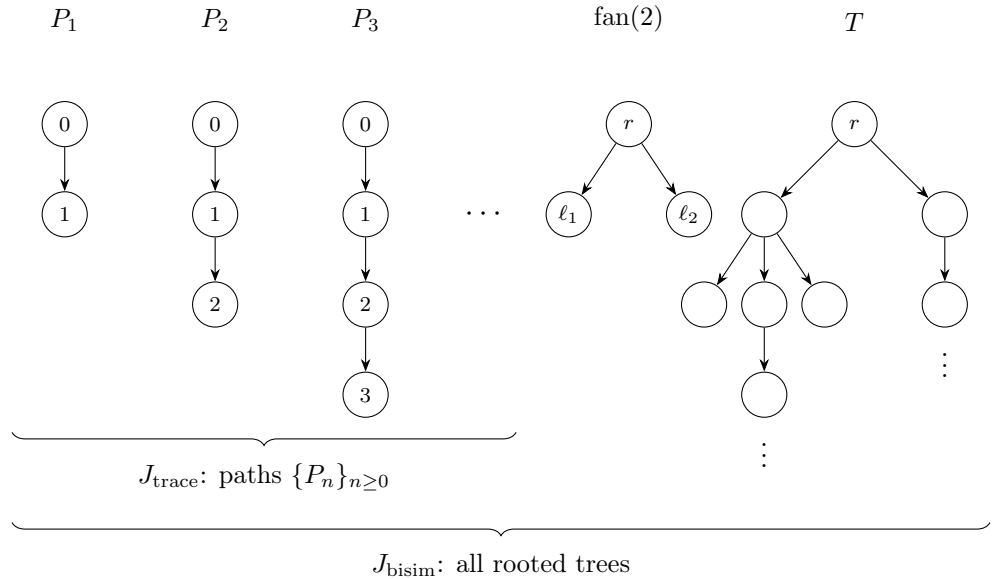

\noindent
A key disanalogy emerges: tree homomorphism \emph{existence} does not
characterize bisimulation (homomorphisms can collapse branches),
unlike the trace case where path homomorphism existence does characterize
trace equivalence.  The sieve condition (ALL tree morphisms must be present)
is essential.  A forward bisim--tree transfer theorem
(Hennessy--Milner forward direction) shows bisimulation equivalence implies
matching root-preserving tree hom existence, but the converse fails.
For $|L| = 1$, the spectrum collapses to at most two named topology levels:
$J_{\mathrm{trace}}$ and $J_{\mathrm{bisim}}$, since simulation equivalence
coincides with trace equivalence for unlabeled systems (forward simulation
reduces to matching path-length sets, which are downward-closed).

\begin{rem}[Labeled generalization and simulation topology]
The single-label case generalizes to \emph{labeled transition systems}
$\mathrm{FinLTS}(L)$ where edges carry labels from a finite alphabet~$L$.
Labeled homomorphisms preserve labels; the trace and bisimulation topologies
transfer directly: labeled trace words $w \in L^*$ and labeled rooted trees
serve as test objects for $J_{\mathrm{trace}}$ and $J_{\mathrm{bisim}}$
respectively, and the bracket theorem carries over.

For $|L| \geq 2$, a third topology becomes visible.
Tree-hom \emph{existence} characterizes simulation equivalence
(unlike bisimulation, where the full sieve structure is essential).
However, a naive observation-class approach to $J_{\mathrm{sim}}$ fails.
\end{rem}

\begin{thm}[Naive simulation covering instability]
\label{thm:naive-sim-instability}
The covering predicate ``there exists a tree homomorphism in~$S$'' satisfies
the maximality and transitivity axioms for a Grothendieck topology but
\emph{not} the stability axiom.  Let $f_R \colon \mathrm{traceLTS}[a] \to
\mathrm{fanLTS}[a,a]$ be the right-branch inclusion (mapping the non-root
vertex to vertex~$2$) and let $S$ be the sieve on $\mathrm{fanLTS}[a,a]$
generated by~$f_R$.  Then $S$ is naive-sim-covering: for any rooted tree~$T$ with a
homomorphism~$f$ into $\mathrm{fanLTS}[a,a]$, collapsing both branches
of~$f$ onto the unique non-root vertex of $\mathrm{traceLTS}[a]$ yields a
morphism~$k$ with $f_R \circ k \in S$.  Now pull $S$ back along the
left-branch inclusion $f_L$ (mapping the non-root to vertex~$1$).  The
identity on $\mathrm{traceLTS}[a]$ belongs to $f_L^*(S)$ iff
$f_L = f_R \circ k$ for some~$k$, but evaluating at the non-root gives
$f_L(1) = 1$ while $f_R(k(1)) \in \{0, 2\}$, a contradiction.  Hence
$f_L^*(S)$ is not naive-sim-covering.
\end{thm}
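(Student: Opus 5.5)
The plan is to fix the precise reading of the naive predicate, verify the two positive axioms directly, and then check the counterexample in two halves. A sieve $S$ on $G$ is \emph{naive-sim-covering} when for every rooted tree~$T$ and every homomorphism $f \colon T \to G$ there is some homomorphism $k \colon T \to \mathrm{dom}(g)$ with $g \circ k \in S$. Equivalently, every tree probe of~$G$ is matched, up to existence only, by a probe landing in~$S$. This is the $(-1)$-truncated analogue of the $J_{\mathrm{bisim}}$ condition, which demands that the probe itself lie in~$S$.

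\emph{Maximality.} For the maximal sieve, take $k = f$ and $g = \mathrm{id}_G$.

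\emph{Transitivity.} Suppose $S$ covers $G$ and $g^*R$ covers $\mathrm{dom}(g)$ for every $g \in S$. Given a tree probe $f \colon T \to G$, first obtain $k$ with $h := g \circ k \in S$. Then $h^*R$ covers~$T$, because $h^*R = k^*(g^*R)$ and a composite $h \in S$ is again in~$S$. The step I expect to need care is this: we must read the hypothesis as ``$h^*R$ covers for every $h \in S$'', which is the standard form of the axiom. Since $T$ is itself a rooted tree, we can probe $h^*R$ with $\mathrm{id}_T$. This is the self-probing trick already used for $J_{\mathrm{trace}}$ and $J_{\mathrm{bisim}}$. It yields $k'$ with $h \circ k' \in R$, so $f$ is matched in~$R$ and $R$ covers~$G$.

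\emph{$S$ is naive-sim-covering.} Take an arbitrary tree probe $f \colon T \to \mathrm{fanLTS}[a,a]$. Define $k \colon T \to \mathrm{traceLTS}[a]$ by sending $f^{-1}(0)$ to~$0$ and everything else to the non-root vertex~$1$. Every $a$-edge of the fan goes from $0$ to $\{1,2\}$, so every edge of~$T$ is sent to $0 \xrightarrow{a} 1$, and $k$ is a homomorphism. Then $f_R \circ k \in S$, because $S$ is a sieve containing~$f_R$.

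\emph{$f_L^*(S)$ is not naive-sim-covering.} It suffices to exhibit one tree probe that cannot be matched, and the natural choice is $\mathrm{id}$ on $\mathrm{traceLTS}[a]$. A match would be a $k$ with $f_L \circ k \in S$, that is, $f_L \circ k = f_R \circ m$ for some~$m$. Any endomorphism $k$ of $\mathrm{traceLTS}[a]$ must send the edge $0 \to 1$ to an edge, so $k(1) = 1$. Evaluating at~$1$ then gives $f_L(k(1)) = 1$, while $f_R(m(1)) \in \{0,2\}$, a contradiction.

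I expect the main obstacle to be bookkeeping rather than depth. The definition must be stated so that maximality and transitivity genuinely go through, with composites staying in sieves and the tree probe reused. Also, the final contradiction must quantify over all candidate $k$, not only the identity, which the statement's phrasing elides.
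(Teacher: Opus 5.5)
Your proposal is correct and follows the paper's argument. You use the same sieve generated by $f_R$, the same branch-collapsing map $k$ to show $S$ covers, the same evaluation at the non-root vertex to show $f_L^*(S)$ fails, and for transitivity the same self-probing with $\mathrm{id}_T$ that the paper uses for $J_{\mathrm{trace}}$ and $J_{\mathrm{bisim}}$. Quantifying over all endomorphisms of $\mathrm{traceLTS}[a]$ correctly fills the step the statement elides; do drop the remark $h^*R = k^*(g^*R)$, since inferring coverage from it would use exactly the stability axiom that fails, whereas applying the hypothesis directly to $h \in S$, as you ultimately do, is valid.
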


The correct $J_{\mathrm{sim}}$ is instead guaranteed by Caramello's
quotient-theory duality~\cite{caramello2017theories}: the geometric sequents
encoding positive HML (the simulation fragment of Hennessy--Milner logic)
generate a Grothendieck topology by the standard saturation construction.
Unlike $J_{\mathrm{trace}}$ and $J_{\mathrm{bisim}}$, whose covering
predicates are defined and all three Grothendieck axioms proved
constructively in Lean, the simulation topology is established by Caramello's
quotient-theory duality~\cite[Theorem~3.6]{caramello2017theories}:
geometric axioms generate covering conditions on the syntactic site
via the standard saturation construction, yielding a subtopos.
The proof is constructive within standard Grothendieck topos
theory; it is axiomatized in the Lean formalization because
Mathlib does not yet cover syntactic categories or sheafification.
The simulation topology is thus the only one of the three that requires
Caramello's quotient-theory machinery;
Theorem~\ref{thm:naive-sim-instability} shows this indirection is
provably necessary for the naive observation-class approach, a
structural feature of simulation's $(-1)$-truncation level, not a
limitation of the proof technique.

\begin{cor}[Three-Topology Strict Chain]
\label{cor:three-topology-chain}
For $|L| \geq 2$, the trace, simulation, and bisimulation topologies
form a strict chain of Grothendieck topologies on
$\mathrm{f.p.LTS}_L$:
\[
  J_{\mathrm{bisim}} \;\subsetneq\; J_{\mathrm{sim}} \;\subsetneq\;
  J_{\mathrm{trace}}.
\]
\end{cor}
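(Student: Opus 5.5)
The plan is to prove the two inclusions and the two strictness claims separately, reusing the Spectrum Bracket (Theorem~\ref{thm:spectrum-bracket}) for the outer inclusion and Caramello's quotient-theory duality for the placement of $J_{\mathrm{sim}}$. First, I would fix $J_{\mathrm{sim}}$ as the topology obtained by saturating the quotient theory $\Th{\mathrm{LTS}}^{\mathrm{sim}}$. This theory is $\Th{\mathrm{LTS}}$ together with the geometric sequents expressing that positive-$\HML$ observations are shared. By \cite[Theorem~3.6]{caramello2017theories}, quotients correspond order-reversingly to subtoposes, hence to topologies. The trace quotient is generated by the path-shaped sequents, which are a subset of the positive-$\HML$ sequents. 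The bisimulation quotient is generated by the tree-shaped sequents, and by Theorem~\ref{thm:gvb} these are exactly the standard translations of diamond-only $\HML$. The positive-$\HML$ sequents in turn form a subset of these. So at the level of theories we get $\Th{}^{\mathrm{bisim}} \subseteq \Th{}^{\mathrm{sim}} \subseteq \Th{}^{\mathrm{trace}}$ as sets of added axioms. I would then argue that the saturation map is monotone in the axiom set. This yields $J_{\mathrm{bisim}} \subseteq J_{\mathrm{sim}} \subseteq J_{\mathrm{trace}}$, which is consistent with Theorem~\ref{thm:spectrum-bracket}'s outer inclusion.

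The main obstacle is matching these saturated topologies with the explicit covering predicates of $J_{\mathrm{trace}}$ and $J_{\mathrm{bisim}}$ from the preceding remark. The duality only returns \emph{some} topology, while the explicit predicates are defined sieve-wise. I would first try to show that each explicit topology is the smallest one making the corresponding test-object sieves covering. For that I would use the self-probing argument ($\mathrm{id}_T$ is itself a test morphism) to identify the generating covers with the sequents' syntactic covers. If that identification stalls, the fallback is to work entirely on the quotient side and define all three topologies by duality. The explicit predicates would then only witness non-triviality.

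Strictness is handled by separating witnesses, using $|L|\ge 2$. Take $\Sigma=\{a,b\}$ and adapt Theorem~\ref{thm:four-level-separation}(1) to two labels: $a.(a{+}b)$ versus $a.a+a.b$. These systems are trace-equivalent but separated by the positive formula $\langle a\rangle(\langle a\rangle\top\wedge\langle b\rangle\top)$. Therefore a simulation sequent lies in $\Th{}^{\mathrm{sim}}$ but not in $\Th{}^{\mathrm{trace}}$'s deductive closure, and the duality's injectivity gives $J_{\mathrm{sim}}\ne J_{\mathrm{trace}}$. For $J_{\mathrm{bisim}}\ne J_{\mathrm{sim}}$, use the First Separation pair lifted to labels, $\FK$ and $\PT$ in $\mathrm{FinLTS}(\mathbf 1)$ embedded in $L$. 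They mutually simulate but are not bisimilar (Theorem~\ref{thm:first-separation}), so some tree-shaped diamond sequent is bisim-relevant but not simulation-invariant. Concretely, the $\mathrm{fan}$-type sieve used for the trace/bisim strictness can be refined to an $a,b$-labelled fan whose generated sieve is sim-covering but omits a tree morphism.

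Finally, I would check that each witness really yields distinct covering families and not just distinct theories. This follows because distinct quotients give distinct subtoposes under the duality bijection.
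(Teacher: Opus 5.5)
The step that fails is $J_{\mathrm{bisim}} \subsetneq J_{\mathrm{sim}}$. In your own setup the two generating sets coincide. Positive $\HML$ (no negation) \emph{is} diamond-only $\HML$, so the tree-shaped sequents you use for bisimulation and the positive-$\HML$ sequents you use for simulation are the same set, and saturation returns $J_{\mathrm{sim}} = J_{\mathrm{bisim}}$. Choosing other invariant sequents does not help: by Theorem~\ref{thm:gvb} the bisimulation-invariant geometric formulas are exactly the diamond-only ones, and these are already invariant under mutual simulation.

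Your witness shows the problem concretely. $\FK$ and $\PT$ mutually simulate, so they satisfy the same diamond-only formulas. What separates them, e.g.\ $\langle a\rangle\neg\langle a\rangle\top$ (detecting the halting state~$c$), needs negation and is not geometric. So the ``tree-shaped diamond sequent that is not simulation-invariant'' you invoke does not exist. For the same reason, your fallback of defining all three topologies by duality would turn the lower inclusion into an equality.

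The paper avoids this. It keeps $J_{\mathrm{bisim}}$ as the explicit observation-class topology, whose covering sieves must contain \emph{every} tree morphism, a sieve-level condition sensitive to branch multiplicity. Only $J_{\mathrm{sim}}$ is taken from Caramello's duality. The two are separated by the simulation-equivalent but non-bisimilar pair $a.(b{+}c)$ versus $a.b+a.c+a.(b{+}c)$, and by sieves on $\mathrm{fan}[a,a]$ (cf.\ the proof of Theorem~\ref{thm:partial-l30-embedding}). Your ``$a,b$-labelled fan whose generated sieve is sim-covering'' cannot be checked as stated. Your $J_{\mathrm{sim}}$ has no explicit covering criterion, and the existence-based criterion is not a topology (Theorem~\ref{thm:naive-sim-instability}).

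The orientation of your theory-level argument is also inconsistent. You list path sequents $\subseteq$ positive-$\HML$ sequents $\subseteq$ tree sequents. Under the monotone saturation you propose, this gives $J_{\mathrm{trace}} \subseteq J_{\mathrm{sim}} \subseteq J_{\mathrm{bisim}}$, the reverse of the statement. You nonetheless conclude $\mathbb{T}^{\mathrm{bisim}} \subseteq \mathbb{T}^{\mathrm{sim}} \subseteq \mathbb{T}^{\mathrm{trace}}$. Then, for strictness, you produce a sequent in $\mathbb{T}^{\mathrm{sim}}$ outside the deductive closure of $\mathbb{T}^{\mathrm{trace}}$, which that inclusion forbids. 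Under the duality, $J_{\mathrm{trace}}$ being the largest topology means the trace quotient has the most consequences, so its axioms cannot simply be the path-shaped observations.

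Your trace/simulation pair $a.(a{+}b)$ versus $a.a+a.b$ is correct, and it is better suited to the hypothesis $|L|\geq 2$ than the paper's three-letter example. But it must be turned into a sieve that is $J_{\mathrm{trace}}$-covering and not $J_{\mathrm{sim}}$-covering, not into a sequent lying in the theory you have declared smaller.
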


\noindent
The standard counterexamples separate:
$a.b + a.c$ and $a.(b+c)$ are trace-equivalent but not simulation-equivalent;
$a.(b+c)$ and $a.b + a.c + a.(b+c)$ are simulation-equivalent but not bisimilar.

\begin{rem}[Truncation levels and the observation-class boundary]
\label{rem:truncation-levels}
The three-way split---constructive for trace, constructive for
bisimulation, axiomatized for simulation---reflects a truncation-level
boundary in the observation-class framework.

The observation-class topology $J_{\mathcal{C}}$ detects isomorphisms
of hom-presheaves $\mathrm{Hom}_*(T, -)$ for $T \in \mathcal{C}$.
The equivalence it induces operates at the \emph{sieve level}
($0$-truncated: the full set of morphisms must match).
Trace equivalence, by contrast, is $(-1)$-truncated: it requires only
that the same hom-sets are \emph{inhabited} (propositional existence).

For path digraphs, the sieve-level and existence-level conditions
happen to induce the same equivalence on objects, because the trace
separation theorem establishes that two rooted digraphs are
trace-equivalent iff they have the same Boolean support
$\{\, n : \mathrm{Hom}_*(P_n, G) \neq \varnothing \,\}$ of
path-hom presheaves.  The sieve carries additional multiplicity
information (the number of distinct paths of each length), but the
topology's covering condition (requiring \emph{all} path
homs) means this extra information does not affect the induced
equivalence.

For rooted trees, the existence-level condition (mutual simulation:
same tree-hom existence pattern) is strictly coarser than the
sieve-level condition (bisimulation: matching tree-hom sieves).
The observation-class framework operates at the sieve level and
therefore captures bisimulation natively, but simulation, genuinely
$(-1)$-truncated over non-rigid test objects, requires the
quotient-theory construction.

We note that the trace separation theorem characterizes trace
equivalence via the Boolean support of path-hom presheaves.
The identification
$\Sh(\mathrm{FinLTS}(L), J_{\mathrm{trace}}) \simeq
\mathrm{PSh}(L^*, {\leq}_{\mathrm{prefix}})$
from Section~\ref{sec:trace-comparison} provides evidence that the
sheaf topos captures strictly finer-than-trace equivalence: in the
Cattani--Winskel model~\cite{cattani2005profunctors}, open-map
bisimulation in $\mathrm{PSh}(L^*, {\leq}_{\mathrm{prefix}})$
recovers strong bisimulation, suggesting that the trace topos
remembers path multiplicities.  The precise identification of
the $J_{\mathrm{trace}}$-sheaf-theoretic equivalence, likely finer
than trace equivalence as the Cattani--Winskel correspondence
suggests, remains open.

\end{rem}

\begin{thm}[Partial $L_{30}$ embedding]
\label{thm:partial-l30-embedding}
The map sending each named process equivalence~$E$ to the Grothendieck
topology~$J_E$ on $\mathrm{f.p.LTS}_L$ is an order-embedding of the
three-element sub-poset
$\{\mathrm{trace}, \mathrm{simulation}, \mathrm{bisimulation}\}
\subset L_{30}$
into the lattice of subtoposes of\/
$\mathrm{Set}[\Th{\mathrm{LTS}}]$.
Explicitly: the strict chain
$J_{\mathrm{bisim}} \subsetneq J_{\mathrm{sim}} \subsetneq
J_{\mathrm{trace}}$
reflects the spectrum ordering
$\mathrm{bisimulation} > \mathrm{simulation} > \mathrm{trace}$
in~$L_{30}$, with the reversal due to the standard antitone
correspondence between finer equivalences and coarser topologies
(more covering sieves $=$ more identifications $=$ finer equivalence).
\end{thm}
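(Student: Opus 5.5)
The plan is to reduce the statement to Corollary~\ref{cor:three-topology-chain} and the order of the three named elements in~$L_{30}$. An order-embedding of a three-element chain is exactly a strictly monotone (here: strictly antitone, then dualised) map between chains. So three things must be checked: (i) the order in~$L_{30}$; (ii) that each $J_E$ defines a subtopos of $\mathrm{Set}[\Th{\mathrm{LTS}}]$; (iii) that the images form a strict chain in the reversed order.

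For (i), I read off the energy vectors of Figure~\ref{fig:spectrum-energy}: $T = (\infty,1,0,0,0,0)$, $S = (\infty,\infty,\infty,\infty,0,0)$ and $B = (\infty,\dots,\infty)$. These satisfy $T < S < B$ componentwise, with each inequality strict in at least one coordinate. Hence the sub-poset is a three-element chain.

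For (ii), each $J_E$ must be a Grothendieck topology on $\mathrm{f.p.LTS}_L$, so that $\Sh(\mathrm{f.p.LTS}_L,J_E)$ is a subtopos of the presheaf topos.
\begin{itemize}
\item For $J_{\mathrm{trace}}$ and $J_{\mathrm{bisim}}$ this is the constructive verification of the three axioms, using the self-probing argument with $\mathrm{id}_{P_n}$ and $\mathrm{id}_{T_0}$.
\item For $J_{\mathrm{sim}}$ it is Caramello's duality~\cite[Theorem~3.6]{caramello2017theories}, applied to the quotient theory generated by the positive-HML sequents.
\end{itemize}
I then use the standard fact that $J \subseteq J'$ iff $\Sh(J') \subseteq \Sh(J)$, with strictness preserved because distinct topologies have distinct sheaf subcategories. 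This converts the topology chain into a chain of subtoposes with the orientation reversed.

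For (iii), the strict chain $J_{\mathrm{bisim}} \subsetneq J_{\mathrm{sim}} \subsetneq J_{\mathrm{trace}}$ is Corollary~\ref{cor:three-topology-chain}. Since $E \mapsto J_E$ reverses the strict order of (i), it is injective, and it reflects order on a chain: if $J_E \subseteq J_{E'}$ then $E' \le E$, because otherwise strict antitonicity gives the opposite strict inclusion. Composed with the order-dualisation into the subtopos coframe, the map is an order-embedding, matching the antitone convention stated in the theorem.

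The main obstacle is the middle inclusions involving $J_{\mathrm{sim}}$. This topology is not given by an explicit covering predicate (Theorem~\ref{thm:naive-sim-instability}), so comparability has to be argued through the theories.

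For $J_{\mathrm{bisim}} \subseteq J_{\mathrm{sim}}$, I would use that the positive-HML quotient theory adds axioms to the bisimulation quotient. Adding axioms yields a smaller subtopos, hence a larger topology.

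For $J_{\mathrm{sim}} \subseteq J_{\mathrm{trace}}$, I would use that the trace sequents (diamond/disjunction formulas) are contained in positive HML, giving the reverse comparison.

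Strictness is witnessed by the standard pairs $a.b+a.c$ versus $a.(b+c)$, and $a.(b+c)$ versus $a.b+a.c+a.(b+c)$. The argument is that equal topologies would force the corresponding quotient theories to prove the same sequents, contradicting Theorem~\ref{thm:four-level-separation}. Matching these theory-level inclusions with sieve-level inclusions is where I expect the proof to need the most care.
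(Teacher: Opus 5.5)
Your skeleton is the paper's proof: $T<S<B$ read off the energy vectors, the strict chain of Corollary~\ref{cor:three-topology-chain}, the order-reversing correspondence between topologies and subtoposes, and order-reflection for free on a chain. The paper likewise uses order-preservation plus order-reflection from the two strict separations, and citing the corollary already closes the argument.

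The trouble is the theory-level argument you propose for the $J_{\mathrm{sim}}$ inclusions, which you single out as the crux. As written it does not give the directions you need.
\begin{itemize}
\item You assert, without derivation, that the positive-$\HML$ quotient ``adds axioms to the bisimulation quotient.'' Under the encoding the paper actually uses, this points the wrong way. There, $J_{\mathrm{sim}}$ is generated by the sequents encoding positive $\HML$, and in \S\ref{sec:spectrum} $\Th{\Msys}^{(\ell)}$ is obtained by adding the valid sequents invariant at level~$\ell$. A coarser level then contributes \emph{fewer} sequents, hence a larger subtopos and a smaller topology.
\item That is why the quotient chain in \S\ref{sec:spectrum} has $\mathcal{E}[\Th{\Msys}^{(\mathrm{bisim})}]$ as its smallest member. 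In topology order this puts bisimulation on top, the opposite of the corollary's orientation.
\item Your two steps also pull against each other. In the first, the smaller fragment $\HML_{\mathrm{pos}} \subset \HML$ is supposed to carry more axioms. In the second, you start from the trace sequents being contained in positive $\HML$. Under the single rule ``more axioms, larger topology,'' that containment yields $J_{\mathrm{trace}} \subseteq J_{\mathrm{sim}}$, not the inclusion you want.
\item A theory-level route would need an encoding in which coarser observation adds axioms (identifications). You would also first have to compute which quotient theory corresponds to $J_{\mathrm{bisim}}$, since that topology is given by a covering predicate rather than by axioms.
\end{itemize}
The paper avoids all of this by arguing the inclusions at the level of covering sieves: every sieve that covers for the finer equivalence covers for the coarser one.

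The strictness step has a related missing bridge. It is true that equal topologies force the two quotient theories to prove the same sequents. But Theorem~\ref{thm:four-level-separation} supplies $\HML$ formulas that hold at one process and fail at another. It does not supply a geometric sequent over $\Th{\mathrm{LTS}}$ that is provable in one quotient and not in the other, so no contradiction follows yet. You would need one of the following:
\begin{itemize}
\item a sieve that covers for one topology but not the other, which is what the paper points to on $\mathrm{fan}[a,a]$ for the bisimulation/simulation gap; or
\item a result that $J_E$-sheafification of Yoneda images detects $E$-equivalence. The paper does not establish this for $J_{\mathrm{sim}}$, and by Remark~\ref{rem:truncation-levels} it leaves it open even for $J_{\mathrm{trace}}$.
\end{itemize}
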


\begin{proof}
Order-preservation follows from the general principle: if
equivalence~$E_1$ is finer than~$E_2$, then
$J_{E_1} \subseteq J_{E_2}$ (every $E_1$-covering sieve is
$E_2$-covering).  Order-reflection (injectivity) follows from the
strict separations: $J_{\mathrm{bisim}} \neq J_{\mathrm{sim}}$ and
$J_{\mathrm{sim}} \neq J_{\mathrm{trace}}$, witnessed by concrete
sieves on $\mathrm{fan}[a,a]$ and the van~Glabbeek counterexamples
respectively.
\end{proof}

This construction instantiates a general narrative: the presheaf topos
(no topology) sees only what homomorphisms preserve: forward
reachability.  Each Grothendieck topology adds a sheaf condition that
forces presheaves to respect some degree of \emph{backward lifting}.
The van~Glabbeek spectrum is a spectrum of backward-lifting requirements,
graduated by Grothendieck topologies on a fixed site.
Trace equivalence enforces the weakest backward constraint (path-length
agreement); bisimulation enforces full zig-zag; the intermediate named
equivalences correspond to intermediate topologies.
Joyal, Nielsen, and
Winskel~\cite{joyal1996bisimulation} vary the \emph{path category}
(using linear paths~$L^*$ for traces, branching
trees~$\mathrm{Bran}_L$ for bisimulation); our approach instead fixes
the site and varies the \emph{topology}, in the spirit of Caramello's
bridge technique~\cite{caramello2017theories}.
Despite 30~years of presheaf models for concurrency and 15~years of
Caramello's program, $J_{\mathrm{trace}}$ and $J_{\mathrm{bisim}}$ appear
to be the first explicit Grothendieck topologies on a process-algebraic
category whose sheaf toposes classify processes up to named
equivalences, a gap analogous to the one filled by
Hora~\cite{hora2024topoi} for automata and regular languages.
$J_{\mathrm{sim}}$ is established via Caramello's constructive
duality (axiomatized in Lean pending Mathlib's topos-theoretic
coverage); constructing explicit covering sieves remains
the central open problem.
The full lattice $L_{30}$ is conjectured to embed into the lattice of
subtoposes of $\mathrm{Set}[\Th{\mathrm{LTS}}]$, with each
element corresponding to a Grothendieck topology on the category of
finitely presentable LTS\@.
Theorem~\ref{thm:partial-l30-embedding} establishes the restriction
to the three-element sub-poset $\{T, S, B\}$.
Constructing topologies for the remaining 10~named equivalences
(failures, ready simulation, etc.)\ and
verifying the full embedding is the central open problem for
topos-geometric backward flow.

A further open problem concerns non-image-finite processes.
Each depth bound~$d$ yields a subtopos of
$\mathrm{Set}[\Th{\mathrm{LTS}}]$ whose points are
processes-up-to-depth-$d$-equivalence.  The colimit of this directed
system is a localic topos whose locale of opens is the Lindenbaum
algebra of full finitary HML\@.  For image-finite processes, this locale
is spatial (the Hennessy--Milner theorem).  For non-image-finite
processes, the locale may fail to be spatial in the operational sense:
consistent theories not realized by any process.  Characterizing this
gap topos-theoretically would yield a genuinely new process-theoretic
theorem requiring the topos framework to state.

\subsection{The Energy--Lawvere--Tierney Bridge}
\label{sec:energy-bridge}

The three topologies of \S\ref{sec:topos-backward-flow}---trace,
simulation, bisimulation---embed a three-element sub-poset of the van
Glabbeek spectrum into the lattice of subtoposes.  We now develop a
framework extending this three-point chain to the full 13-point spectrum
via an energy-parameterized construction, connecting Bisping's
6-dimensional energy budgets~\cite{bisping2023cav} to Grothendieck
topologies on $\mathrm{f.p.LTS}_L$.  The assignment $E \mapsto J_E$ is antitone;
the induced coframe map is injective for $|L| \geq 2$ and lands in a
coframe whose distributive structure is irreducibly topos-theoretic.
The construction follows a pipeline:
\[
\begin{tikzcd}[column sep=large]
  E \arrow[r, mapsto] & \mathcal{C}(E) \arrow[r, mapsto] & J_E \arrow[r, mapsto] & \Sh(\mathrm{f.p.LTS}_L, J_E)
\end{tikzcd}
\]
an energy budget determines test objects, which generate a
Grothendieck topology, whose sheaf category is a subtopos.
In the language of Caramello's bridge technique~\cite{caramello2017theories},
the pipeline lifts to the classifying topos:
\[
\begin{tikzpicture}[>=Stealth, baseline=(top.base)]
  \node (top) at (0, 1.8) {$\ET{\Msys}$};
  \node (left) at (-4.2, 0) {$\Th{\Msys}$};
  \node (right) at (4.2, 0) {$\Sh(\mathrm{f.p.LTS}_L,\, J_E)$};
  \draw[->, bend left=15] (left) to (top);
  \draw[->, bend right=15] (right) to (top);
\end{tikzpicture}
\]
The left pillar classifies the LTS structure; the right pillar
localizes to the subtopos determined by the energy budget~$E$.
Each named equivalence occupies a distinct position in the resulting
coframe of subtoposes.

\label{sec:energy-obs}
\label{sec:energy-topology}
\begin{defi}[Energy Grothendieck topology]
\label{def:energy-topology}

Bisping's spectroscopy framework parameterizes process equivalences by
\emph{energy budgets}~$E = (e_1, \ldots, e_6) \in
(\mathbb{N} \cup \{\infty\})^6$, controlling observation depth~$e_1$,
conjunction width~$e_2$, positive-deep depth~$e_3$, positive-other
depth~$e_4$, negative depth~$e_5$, and negation nesting~$e_6$.  The
13~named equivalences of the van~Glabbeek spectrum (from traces through
bisimulation) correspond to canonical energy vectors
(\S\ref{sec:spectrum-subtoposes}).

Each energy budget~$E$ determines an \emph{observation class}~$\mathcal{C}(E)$:
the class of finite rooted $L$-labeled trees whose depth satisfies
$\mathrm{depth}(T) \leq e_1$ and whose maximal branching satisfies
$\mathrm{branch}(T) \leq e_2$ (with the remaining four coordinates
constraining polarity and negation structure).
The observation class is monotone: $E_1 \leq E_2$ implies
$\mathcal{C}(E_1) \subseteq \mathcal{C}(E_2)$ (more energy reveals more tests).

The named observation classes recover familiar test objects:
$\mathcal{C}(\mathrm{traces}) = \{\text{path digraphs}\}$
(depth-unbounded, branching~${\leq}1$),
$\mathcal{C}(\mathrm{bisimulation}) = \{\text{all rooted trees}\}$
(all bounds $\infty$), and
$\mathcal{C}(\mathrm{simulation}) = \{\text{trees}\}$
(positive fragment, no negation: $e_5 = e_6 = 0$).
The positive-fragment classification partitions the 13~named
equivalences into 7~positive (including traces, simulation, ready
simulation) and 6~negative (including failures, readiness,
failure traces).

For each energy budget~$E$, define a covering predicate on
$\mathrm{f.p.LTS}_L$: a sieve~$S$ on~$G$ is \emph{$E$-covering}
if $S$ contains all homomorphisms from $\mathcal{C}(E)$-test objects into~$G$.
The three Grothendieck axioms are verified for each~$E$:
maximality is immediate (the maximal sieve contains all morphisms),
stability follows from preservation of $\mathcal{C}(E)$-test morphisms under
pullback, and transitivity uses the compositional structure of
energy-bounded observations.

For energy budget $E \in (\mathbb{N} \cup \{\infty\})^6$,
the \emph{energy topology}~$J_E$ is the Grothendieck topology on
$\mathrm{f.p.LTS}_L$ whose covering sieves are exactly the
$E$-covering sieves.
\end{defi}

The assignment $E \mapsto J_E$ is \emph{antitone}: finer observation
($E_1 \leq E_2$) yields more test objects ($\mathcal{C}(E_1) \subseteq \mathcal{C}(E_2)$),
hence stronger covering conditions, hence a finer topology
($J_{E_2} \subseteq J_{E_1}$, reversing the order).
The identification with existing topologies is exact:
$J_{\mathrm{traces}} = J_{\mathrm{trace}}$ and
$J_{\mathrm{bisimulation}} = J_{\mathrm{bisim}}$ (from
\S\ref{sec:topos-backward-flow}), recovering the spectrum bracket
theorem as a special case:
$J_{\mathrm{bisim}} \subseteq J_E \subseteq J_{\mathrm{trace}}$
for every named~$E$.  The chain
$J_{\mathrm{bisim}} \subsetneq J_{\mathrm{sim}} \subsetneq
J_{\mathrm{trace}}$ from Theorem~\ref{thm:partial-l30-embedding}
sits inside this parametric family.

\begin{thm}[Energy--LT bridge]
\label{thm:energy-lt-bridge}
\label{sec:bridge-theorem}
The energy topologies (Definition~\ref{def:energy-topology}) live on the
presheaf side ($\mathrm{f.p.LTS}_L$); the nuclei of
\S\ref{sec:spectrum-subtoposes} live on the algebraic side (Lindenbaum
algebras).  The \emph{energy--Lawvere--Tierney bridge} connects the two
via the standard correspondence between nuclei on a frame and
Grothendieck topologies on the associated thin
category~\cite[\S\,II.2]{johnstone1982stone}.

Write $\mathcal{V}$ for the 13-element van~Glabbeek poset of named
process equivalences, ordered by energy-budget inclusion ($\bar{e}_1
\leq \bar{e}_2$ iff every $\bar{e}_1$-formula is an
$\bar{e}_2$-formula), and define the \emph{spectrum--nucleus embedding}
$\nu(\bar{e}) = j_{\bar{e}}$:
\[
\boxed{\quad \nu \;\colon\; \mathcal{V}
\;\xhookrightarrow{\;\mathrm{antitone}\;}\;
\mathrm{Sub}(\mathrm{Set}[\Th{\mathrm{LTS}}])^{\mathrm{op}}
\quad}
\]
The map~$\nu$ is monotone (finer process equivalence $\mapsto$ smaller
subtopos in the coframe order) and injective for non-degenerate label
alphabets ($|L| \geq 2$).  The codomain carries a \emph{coframe}
structure: its dual is a frame, inheriting the frame structure of the
nucleus lattice.
For any finite label alphabet~$L$ with $|L| \geq 2$:
\begin{enumerate}
\item The subtopos order $\mathrm{Sub}(\mathrm{Set}[\Th{\mathrm{LTS}}])$
  carries a coframe structure, with the distributive law
  $A \sqcup (B_1 \sqcap B_2) =
   (A \sqcup B_1) \sqcap (A \sqcup B_2)$.
\item $J_{\mathrm{bisim}} \subsetneq J_{\mathrm{trace}}$:
  the two-point sub-poset
  $\{\mathrm{trace}, \mathrm{bisimulation}\}$ embeds into this coframe,
  with the strict separation constructive.
\item The strict chain
  $J_{\mathrm{bisim}} \subsetneq J_{\mathrm{sim}} \subsetneq
  J_{\mathrm{trace}}$
  embeds the three-point sub-poset
  $\{\mathrm{trace},\allowbreak\mathrm{simulation},\allowbreak\mathrm{bisimulation}\}$
  into this coframe
  (Theorem~\ref{thm:partial-l30-embedding},
  Corollary~\ref{cor:three-topology-chain}).
\item The coframe meet of possible-futures and failure-traces does not
  lie in the image of the spectrum:
  $\nu(\mathrm{PF}) \sqcap \nu(\mathrm{FT})
    \notin \mathrm{im}(\nu)$.
\item The embedding~$\nu$ extends to all 13~named equivalences of the
  van~Glabbeek spectrum, monotonely and compatibly with the energy ordering:
  $E_1 \leq E_2 \Longrightarrow \nu(E_2) \leq \nu(E_1)$.
\end{enumerate}
\end{thm}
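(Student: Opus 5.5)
I would prove the five items separately, since they rest on different earlier results, and assemble them at the end into the statement about~$\nu$.

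\emph{Item~(1).} This is the standard fact that subtoposes of a Grothendieck topos correspond to Lawvere--Tierney topologies on~$\Omega$, equivalently to nuclei on the internal frame~$\Omega$. Nuclei on a frame form a frame~\cite[II.2]{johnstone1982stone}, so the subtopos order, which is the order-dual, is a coframe~\cite[A4.5]{johnstone2002sketches}. The distributive law $A \sqcup (B_1 \sqcap B_2) = (A \sqcup B_1)\sqcap(A\sqcup B_2)$ is the frame law for nuclei read in the opposite order. Nothing specific to~$\Th{\mathrm{LTS}}$ is used here.

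\emph{Items~(2) and~(3).} For~(2), I would quote the Spectrum Bracket (Theorem~\ref{thm:spectrum-bracket}) for the inclusion $J_{\mathrm{bisim}} \subseteq J_{\mathrm{trace}}$. Strictness comes from the $\mathrm{fan}(2)$ sieve: it is $J_{\mathrm{trace}}$-covering but not $J_{\mathrm{bisim}}$-covering, and this carries over verbatim to labeled fans $\mathrm{fan}[a,a]$. Item~(3) is exactly Corollary~\ref{cor:three-topology-chain} together with Theorem~\ref{thm:partial-l30-embedding}, and the hypothesis $|L|\geq 2$ is needed there for the simulation/trace gap.

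\emph{Item~(5) and injectivity.} I would define $\nu(\bar e)$ as the subtopos $\Sh(\mathrm{f.p.LTS}_L, J_{\bar e})$ of Definition~\ref{def:energy-topology}. Antitonicity comes from the monotonicity $\mathcal{C}(E_1)\subseteq\mathcal{C}(E_2)$ of observation classes: more test objects give fewer covering sieves, so $J_{E_2}\subseteq J_{E_1}$. For injectivity on the 13 named points, I would take each incomparable or strictly ordered pair in~$\mathcal{V}$ and exhibit a test tree lying in one observation class but not the other. Its identity-generated sieve then separates the two topologies, as the fan argument did for trace versus bisimulation. The van~Glabbeek separating processes of Theorem~\ref{thm:four-level-separation} supply the labels needed, which uses $|L|\geq 2$ (and, for readiness-type levels, Morleyized inability atoms).

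\emph{Item~(4).} This is the step I expect to be the main obstacle. The plan is to transport the meet $\mathrm{PF}\wedge\mathrm{FT}=(\infty,2,\infty,0,1,1)$ from Theorem~\ref{thm:lattice-closure}, which is not a named vector, to the coframe meet. The meet of subtoposes corresponds to the join of topologies, which is generated by the union of covering sieves. So I would try to show $J_{\mathrm{PF}}\vee J_{\mathrm{FT}} = J_{\mathrm{PF}\wedge\mathrm{FT}}$ at least up to separation, and then show that this topology differs from every $J_E$ with $E$ named. The second half reduces to injectivity on the enlarged 14-element set, via a test tree allowed by $\mathrm{PF}\wedge\mathrm{FT}$ but by no named vector of the same shape, and conversely. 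The first half is genuinely delicate, because a join of Grothendieck topologies need not be generated by the intersection of observation classes. The fallback is weaker but still sufficient for~(4). Find one sieve~$S$ that covers in $J_{\mathrm{PF}}\vee J_{\mathrm{FT}}$ and one that does not, and then check against each of the 13 $J_E$ that no $J_E$ has the same pattern on these two sieves. That is a finite case check on the energy coordinates $e_2$ and~$e_4$.
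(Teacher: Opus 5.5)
Your items~(1)--(3) follow the paper: the frame of nuclei dualized, the $\mathrm{fan}(2)$ sieve, and citation of Corollary~\ref{cor:three-topology-chain} and Theorem~\ref{thm:partial-l30-embedding}. The gap is in your treatment of item~(5) and injectivity. You set $\nu(\bar e)=\Sh(\mathrm{f.p.LTS}_L,J_{\bar e})$ using the observation-class topologies of Definition~\ref{def:energy-topology}, and you plan to separate named points by a test tree lying in one class but not the other. This cannot work. $J_{\bar e}$ depends on $\bar e$ only through the class $\mathcal{C}(\bar e)$ of trees. A tree is a positive object: homomorphisms out of it witness only diamonds and conjunctions, so the negative coordinates $e_5,e_6$ have no canonical way to cut the class down. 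Now $S$, $\mathrm{RS}$, $\mathrm{2S}$ and $B$ all agree on $e_1,\dots,e_4$, as do $F$ and $\mathrm{IF}$. Indeed the paper's own definition gives $\mathcal{C}(S)=\mathcal{C}(B)=$ all rooted trees, so your $J_S$ equals $J_{\mathrm{bisim}}$, which contradicts the item~(3) you quote. This is exactly why Theorem~\ref{thm:naive-sim-instability} and Remark~\ref{rem:truncation-levels} route $J_{\mathrm{sim}}$ through Caramello's quotient-theory duality instead of an observation class. Neither of your patches helps. Morleyized inability atoms live in the theory, not among the test objects. The processes of Theorem~\ref{thm:four-level-separation} separate by formulas, not by sieves. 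A smaller point: the separating sieve in the fan argument is the one generated by morphisms out of the smaller class, which omits $\mathrm{id}_T$. The sieve generated by $\mathrm{id}_T$ is maximal and covers in every topology. The paper works on the theory side instead. It takes $\nu(\bar e)=j_{\bar e}$, the subtopos of the quotient theory of $\sim_{\bar e}$-invariant (Morleyized) geometric sequents. It gets monotonicity and injectivity by order-reflection plus the known incomparable pairs, with Caramello's duality and the characterization theorems of van~Glabbeek and Bisping as axiomatized inputs (Corollary~\ref{cor:coframe-computations}).

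Item~(4) inherits this problem, and your fallback check on $e_2,e_4$ targets the wrong coordinates. Grant order-reflection on the named points, which you need for item~(5) anyway. Then $J_{\mathrm{PF}}\vee J_{\mathrm{FT}}=J_X$ with $X$ named forces $X\le\mathrm{PF}\wedge\mathrm{FT}=(\infty,2,\infty,0,1,1)$, so $X$ is enabledness, $T$, $F$ or $\mathrm{RV}$. Since $\mathrm{RV}\le\mathrm{PF}$ and $\mathrm{RV}\le\mathrm{FT}$, you also get $J_{\mathrm{PF}}\vee J_{\mathrm{FT}}\subseteq J_{\mathrm{RV}}$, which forces $X=\mathrm{RV}=(\infty,2,1,0,1,1)$. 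So item~(4) comes down to the single separation $J_{\mathrm{PF}}\vee J_{\mathrm{FT}}\neq J_{\mathrm{RV}}$. But $\mathrm{RV}$ (like $F$ and $\mathrm{IF}$) agrees with $\mathrm{PF}\wedge\mathrm{FT}$ on $e_2$ and $e_4$ and differs only in $e_3$, so your check cannot rule it out. With classes cut out by depth and branching, as in Definition~\ref{def:energy-topology}, things are worse. There $\mathcal{C}(\mathrm{PF})\subseteq\mathcal{C}(\mathrm{FT})$ gives $J_{\mathrm{FT}}\subseteq J_{\mathrm{PF}}$, so the meet is $\nu(\mathrm{PF})$ itself and item~(4) fails outright. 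The paper does not construct a separating sieve either. It reads item~(4) off the $L_{30}$ computation of Theorem~\ref{thm:lattice-closure} through the axiomatized bridge, which tacitly identifies coframe meets with energy meets, the sublattice question its conclusion leaves open. Your reduction makes clear what that transfer must supply, but your proposal does not supply it.
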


\noindent
Items~1--2 are the paper's constructive core:
coframe structure, constructive
$J_{\mathrm{trace}} \neq J_{\mathrm{bisim}}$ strict
separation, and the distributive law.
Item~3 extends the two-point embedding to a three-point chain by
adding~$J_{\mathrm{sim}}$ via Caramello's duality;
item~4 establishes the PF${}\sqcap{}$FT escape witness.
Item~5 extends from 3~to all 13~named equivalences via the
energy--topology framework (Remark~\ref{rem:bridge-axiom-boundary}).

\begin{rem}[Axiom boundary]
\label{rem:bridge-axiom-boundary}
The axioms in item~5 mark the interface between the constructions of
this paper and published results of Caramello, van~Glabbeek, and
Bisping.  The mathematical content is standard Grothendieck topos
theory; the axiomatization reflects the state of Mathlib's
topos-theoretic coverage, not a gap in the mathematics.
The $L_{30}$ lattice operations in
\S\ref{sec:spectrum-subtoposes} are decided on the finite 30-element
lattice.  The full explicit construction of
covering sieves remains open at the~$J_{\mathrm{sim}}$ level; see
\S\ref{sec:conclusion}.
\end{rem}

\begin{cor}[Coframe computations]
\label{cor:coframe-computations}
\label{sec:coframe-computations}
The coframe structure of Theorem~\ref{thm:energy-lt-bridge} enables
explicit computations beyond the named spectrum:
\begin{enumerate}
\item \emph{Single-label collapse.}\enspace
  For $|L| = 1$, the 13~named equivalences induce at most 2~distinct
  Grothendieck topologies on $\mathrm{f.p.LTS}_1$:
  $J_{\mathrm{trace}}$ and $J_{\mathrm{bisim}}$, since simulation
  coincides with trace equivalence for unlabeled
  systems~\cite{vanglabbeek2001}.
  (Per-system Lindenbaum algebras may still distinguish enabledness
  from trace equivalence, but this is a system-level distinction,
  not a site-level one.)
\item \emph{Two-label faithfulness.}\enspace
  For $|L| \geq 2$, all 13~named equivalences produce distinct
  subtoposes.  The proof uses order-reflection:
  $\nu(E_2) \leq \nu(E_1) \Rightarrow E_1 \leq E_2$, combined with
  the known incomparable pairs in the van~Glabbeek spectrum.
\item \emph{Incomparability transfer.}\enspace
  PF and FT remain incomparable in
  $\mathrm{Sub}(\mathrm{Set}[\Th{\mathrm{LTS}}])$.
  Their coframe meet $\mathrm{PF} \sqcap \mathrm{FT}$ is an
  \emph{unnamed coframe element}, witnessing that the subtopos lattice
  is strictly larger than the spectrum.
\item \emph{Lattice closure transfer.}\enspace
  The 30-element lattice closure from \S\ref{sec:spectrum-subtoposes}
  (13~named + 17~unnamed, computed by three-round
  $\{\sqcap,\sqcup\}$-stabilization) transfers to the coframe level,
  with 17~unnamed coframe elements.
\end{enumerate}
\end{cor}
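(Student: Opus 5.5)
The plan is to prove the four items of Corollary~\ref{cor:coframe-computations} separately, each reduced to an earlier result plus a finite check on~$L_{30}$.

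\emph{Single-label collapse.} Here I would show that for $|L|=1$ every named energy budget~$E$ gives one of two observation classes up to the covering condition. When $e_2 \le 1$, $\mathcal{C}(E)$ consists of path digraphs, so $J_E = J_{\mathrm{trace}}$. When $e_2 \ge 2$, the sieve condition ranges over all homomorphisms from the trees in~$\mathcal{C}(E)$. The sieve argument in the proof of the Spectrum Bracket (Theorem~\ref{thm:spectrum-bracket}) then places $J_E$ between $J_{\mathrm{bisim}}$ and $J_{\mathrm{trace}}$. Over one label, the unlabeled collapse of simulation to trace noted after Figure~\ref{fig:test-objects} should force $J_E$ to be one of the two endpoints. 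Enabledness ($e_1 = 1$) needs a separate check. I would argue that it is distinguished only at the level of per-system Lindenbaum algebras, as the statement's parenthetical says, so at the site level it coincides with one of the two topologies.

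\emph{Two-label faithfulness.} Injectivity of~$\nu$ is item~5 of Theorem~\ref{thm:energy-lt-bridge} combined with order-reflection. My plan is to argue the contrapositive. For any two distinct named $E_1 \ne E_2$ with, say, $E_1 \not\le E_2$, I would produce a pair of processes over $\{a,b\}$ that are $E_2$-equivalent but not $E_1$-equivalent; the standard van~Glabbeek counterexamples, relabelled into two letters, should supply these. This pair then yields a sieve that is $J_{E_2}$-covering but not $J_{E_1}$-covering, exactly as in the proof of Theorem~\ref{thm:partial-l30-embedding}. Since there are only $\binom{13}{2}$ pairs, the check is finite.

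\emph{Incomparability transfer} then follows directly from faithfulness. PF and FT are incomparable in $\mathcal{S}$, since their energy vectors differ in opposite directions in coordinates $e_2$ and $e_5$. Order-reflection therefore makes $\nu(\mathrm{PF})$ and $\nu(\mathrm{FT})$ incomparable. Item~4 of Theorem~\ref{thm:energy-lt-bridge} already shows that their meet lies outside $\mathrm{im}(\nu)$.

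\emph{Lattice closure transfer} is the step I expect to be the main obstacle. Transferring the 30-element closure requires that $\nu$ extend to a lattice homomorphism from $L_{30}$: that is, coframe meets and joins of the $\nu(E)$ must agree with the componentwise $\min/\max$ computed in Theorem~\ref{thm:lattice-closure}. Faithfulness of the order alone does not give this. My first attempt would be to define $\nu$ on all of $(\mathbb{N}\cup\{\infty\})^6$ through the observation classes, so that $\mathcal{C}(E_1 \wedge E_2)$ and $\mathcal{C}(E_1 \vee E_2)$ can be compared with the intersection and union of the generated topologies. I would then verify the three-round stabilization level by level. Because joins of Grothendieck topologies need not be generated by unions of test classes, this step may in the end have to rest on the energy--topology axioms flagged in Remark~\ref{rem:bridge-axiom-boundary}, as the paper's conjectural framing suggests. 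If the lattice homomorphism is granted, injectivity of the extended map follows by repeating the faithfulness check on the 17~unnamed elements, using their distinct energy vectors.
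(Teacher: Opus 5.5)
Your item~3 is fine once order-reflection is available, and you are right that item~4 needs more than an order-embedding. The real problems are in items~2 and~1.

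\emph{Item~2: the conversion you rely on does not exist.} You propose to prove order-reflection pair by pair: processes that are $E_2$-equivalent but not $E_1$-equivalent are to yield a sieve that is $J_{E_2}$-covering but not $J_{E_1}$-covering. Nothing in the paper, or in the observation-class framework, performs this conversion.
\begin{enumerate}
\item $J_E$ depends only on the test class $\mathcal{C}(E)$: a sieve covers iff it contains every map out of a test object. It does not depend on any pair of processes.
\item Definition~\ref{def:energy-topology} only makes precise the constraints depth $\le e_1$ and branching $\le e_2$. On the 13 named vectors, $(e_1,e_2)$ takes just four values, splitting them into $\{\mathrm{En}\}$, $\{T\}$, $\{F,\mathrm{RV},R,\mathrm{IF},\mathrm{PF}\}$ (with $e_2=2$) and $\{S,\mathrm{RS},\mathrm{2S},\mathrm{RT},\mathrm{FT},B\}$ (with $e_1=e_2=\infty$).
\item Tree test objects mapped by homomorphisms cannot detect absent transitions. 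The paper itself gives simulation and bisimulation the same class (all trees).
\item Theorem~\ref{thm:naive-sim-instability} shows the existence-based repair is not a topology. $J_{\mathrm{sim}}$ exists only abstractly via Caramello's duality, and explicit topologies for the other ten named equivalences are stated to be open.
\end{enumerate}
So your $\binom{13}{2}$ check has nothing to run on, and citing Theorem~\ref{thm:partial-l30-embedding} supplies no mechanism. The paper does not prove order-reflection. It takes it as an input from the axiomatized energy--LT bridge (Theorem~\ref{thm:energy-lt-bridge}, Remark~\ref{rem:bridge-axiom-boundary}). Injectivity is then one line: $\nu(E_1)=\nu(E_2)$ gives $E_1\le E_2\le E_1$. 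With that input, your item~3 is exactly the paper's argument: $e_2$ has $2<\infty$, $e_5$ has $\infty>1$, and item~4 of the theorem handles the meet.

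\emph{Item~1: the key inference is a non sequitur.} Simulation and trace equivalence coinciding over one label is a statement about which objects get identified. It does not force the sieve-level topologies to coincide; this is the $0$- versus $(-1)$-truncation distinction of Remark~\ref{rem:truncation-levels}. Inside Definition~\ref{def:energy-topology}, your enabledness step actually fails. For $\mathrm{En}=(1,1,0,0,0,0)$ we have $\mathcal{C}(\mathrm{En})=\{P_0,P_1\}$. The sieve on $P_2$ generated by all maps from $P_0$ and $P_1$ is $J_{\mathrm{En}}$-covering. It misses $\mathrm{id}_{P_2}$, because an identity cannot factor through an object with fewer vertices. Hence $J_{\mathrm{trace}}\subsetneq J_{\mathrm{En}}$, a third topology on $\mathrm{f.p.LTS}_1$.

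The same happens for a class cut out by $e_2=2$, as far as the definition specifies it. The paper's own single-label $\mathrm{fan}(2)$ sieve separates it from $J_{\mathrm{trace}}$. The analogous sieve on $\mathrm{fan}(3)$, generated by maps from binary trees, separates it from $J_{\mathrm{bisim}}$. The paper's justification for item~1 is the same appeal to the unlabeled simulation/trace coincidence, so nothing there closes this gap. A correct item~1 would need a definition of $J_E$ that is not governed by these test classes.

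\emph{Item~4.} You correctly see that it needs $\nu$ to extend to a lattice homomorphism on $L_{30}$. The paper only asserts the transfer, and its Conclusion calls the sublattice embedding the central conjecture. Your proposed extension through observation classes cannot produce 17 distinct elements, though. For instance $F$, $S\wedge F$, $S\wedge\mathrm{PF}$ and $\mathrm{PF}\wedge\mathrm{FT}$ all have $(e_1,e_2)=(\infty,2)$. Repeating the faithfulness check on the unnamed elements would also inherit the item~2 problem.
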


\subsection{Irreducibility of the Topos-Theoretic Route}
\label{sec:irreducibility}

Three features of Theorem~\ref{thm:energy-lt-bridge} are irreducibly
topos-theoretic.

\paragraph{Coframe structure.}
The distributive law in item~1 follows from the frame structure of
nuclei on the Lindenbaum algebra (Isbell~1972, Simmons~1978), which
depends on the Heyting algebra structure of the subobject classifier.
The van~Glabbeek spectrum is a finite poset with no known lattice
structure; the coframe law is genuinely new algebraic information that
cannot be derived from energy vectors alone.

\paragraph{Unnamed subtoposes.}
The element $\mathrm{PF} \sqcap \mathrm{FT}$ in item~4 of the theorem
has no process-algebraic name, and the 17~unnamed elements of the
lattice closure correspond to geometric axioms (determinism,
image-finiteness, mixed constraints) that cross-cut the HML hierarchy
and are invisible to any energy budget; their existence is provable
only through the nucleus construction.

\paragraph{Distributivity witness.}
The unnamed meet $\mathrm{PF} \sqcap \mathrm{FT}$ satisfies the
coframe distributive law despite having no process-algebraic
description; both its existence and its algebraic properties require
passage through the topos.

\medskip\noindent
Taken together, Theorem~\ref{thm:energy-lt-bridge} is the first result
that treats the \emph{entire} van~Glabbeek spectrum as a
topos-theoretic object: a finite sub-poset of an infinite coframe that
is an intrinsic invariant of the classifying topos and that cannot be
accessed from within process algebra.

\subsection{The Geometric Closure Theorem}
\label{sec:geometric-closure-section}
\label{sec:geometric-closure}

The \emph{internal logic} of $\mathrm{Sub}(U_T)$, where $U_T$ is
the underlying-set functor of the generic model, provides an independent
confirmation.  Geometric subobjects (subfunctors defined by positive
existential HML formulas) form a Heyting subalgebra whose implication
has a concrete process-algebraic characterization via \emph{free extensions}.

\label{sec:free-extension}
\begin{defi}[Positive existential HML]

Define \emph{positive existential} HML (henceforth $\HML^+$) as the
fragment generated by $\top$, conjunction $\varphi_1 \wedge \varphi_2$,
and diamond $\langle a \rangle \varphi$.  Every formula in $\HML^+$
defines a subfunctor $S_\varphi$ of the underlying-set presheaf by
$S_\varphi(G,v) \Leftrightarrow \varphi.\mathsf{satisfies}(G,v)$.
The positive existential fragment is closed under all positive
operations, and satisfiability is monotone: if $h \colon G \to H$ is a
homomorphism and $S_\varphi(G,v)$ holds, then $S_\varphi(H,h(v))$
holds; this is exactly the subfunctor property.
\end{defi}

\begin{lem}[Free Extension]
\label{lem:free-extension}
For every $(G, v)$ in $\mathrm{FinLTS}(L)$ and every $\varphi \in \HML^+$,
there exists a labeled transition system $\mathrm{Ext}(G,v,\varphi)$ and
a root-preserving homomorphism $\iota \colon G \hookrightarrow
\mathrm{Ext}(G,v,\varphi)$ such that
$\varphi.\mathsf{satisfies}(\mathrm{Ext}(G,v,\varphi), \iota(v))$.
The construction adjoins a tree of \emph{witness vertices} to~$G$,
with one fresh successor per diamond subformula, adding the minimal
transitions required for satisfaction.
\end{lem}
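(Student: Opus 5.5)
We prove the Free Extension Lemma by structural induction on $\varphi \in \HML^+$, building $\mathrm{Ext}(G,v,\varphi)$ as the disjoint union of $G$ with a finite labeled tree $W_\varphi$ of fresh witness vertices, glued to $G$ at $v$. The inclusion $\iota$ is the evident map $G \hookrightarrow G \sqcup W_\varphi$, which is injective, label-preserving, edge-preserving and root-preserving. It is therefore a homomorphism in $\mathrm{FinLTS}(L)$. Since $W_\varphi$ is finite, the result stays in $\mathrm{FinLTS}(L)$.

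\emph{Construction.} We define a witness tree $W_\varphi$ with a distinguished root $\rho_\varphi$ such that $\varphi$ holds at $\rho_\varphi$. For $\top$, $W_\top$ is the single vertex $\rho_\top$. For $\langle a\rangle\psi$, $W_{\langle a\rangle\psi}$ consists of a fresh root $\rho$, a copy of $W_\psi$, and a single $a$-edge $\rho \to \rho_\psi$. For $\varphi_1 \wedge \varphi_2$, we glue $W_{\varphi_1}$ and $W_{\varphi_2}$ at their roots. The extension $\mathrm{Ext}(G,v,\varphi)$ is then obtained by identifying $\rho_\varphi$ with $v$. Concretely, for each top-level diamond conjunct $\langle a\rangle\psi$ of $\varphi$ we add a fresh $a$-successor of $v$ carrying a copy of $W_\psi$. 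This gives exactly one fresh successor per diamond subformula, and only the transitions the formula demands.

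\emph{Satisfaction.} It is convenient to prove a slightly stronger statement: for every $\HML^+$ formula $\varphi$, every LTS $H$ and every vertex $u$ of $H$, the formula $\varphi$ holds at $u$ in the graph obtained by gluing $W_\varphi$ onto $H$ at $u$. We argue by induction on $\varphi$.
\begin{enumerate}
\item The case $\top$ is trivial.
\item For a diamond $\langle a\rangle\psi$, the new edge $u \xrightarrow{a} \rho_\psi$ lands in a copy of $W_\psi$. By the induction hypothesis, $\psi$ holds at $\rho_\psi$ in $W_\psi$ alone, and hence in any larger graph by the monotonicity/subfunctor property of $\HML^+$ stated in the definition. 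This gives $\langle a\rangle\psi$ at $u$.
\item For a conjunction, $\varphi_1$ holds at $u$ in $H$ glued with $W_{\varphi_1}$. That graph includes by a homomorphism fixing $u$ into $H$ glued with both $W_{\varphi_1}$ and $W_{\varphi_2}$. Monotonicity transports $\varphi_1$ along this inclusion, and symmetrically $\varphi_2$.
\end{enumerate}

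\emph{Main obstacle.} The only delicate point is the conjunction case, where gluing must not be allowed to break satisfaction of the other conjunct. It is resolved precisely because $\HML^+$ has no negation or box, so satisfaction is preserved along homomorphisms. That is why we phrase the induction for an arbitrary base $(H,u)$ and invoke monotonicity, rather than reasoning directly about $\mathrm{Ext}$.

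Minimality (``only the required transitions'') is bookkeeping: by construction every new edge corresponds to a diamond occurrence. For the Lean formalization we would represent vertices as $\mathrm{Sum}$ of $G$'s vertices and witness positions indexed by diamond subformula occurrences.
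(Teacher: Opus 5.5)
Your proposal is correct and follows essentially the paper's route. In both, $\mathrm{Ext}(G,v,\varphi)$ is $G$ plus a witness tree with one fresh vertex per diamond occurrence, joined by edges from $\iota(v)$ to the first-level witnesses, and satisfaction is proved by structural induction on $\varphi$; your generalization over an arbitrary base $(H,u)$ together with monotonicity just makes explicit the conjunction case that the paper leaves implicit. (One small repair: in your diamond case the embedding of $W_\psi$ alone into the glued graph is not root-preserving. So either note that $\HML^+$ satisfaction at a vertex is preserved by any label- and edge-preserving map, or apply your strengthened hypothesis directly to $H$ extended by a fresh $a$-successor of $u$.)
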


\noindent
The construction is entirely explicit: vertices of $\mathrm{Ext}(G,v,\varphi)$
are the coproduct $G \oplus W_\varphi$ where $W_\varphi$ is a recursive
index type (one witness per diamond in~$\varphi$), and transitions
consist of root edges connecting $\iota(v)$ to the first witness,
internal edges within the witness tree, and all original edges of~$G$.
The proof is by structural induction on~$\varphi$ and is fully constructive.

\label{sec:negation-collapse}
\begin{prop}[Negation Collapse]
\label{prop:negation-collapse}
In the presheaf topos, the internal negation of a subfunctor~$S$ is
\[
  (\neg S)(G,v) \;=\; \forall\, H,\; \forall\, h \colon G \to H,\;
  \neg S(H, h(v)).
\]
This universal quantification over all extensions is extremely strong:
it asks that no homomorphic image of~$v$ can satisfy~$S$.
For every satisfiable $\varphi \in \HML^+$, we have
$\neg S_\varphi = \bot$ and $\neg\neg S_\varphi = \top$.
\end{prop}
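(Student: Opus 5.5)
The argument has two parts. First I derive the formula for $\neg S$ from the Kripke--Joyal description of pseudocomplements in a presheaf topos. Then I use the Free Extension Lemma (Lemma~\ref{lem:free-extension}) to show that $\neg S_\varphi$ is empty whenever $\varphi$ is satisfiable.

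\emph{Step 1: the negation formula.} Here $S$ is a subfunctor of the underlying-set presheaf. Its sections over $G$ are vertices $v$, and functoriality is along homomorphisms $h\colon G\to H$, so $S(G,v)$ implies $S(H,h(v))$. In any presheaf topos, $\neg S = (S \Rightarrow \bot)$ is the largest subfunctor whose meet with $S$ is empty. Concretely, $(\neg S)(G,v)$ holds iff for every arrow $h$ out of $G$ (every stage of the site at which $v$ can be restricted), the restriction of $v$ along $h$ does not lie in $S$. This is exactly the displayed formula. Two checks remain: the formula defines a subfunctor, since being closed under further $h$ follows by composing homomorphisms; and it is maximal disjoint from $S$, where taking $h=\mathrm{id}$ shows it is disjoint from $S$.

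\emph{Step 2: $\neg S_\varphi=\bot$.} Fix any $(G,v)$. By Lemma~\ref{lem:free-extension} there is a homomorphism $\iota\colon G\to \mathrm{Ext}(G,v,\varphi)$ with $S_\varphi(\mathrm{Ext}(G,v,\varphi),\iota(v))$. Taking $H=\mathrm{Ext}(G,v,\varphi)$ and $h=\iota$ in the formula of Step 1 refutes $(\neg S_\varphi)(G,v)$. So $\neg S_\varphi$ is the empty subfunctor at every stage.

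Note that the Free Extension Lemma works for every $\varphi\in\HML^+$. So the satisfiability hypothesis is automatic in the fragment generated by $\top$, $\wedge$ and $\langle a\rangle$; it matters only if $\bot$ is admitted, in which case it excludes $S_\bot$.

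\emph{Step 3: $\neg\neg S_\varphi=\top$.} Applying Step 1 to $\neg S_\varphi=\bot$ gives $(\neg\bot)(G,v)$ iff for all $h$, $\neg\bot(H,h(v))$. This holds vacuously, so $\neg\neg S_\varphi$ is the full underlying-set presheaf.

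I expect the main obstacle to be Step 1, specifically pinning down which arrows the internal universal quantifier ranges over. The statement quantifies over all homomorphisms $G\to H$ into objects of $\mathrm{FinLTS}(L)$, matching covariant functoriality of the underlying-set functor on f.p.\ models. One must confirm that $\mathrm{Ext}(G,v,\varphi)$ is again finite (it is, since $W_\varphi$ is finite) and that $\iota$ is a legitimate arrow of the site. Once the quantifier domain is fixed, Steps 2 and 3 are immediate.
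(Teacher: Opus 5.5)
Your proposal is correct and follows the paper's argument: the paper also takes the presheaf pseudocomplement formula as given and refutes $(\neg S_\varphi)(G,v)$ at every stage by instantiating $H=\mathrm{Ext}(G,v,\varphi)$ and $h=\iota$ from Lemma~\ref{lem:free-extension}, after which $\neg\neg S_\varphi=\top$ is immediate. You add two things the paper leaves implicit, and both are accurate: an explicit derivation of the negation formula, and the observation that satisfiability is automatic in the $\top,\wedge,\langle a\rangle$ fragment.
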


\noindent
The proof is immediate from the Free Extension Lemma
(Lemma~\ref{lem:free-extension}): given any $(G,v)$,
the extension $\mathrm{Ext}(G,v,\varphi)$ provides a homomorphic image
of~$v$ satisfying~$\varphi$, so $(\neg S_\varphi)(G,v)$ fails.
Since this holds for all $(G,v)$, the negation is the bottom subfunctor.
The double negation density $\neg\neg S_\varphi = \top$ follows
immediately.  This connects to the unique-atom and collapsed-negation
phenomena on $\Omega(G)$ observed in
\S\ref{sec:spectrum-subtoposes}: the presheaf topos is so far from
Boolean that every non-trivial geometric subfunctor has trivial negation.

The central result of this subsection computes the Heyting implication
of geometric subfunctors.

\begin{thm}[Geometric Closure]
\label{thm:geometric-closure}
For $\varphi, \psi \in \HML^+$,
\[
\boxed{\quad (S_\varphi \to_\Omega S_\psi)(G,v) \;\;\Longleftrightarrow\;\;
  \psi.\mathsf{satisfies}(\mathrm{Ext}(G,v,\varphi),\, \iota(v)) \quad}
\]
\end{thm}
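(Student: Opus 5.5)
The plan is to unfold the presheaf Heyting implication and reduce it to a single test via a universal property of the free extension. Following Proposition~\ref{prop:negation-collapse}, the internal implication of subfunctors of the underlying-set presheaf is
\[
(S_\varphi \to_\Omega S_\psi)(G,v) \;\Longleftrightarrow\; \forall H,\ \forall h\colon G\to H,\ \bigl(S_\varphi(H,h(v)) \Rightarrow S_\psi(H,h(v))\bigr).
\]
The proof then has two directions.

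\emph{Forward direction.} Instantiate $H := \mathrm{Ext}(G,v,\varphi)$ and $h := \iota$. Lemma~\ref{lem:free-extension} gives $\varphi.\mathsf{satisfies}(\mathrm{Ext}(G,v,\varphi),\iota(v))$. The hypothesis then yields $\psi$ at $\iota(v)$.

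\emph{Backward direction.} Fix $h\colon G\to H$ with $S_\varphi(H,h(v))$. The goal is to produce a homomorphism
\[
k\colon \mathrm{Ext}(G,v,\varphi)\to H, \qquad k\circ\iota = h.
\]
Given such a $k$, monotonicity of $\HML^+$ satisfaction (the subfunctor property) transports $\psi$ from $\iota(v)$ to $k(\iota(v)) = h(v)$.

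So the key step is a universal property of the free extension: every homomorphism $h$ that sends $v$ to a state satisfying $\varphi$ factors through $\iota$. Because the vertex set is the coproduct $G\oplus W_\varphi$, define $k$ as $h$ on $G$ and choose images of witnesses by structural induction on $\varphi$:
\begin{itemize}
\item for $\top$, there are no witnesses;
\item for $\varphi_1\wedge\varphi_2$, combine the two witness maps, both anchored at $h(v)$;
\item for $\langle a\rangle\varphi'$, the fact $H,h(v)\models\langle a\rangle\varphi'$ gives an $a$-successor $u$ with $u\models\varphi'$. Send the fresh root witness to $u$ and recurse on $\varphi'$ anchored at $u$.
\end{itemize}
I would state this as a strengthened induction hypothesis parameterized by the anchor vertex, since the recursion re-anchors at witness vertices. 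Then check that $k$ preserves the three kinds of edges: original edges via $h$; root edges by the choice of $u$; internal witness edges by the recursive choice. Edge labels are respected because each witness edge carries the label of its diamond.

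The main obstacle is making the factorization lemma precise against the actual shape of $\mathrm{Ext}$, in particular how conjunction is handled. If the two conjuncts' witness trees are built as separate subtrees glued at the anchor, then the coproduct-of-choices argument goes through directly. If instead the construction shares vertices, I would first prove that the extension is the pushout of $G$ and a canonical tree $T_\varphi$ along the root, i.e.\ the canonical model of the conjunctive query $\ST(\varphi)$. Then $k$ is induced by the universal property of the pushout, and the required map $T_\varphi\to H$ comes from the standard fact that a tree-shaped query is satisfied at $h(v)$ iff its canonical tree maps there root-preservingly. Everything else, including constructivity, is routine, since all choices are extracted from existential witnesses in $\HML^+$ satisfaction.
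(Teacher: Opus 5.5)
Your proposal is correct and follows the paper's proof. The forward direction instantiates the implication at $H=\mathrm{Ext}(G,v,\varphi)$ with $h=\iota$; the backward direction builds a mediating homomorphism $m\colon \mathrm{Ext}(G,v,\varphi)\to H$ with $m\circ\iota=h$ from the satisfaction witnesses of $\varphi$ at $h(v)$, then uses preservation of $\HML^+$ under homomorphisms. Your structural induction, re-anchored at witness vertices, spells out the factorization step that the paper asserts in a single line.
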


\noindent
The Heyting implication $S_\varphi \to_\Omega S_\psi$ is defined
pointwise as
$(S_\varphi \to_\Omega S_\psi)(G,v) = \forall\, H,\;
\forall\, h \colon G \to H,\; S_\varphi(H, h(v))
\Rightarrow S_\psi(H, h(v))$,
and the theorem reduces this universal quantification to a
\emph{single test} against the free extension.

\begin{proof}
($\Rightarrow$) If $(G,v)$ satisfies the implication, apply it to
$H = \mathrm{Ext}(G,v,\varphi)$ with $h = \iota$: since the extension
satisfies~$\varphi$ at $\iota(v)$, the implication yields
$\psi$ at~$\iota(v)$.
($\Leftarrow$) Suppose $\psi$ holds at $\iota(v)$ in the extension.
For any $H$ and $h \colon G \to H$ with $S_\varphi(H, h(v))$, the
satisfaction witness $S_\varphi(H, h(v))$ yields a mediating homomorphism
$m \colon \mathrm{Ext}(G,v,\varphi) \to H$ with $m \circ \iota = h$,
mapping each witness vertex to the corresponding existential witness in~$H$.
Since $\psi$ is positive existential (hence preserved by homomorphisms),
$S_\psi(H, h(v))$ follows.
\end{proof}

The Geometric Closure Theorem shows that geometric subobjects form a
Heyting subalgebra: the implication of two $\HML^+$-definable
subfunctors is again computable by testing against a single canonical
extension.  This is an \emph{irreducibly topos-theoretic} result:
it requires both the presheaf Heyting implication formula (internal
to the topos) and the canonical test extension construction.
No purely process-algebraic proof is known.

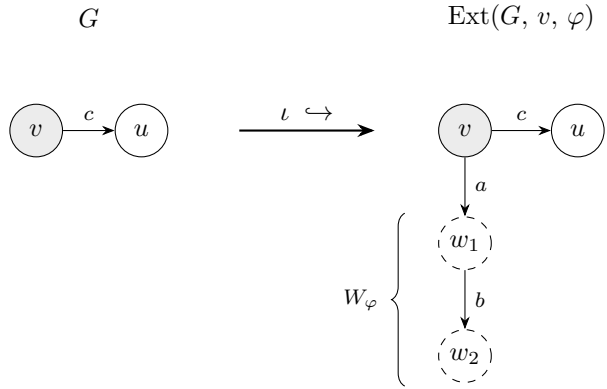
\begin{figure}[ht]
\centering
\begin{tikzpicture}[
    >=Stealth,
    state/.style={circle, draw, minimum size=7mm, inner sep=0pt,
                  font=\small},
    fresh/.style={circle, draw, dashed, minimum size=7mm, inner sep=0pt,
                  font=\small},
  ]
  \node[font=\small\bfseries] at (-3.5, 1.5) {$G$};
  \node[state, fill=gray!15] (v) at (-4.2, 0) {$v$};
  \node[state] (u) at (-2.8, 0) {$u$};
  \draw[->] (v) -- node[above, font=\scriptsize] {$c$} (u);

  \draw[->, thick] (-1.5, 0) --
    node[above, font=\small] {$\iota\;\hookrightarrow$} (0.3, 0);

  \node[font=\small\bfseries] at (2.25, 1.5)
    {$\mathrm{Ext}(G,\, v,\, \varphi)$};

  \node[state, fill=gray!15] (v2) at (1.5, 0) {$v$};
  \node[state] (u2) at (3.0, 0) {$u$};
  \draw[->] (v2) -- node[above, font=\scriptsize] {$c$} (u2);

  \node[fresh] (w1) at (1.5, -1.5) {$w_1$};
  \node[fresh] (w2) at (1.5, -3.0) {$w_2$};
  \draw[->] (v2) -- node[right, font=\scriptsize] {$a$} (w1);
  \draw[->] (w1) -- node[right, font=\scriptsize] {$b$} (w2);

  \draw[decorate, decoration={brace, amplitude=5pt, mirror}]
    (0.7, -1.1) -- (0.7, -3.4)
    node[midway, left=6pt, font=\scriptsize] {$W_\varphi$};
\end{tikzpicture}
\caption{The free extension construction
  (Lemma~\ref{lem:free-extension}).  Given an LTS~$G$ with marked
  vertex~$v$ and a positive existential formula
  $\varphi = \langle a\rangle\langle b\rangle\top$, the extension
  $\mathrm{Ext}(G, v, \varphi)$ adjoins a witness tree~$W_\varphi$
  (dashed nodes) to~$G$ via the injection~$\iota$.  Each diamond
  subformula contributes one fresh vertex ($w_1$ for
  $\langle a\rangle$, $w_2$ for $\langle b\rangle$), connected by the
  minimal edges required for satisfaction at~$\iota(v)$.  The original
  edges of~$G$ are preserved.}
\label{fig:free-extension}
\end{figure}

\begin{cor}[Subfunctor Heyting Adjunction]
\label{cor:himp-adjunction}
For subfunctors $S_1, S_2, T$ of the representable presheaf
$\mathbf{y}(-)$ on $C_{\mathrm{fp}}$, the Heyting implication
satisfies the standard adjunction:
\[
S_1 \wedge T \;\leq\; S_2
\quad\Longleftrightarrow\quad
T \;\leq\; (S_1 \to_\Omega S_2),
\]
where the partial order is pointwise inclusion.
\end{cor}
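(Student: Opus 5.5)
The plan is to prove Corollary~\ref{cor:himp-adjunction} directly from the pointwise formula for the presheaf Heyting implication recalled after Theorem~\ref{thm:geometric-closure}:
\[
(S_1 \to_\Omega S_2)(G,v) \iff \forall H,\ \forall h \colon G \to H,\ S_1(H,h(v)) \Rightarrow S_2(H,h(v)).
\]
Here meets of subfunctors are pointwise conjunctions. The only structural input needed is that $T$, $S_1$, $S_2$ are subfunctors. This means that if $X(G,v)$ holds and $h \colon G \to H$ is a homomorphism, then $X(H,h(v))$ holds. I will also check that the displayed formula itself defines a subfunctor: given $g \colon G \to G'$, any $h' \colon G' \to H$ yields the composite $h' \circ g$, so the universal condition at $(G,v)$ restricts to $(G',g(v))$. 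Hence both sides of the adjunction compare subfunctors of the same presheaf.

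For ($\Rightarrow$), assume $S_1 \wedge T \leq S_2$ and $T(G,v)$. Fix $H$ and $h \colon G \to H$ with $S_1(H,h(v))$. Functoriality of $T$ gives $T(H,h(v))$, so $(S_1 \wedge T)(H,h(v))$ holds, and the hypothesis yields $S_2(H,h(v))$. Thus $(S_1 \to_\Omega S_2)(G,v)$.

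For ($\Leftarrow$), assume $T \leq (S_1 \to_\Omega S_2)$ and $(S_1 \wedge T)(G,v)$. Then $(S_1 \to_\Omega S_2)(G,v)$ holds. Instantiating at $H = G$, $h = \mathrm{id}_G$ and using $S_1(G,v)$ gives $S_2(G,v)$.

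Both directions are constructive, so no choice principles enter. The only point needing care, and the one I expect to be the main obstacle, is bookkeeping. The paper mixes the ``subfunctor of $\mathbf{y}(-)$ on $C_{\mathrm{fp}}$'' phrasing with the pointed $(G,v)$ presentation of the underlying-set presheaf, so I will fix one presentation. I will read elements as pairs (object, point) with restriction along homomorphisms, and check that the Heyting implication formula matches the sieve-style definition $(S_1 \to S_2)(f) \Leftrightarrow \forall g,\ f\circ g \in S_1 \Rightarrow f \circ g \in S_2$ given earlier. This requires that composites with $h$ exhaust the relevant arrows, which is immediate in the covariant-in-$H$ indexing used here. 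Once the presentation is fixed, the two-line argument above goes through verbatim. It is the standard proof that the pointwise ``for all future stages'' implication is right adjoint to meet in any presheaf topos.
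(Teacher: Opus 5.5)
Your proof is correct and matches the standard unfolding the paper implicitly relies on (it gives no separate proof): ($\Rightarrow$) uses only that $T$ is closed under homomorphisms, and ($\Leftarrow$) instantiates the pointwise implication at $H = G$, $h = \mathrm{id}_G$. Neither free extensions nor the Geometric Closure Theorem are needed, since $S_1, S_2, T$ are arbitrary subfunctors, and your planned cross-check against the sieve formula for $\Omega(G)$ is unnecessary because the argument uses only the pointwise formula stated after Theorem~\ref{thm:geometric-closure}.
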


\label{sec:four-regimes}
\begin{rem}[Explicit computations: four regimes]

Over the alphabet $L = \{a, b\}$, the Geometric Closure Theorem yields
four qualitatively distinct regimes for diamond-formula implications at
depth~$\leq 2$:

\begin{enumerate}
\item \textbf{Independent:}
  $\langle a \rangle\top \to \langle b \rangle\top = \langle b \rangle\top$.
  Extending by an $a$-successor adds no $b$-transitions; the implication
  reduces to~$\psi$ itself.

\item \textbf{Non-trivial residual:}
  $\langle b \rangle\top \to (\langle a \rangle\top \wedge
  \langle b \rangle\top) = \langle a \rangle\top$.
  The extension auto-satisfies $\langle b \rangle\top$ (the premise);
  the residual strips the freely provided part, leaving $\langle a \rangle\top$.

\item \textbf{Depth-increasing:}
  $\langle a \rangle\top \to \langle a \rangle\langle b \rangle\top
  = \langle a \rangle\langle b \rangle\top$.
  The shallow extension (single $a$-successor) cannot satisfy the deeper
  formula; the implication equals~$\psi$.

\item \textbf{Entailment:}
  $\langle a \rangle\langle b \rangle\top \to \langle a \rangle\top
  = \top$.
  The stronger premise auto-provides the weaker conclusion; the
  implication is $\top$ at every vertex of every LTS.
\end{enumerate}

\noindent
These four regimes exhaust the logical possibilities for diamond-formula
implications at depth~$\leq 2$ and illustrate how the topos computes the
``forced part,'' the residual that remains after stripping away freely
extensible structure.  Each computation is a single evaluation of
$\psi.\mathsf{satisfies}(\mathrm{Ext}(G,v,\varphi), \iota(v))$.
\end{rem}

\subsection{The trace Comparison Lemma}
\label{sec:trace-comparison}

The trace topology on $\mathrm{FinLTS}(L)$ admits a restriction to the
full subcategory $\mathrm{TraceLTS}$ whose objects are linear chain LTS, one
object $\mathrm{traceLTS}(w)$ for each word $w \in L^*$.  The density
condition (every object in $\mathrm{FinLTS}(L)$ is covered by trace
objects) is proved constructively: for every $G : \mathrm{FinLTS}(L)$,
the sieve generated by morphisms factoring through trace objects is
$J_{\mathrm{trace}}$-covering (constructive).
The general Comparison Lemma~\cite[C2.2.3]{johnstone2002sketches}
then yields:
\[
  \Sh(\mathrm{FinLTS}(L),\, J_{\mathrm{trace}})
  \;\simeq\;
  \Sh(\mathrm{TraceLTS},\, J_{\mathrm{trace}}|_{\mathrm{TraceLTS}}).
\]
Root-preserving homomorphisms between trace objects correspond to prefix
embeddings: $\mathrm{traceLTS}(w_1) \to \mathrm{traceLTS}(w_2)$ exists
iff $w_1 \leq_{\mathrm{prefix}} w_2$, and when it exists it is unique.
The restricted topology on $\mathrm{TraceLTS}$ is therefore the canonical
topology on the prefix poset $(L^*, \leq_{\mathrm{prefix}})$, and
sheaves on a poset with the canonical topology are presheaves.
Thus the trace topos is equivalent to
$\mathrm{PSh}(L^*, \leq_{\mathrm{prefix}})$, precisely the
Cattani--Winskel presheaf model for labeled event
structures~\cite{cattani2005profunctors}.

This identification connects 30~years of presheaf models for
concurrency (Joyal--Nielsen--Winskel~\cite{joyal1996bisimulation},
Cattani--Winskel~\cite{cattani2005profunctors},
Eberhart--Hirschowitz--Seiller~\cite{eberhart2017sheaf}) to the
topos-geometric framework of this paper: the abstract sheaf topos
constructed from Grothendieck topologies on $\mathrm{FinLTS}(L)$
restricts, via the Comparison Lemma, to the concrete presheaf category
on a combinatorial poset.
(Formalized: \texttt{comparison\_lemma\_synthesis}; the general
Comparison Lemma, prefix-hom correspondence, and Cattani--Winskel
connection are axiomatized.)

\section{Related Work}
\label{sec:related}

\paragraph{Preservation theorems.}
Van Benthem's theorem~\cite{vanbenthem1976modal} characterizes the
bisimulation-invariant fragment of first-order logic as modal logic.
Otto~\cite{otto2004elementary} gave an elementary proof avoiding compactness.
Rosen~\cite{rosen1997modal} showed the characterization holds over finite
structures.  Rossman~\cite{rossman2008homomorphism} proved the
homomorphism-preservation theorem survives restriction to finite structures.
Celani and Jansana~\cite{celani1999priestley} characterize the
positive-bisimulation-invariant fragment of positive first-order logic as
positive modal logic.  Our Theorem~\ref{thm:gvb} establishes the geometric
(positive existential) case for pointed finite structures.  Recent independent work by Wa{\l}\k{e}ga and
Cuenca~Grau~\cite{walega2026preservation} proves that over finite pointed
Kripke models invariant under bounded unravellings, preservation under
homomorphisms equals definability in existential positive modal
logic, essentially the modal-theoretic substance of the geometric case
under the standard translation.
Their framework proceeds through the standard translation (modal logic
to first-order), while our Theorem~\ref{thm:gvb} works natively in
geometric logic over classifying toposes; the two results are complementary.

\paragraph{Bisimulation and modal logic.}
Hennessy and Milner~\cite{hennessy1985algebraic} introduced HML and proved
the characterization theorem for image-finite systems: two states are
bisimilar iff they satisfy the same HML formulas.
Park~\cite{park1981} and Milner~\cite{milner1989} developed the relational
theory of bisimulation.
Van Glabbeek~\cite{vanglabbeek2001} surveys the linear time--branching time
spectrum for concrete sequential processes.
Bisping~\cite{bisping2025generalized,bisping2023cav} recasts the spectrum as
an energy-game hierarchy over 13~equivalences (replacing completed trace,
completed simulation, and possible-worlds with impossible futures, revivals,
and enabledness); \S\ref{sec:spectrum} lifts this hierarchy to nuclei
on Lindenbaum algebras.

\paragraph{Categorical and coalgebraic approaches.}
Joyal, Nielsen, and Winskel~\cite{joyal1996bisimulation} characterized
bisimulation via open maps in presheaf categories, a different setting from
ours (the presheaf topos is the ambient category, not the classifying topos
of a per-system theory).
Abramsky~\cite{abramsky1991domain} showed that in a domain-theoretic
setting, propositional geometric logic (the observable properties)
characterizes bisimulation: two processes are bisimilar iff they satisfy the
same observable geometric properties.  Our work concerns first-order
geometric logic over per-system theories, a technically different framework,
but the conceptual relationship is precise.  At the propositional level,
geometric logic and bisimulation see the same thing; Abramsky's result
provides the baseline.  At the first-order level, quantifier structure
introduces identification constraints that bisimulation can break:
universal variables in antecedents can be equated in consequents
(Mechanism~1), existential variables can share targets across branches
(Mechanism~2), and variables can be identified with their sources
(Mechanism~3).  Our three-mechanism taxonomy is thus a classification of
exactly how first-order quantification exceeds the propositional case.
Abramsky and Shah~\cite{abramsky2021structure} introduced game comonads,
and Abramsky and Reggio~\cite{abramsky2023arboreal} axiomatized them via
\emph{arboreal categories}, deriving equi-resource preservation
theorems~\cite{abramsky2024preservation}.  Our energy-to-nucleus map
is structurally parallel: they encode resource-bounded equivalences as
comonadic adjunctions on a universal presheaf category; we encode them as
nuclei on per-system Lindenbaum algebras (= Lawvere--Tierney topologies
on classifying toposes).  A deeper difference concerns truncation:
arboreal coKleisli morphisms operate at level~$0$ (full function data),
as do our observation-class topologies for $J_{\mathrm{trace}}$ and
$J_{\mathrm{bisim}}$.  Simulation, genuinely $(-1)$-truncated, requires
the quotient-theory construction
(Remark~\ref{rem:truncation-levels}), an obstruction the arboreal
framework avoids by not distinguishing simulation from bisimulation.
Arboreal categories currently cover $\sim$4 of the 13~named equivalences
\cite{abramsky2023linear}; our framework targets all~13.
Bezhanishvili, de Groot, and Venema~\cite{bezhanishvili2022coalgebraic}
study coalgebraic geometric logic.
Ghilardi and Zawadowski~\cite{ghilardi2002sheaves} connect sheaves, games,
and model completions.
Eberhart, Hirschowitz, and Seiller~\cite{eberhart2017sheaf} provide the closest
precedent for using Grothendieck topologies to encode behavioral
equivalences.  Working in a presheaf category on a category of terms
(interaction trees for the $\pi$-calculus), they construct a Grothendieck
topology whose sheaf condition enforces bisimulation: two terms have
isomorphic sheafifications iff they are bisimilar.  Their construction
directly obtains bisimulation, the finest process equivalence, from a
single topology.  Our approach stratifies the \emph{entire} spectrum:
multiple topologies $J_E$ on a fixed site, one per equivalence, with the
topology lattice reflecting the spectrum ordering
(Theorem~\ref{thm:partial-l30-embedding}).  The technical difference is
that Eberhart--Hirschowitz--Seiller vary the terms (interaction trees of varying
depth) while keeping the topology fixed to bisimulation; we fix the site
($\mathrm{f.p.LTS}_L$) and vary the topology across the full spectrum.

\paragraph{Topos theory.}
Caramello~\cite{caramello2017theories} develops the systematic use of
classifying toposes as ``bridges'' between mathematical theories.
Johnstone~\cite{johnstone2002sketches} provides the foundational reference
for classifying toposes and the Comparison Lemma.
Makkai and Reyes~\cite{makkai1977first} established the connection between
geometric theories and their syntactic categories.

\paragraph{Realizability and the effective topos.}
Hyland~\cite{hyland1982effective} constructed the effective topos, whose
internal logic validates Church's Thesis.
Hyland, Johnstone, and Pitts~\cite{hyland1980tripos} introduced triposes as a
uniform framework encompassing both provability and realizability toposes.
Van Oosten~\cite{vanoosten2008realizability} provides a comprehensive account
of realizability categories and their categorical properties.
Longley~\cite{longley2005notions} identified a fundamental obstruction: even
PCAs computing the same first-order functions yield inequivalent realizability
toposes, the higher-type analogue of our Second Separation.
Bauer~\cite{bauer2006synthetic} developed synthetic computability theory inside
the effective topos, deriving Rice's theorem and the recursion theorem from
axioms (Number Choice, Enumerability, Markov's Principle) without reference to
a specific machine model.
\paragraph{Precedent: Kihara's LT topologies.}
The closest structural parallel is Kihara~\cite{kihara2023lt}, who
connects bilayer imperfect-information games for Turing reducibility
to Lawvere--Tierney topologies on the effective
topos~$\mathrm{Eff}$.  Both constructions use order-reversing maps
from game budgets to closure operators on a subobject classifier:
Kihara sends Turing-degree budgets to LT topologies on
$\Omega_{\mathrm{Eff}}$; we send energy vectors to nuclei on finite
Lindenbaum frames~$L_d$.  A key divergence is algebraic: Kihara
achieves a bijection on an uncountable~$\Omega$ with rich negation
($x \wedge \neg x \neq \bot$ for many~$x$); we achieve an embedding
into a finite, negation-collapsed lattice ($\neg x = \bot$ for all
$x \neq \bot$) where non-Booleanness comes entirely from the Heyting
implication.  Both witness non-Booleanness through Grothendieck
topologies, but via complementary algebraic mechanisms.

\paragraph{Graded monads and the process spectrum.}
Dorsch, Milius, and Schr\"oder~\cite{dorsch2019graded} showed that the full
van~Glabbeek spectrum arises from graded monads: a graded monad
$(T_n)_{n \in \mathbb{N}}$ on~$\mathrm{Set}$ parameterizes look-ahead depth,
and a generic Hennessy--Milner theorem holds for each grading.  Our
Section~\ref{sec:spectrum} embeds this spectrum as a subtopos lattice via
Caramello's duality theorem, extending the coalgebraic perspective to
topos-theoretic invariants.
The energy-to-nucleus map of \S\ref{sec:spectrum} makes the grading explicit:
each energy vector $\bar{e}$ induces a Lawvere--Tierney topology $j_{\bar{e}}$,
turning the graded spectrum into a lattice of subtoposes.

\paragraph{Automata topoi.}
Hora~\cite{hora2024topoi,hora2025topoi} constructs four topoi for regular
languages using Caramello's bridge technique, the closest methodological
parallel.  Structurally, automata have no branching, so
acceptance is inherently $(-1)$-truncated and all four topologies arise
from observation classes with no quotient-theory detour.  Our
transition-system setting, where path multiplicity matters (tree homs vs
tree-hom existence), introduces the truncation-level split that
non-trivially stratifies the spectrum.

\paragraph{Database theory.}
Yannakakis~\cite{yannakakis1981algorithms} introduced acyclic hypergraphs
for conjunctive queries.  The tree-shaped fragment characterizing the
bisimulation-invariant geometric formulas corresponds to acyclic conjunctive
queries over the binary step relation.

\section{Conclusion}
\label{sec:conclusion}

This paper develops the foundations of a unification between topos
theory and process algebra, in which the question ``when should two
computational systems be considered equivalent?'' receives a single
structural answer: two finitely presentable systems are equivalent,
relative to a choice of observations, when their sheafified Yoneda
images become isomorphic in the subtopos determined by that choice.
The van~Glabbeek spectrum, traditionally a hierarchy of
behavioral equivalences, each justified by its own operational or
logical intuition, is revealed to be a finite sub-poset of the
coframe of subtoposes of a classifying topos, and the passage from one
equivalence to another is localization.

The evidence comes from three directions.  The \emph{geometric}
direction constructs explicit Grothendieck topologies
$J_{\mathrm{trace}}$, $J_{\mathrm{sim}}$, and $J_{\mathrm{bisim}}$
on the site of finitely presentable labeled transition systems, with a
strict chain
$J_{\mathrm{bisim}} \subsetneq J_{\mathrm{sim}} \subsetneq
J_{\mathrm{trace}}$ extended to all 13~named equivalences by the
Energy--LT Bridge (Theorem~\ref{thm:energy-lt-bridge}).  The trace
and bisimulation topologies are explicitly defined via
observation classes; the simulation topology is obtained via
Caramello's quotient-theory duality~\cite{caramello2017theories},
and a constructive counterexample
(Theorem~\ref{thm:naive-sim-instability}) proves this indirect route
structurally necessary; a truncation-level mismatch between the
propositional content of simulation and the set-level data carried
by sieves.  The \emph{algebraic} direction closes the 13~named
equivalences under lattice operations to obtain the 30-element
spectrum lattice~$L_{30}$ (notation introduced here for the lattice
closure of the van~Glabbeek spectrum under meets and joins) with its
bi-Heyting structure, including the identity
$S \to F = \mathrm{IF}$, 17~unnamed hybrids, and direct
indecomposability, all proved with zero custom axioms.  The
\emph{logical} direction identifies the boundary between
topos-level and process-level expressiveness: the Geometric
van~Benthem Theorem characterizes diamond-only Hennessy--Milner logic
as the bisimulation-invariant fragment of geometric logic, while the
Geometric Closure Theorem reduces presheaf Heyting implications
to testing at a single free extension, a reduction specific to the
combinatorial structure of the LTS site, where the canonical
one-step extension serves as a generic witness in place of
Kripke--Joyal's universal quantification over all extensions.

The conceptual payoff is that the classifying topos provides a
\emph{uniform} framework for computational symmetries.  Each
behavioral equivalence is a Grothendieck topology; the space of all
such equivalences is a coframe; and the algebraic operations of this
coframe, meets, Heyting implications, co-Heyting
subtractions, produce structure that process algebra cannot access
internally.  The 17~unnamed elements of~$L_{30}$ are the most
concrete manifestation: they are process equivalences discovered
through the coframe structure, invisible to any single energy budget,
corresponding to geometric axioms (determinism, image-finiteness,
mixed constraints) that cross-cut the classical spectrum.  The
embedding $\nu \colon \mathcal{V} \hookrightarrow
\mathrm{Sub}(\mathrm{Set}[\mathbb{T}_{\mathrm{LTS}}])^{\mathrm{op}}$
is an order-embedding of a 13-element poset into an infinite coframe;
whether it extends to a \emph{sublattice} embedding of~$L_{30}$, with
meets and joins agreeing with the ambient coframe operations, remains
the central structural conjecture.

This framework is not specific to the van~Glabbeek spectrum.  Any
family of geometric theories whose models are computational systems,
and any family of quotient axioms corresponding to observational
abstractions, produces a lattice of subtoposes in which the same
questions can be asked: what are the lattice operations?  What unnamed
equivalences arise?  Is the embedding a sublattice?  The present paper
answers these questions for labeled transition systems and the
energy-game hierarchy; the techniques, Lindenbaum algebras, nucleus
maps, observation-class topologies, Caramello's
duality~\cite{caramello2017theories}, are general.
Extensions to other computational models
(automata, event structures, Petri nets) whose equivalences are
expressible as geometric quotients are natural next steps; models
involving quantitative reasoning (probabilistic or quantum systems)
would require enrichment of the logical framework beyond geometric
sequents.

The simulation topology is established abstractly via Caramello's
duality; the recent generation formulas of
Caramello--Lafforgue~\cite{caramello2025generation}
offer a path to explicit covering sieves, which would eliminate
the axioms for~$J_{\mathrm{sim}}$.  Whether the full 30-element
closure embeds as a \emph{sublattice} of the subtopos coframe, meets
and joins agreeing, remains the central structural question; a
positive answer would give~$L_{30}$ a coordinatization-free
characterization, resolving the dependence on Bisping's energy-game
parameterization noted in Remark~\ref{rem:coord-dependence}.  More
broadly, whether every Grothendieck topology between
$J_{\mathrm{bisim}}$ and $J_{\mathrm{trace}}$ arises from an
observation class would yield a bijective correspondence between
behavioral equivalences and subtoposes, a structural
embedding of the two traditions this paper connects.

\paragraph{Formalization.}
A Lean~4 formalization using Mathlib~\cite{mathlib2020} accompanies
this paper, comprising 118~files in the \texttt{RuleSys} library.
The headline results, HML bisimulation invariance
(Theorem~\ref{thm:hml-invariant}), the Geometric Closure
Theorem~\ref{thm:geometric-closure}, the $L_{30}$ bi-Heyting
structure
(Theorems~\ref{thm:heyting-implication}--\ref{thm:coheyting-decomposition}),
both Grothendieck topologies $J_{\mathrm{trace}}$ and
$J_{\mathrm{bisim}}$ (Theorem~\ref{thm:spectrum-bracket}), the
naive simulation instability (Theorem~\ref{thm:naive-sim-instability}),
characteristic formulas (Lemma~\ref{lem:char-formula}), and the
depth~0--2 van~Benthem instances
(Theorems~\ref{thm:depth0}--\ref{thm:depth2}), are proved with
zero custom axioms.
The axiomatized components mark the interface with published
results, Caramello's quotient-theory
duality~\cite{caramello2017theories}, van~Glabbeek's logical
characterization theorems~\cite{vanglabbeek2001}, and Bisping's
energy-game framework~\cite{bisping2023cav}, whose Lean
formalization awaits the development of Mathlib's topos-theoretic
infrastructure.  The full codebase is available at \texttt{K-NANOG/spectrum-topos}.

\end{document}